\newtheorem{theorem}{Theorem}
\newtheorem{corollary}[theorem]{Corollary}
\newtheorem{lemma}[theorem]{Lemma}
\newtheorem{proposition}[theorem]{Proposition}
\theoremstyle{definition}
\newtheorem{definition}{Definition}
\newtheorem{remark}{Remark}
\newcommand{\1}{\mathbb{1}}
\newcommand{\rk}{\mathrm{rank}}
\newcommand{\supp}{\mathrm{supp}}
\newcommand{\tr}{\mathrm{Tr}}
\newcommand{\abb}[1]{\textnormal{#1}} 
\newcommand{\ch}[1]{\mathcal{#1}} 
\newcommand{\g}[1]{\mathbf{#1}} 
\newcommand{\h}[1]{\hat{#1}} 
\newcommand{\pz}[1]{\widebar{#1}} 
\newcommand{\rpl}[1]{{\tilde{#1}}} 
\newcommand{\s}[1]{\mathscr{#1}} 
\newcommand{\spa}[1]{\mathds{#1}} 
\newcommand{\sw}[1]{\widetilde{#1}} 
\newcommand{\bra}[1]{\langle#1\rvert}
\newcommand{\ket}[1]{\lvert#1\rangle}
\newcommand{\op}[2]{\ket{#1}\!\bra{#2}}
\newcommand{\mclose}{\mathclose{}}
\newcommand{\fleft}{\mathopen{}\left}
\newcommand{\fright}{\aftergroup\mclose\right}
\definecolor{darkblue}{rgb}{0,0,0.6}
\definecolor{darkgreen}{rgb}{0,0.45,0.1}
\definecolor{darkred}{rgb}{0.5,0,0}
\setlist[itemize]{noitemsep,topsep=0pt}
\begin{document}


\title{
Fundamental Limits for Thermodynamic Control with Quantum Feedback
}

\author{Kaiyuan Ji}
\email{kj264@cornell.edu}
\affiliation{School of Electrical and Computer Engineering, Cornell University, Ithaca, New York 14850, USA}

\author{Gilad Gour}
\email{giladgour@technion.ac.il}
\affiliation{Faculty of Mathematics, Technion - Israel Institute of Technology, Haifa 3200003, Israel}

\author{Mark M. Wilde}
\email{wilde@cornell.edu}
\affiliation{School of Electrical and Computer Engineering, Cornell University, Ithaca, New York 14850, USA}

\date{\today}


\begin{abstract}
The study of feedback control inspired by Maxwell's demon is central to the understanding of the relationship between thermodynamics and information.  In this paper, we establish fundamental lower limits on the work costs of system conversion with quantum feedback, where quantum side information acquired in advance can be fed back to the system coherently by a controller.  From two basic operational principles that every physically admissible feedback-control scheme should satisfy, we derive the tightest possible bounds on the single-shot work of formation and extractable work of an arbitrary quantum system given arbitrary quantum side information held by the controller.  These bounds are expressed in terms of information measures simultaneously generalizing conditional entropies, relative entropies, and mutual informations.  In the asymptotic limit, we derive a generalized second law of thermodynamics with quantum feedback, featuring a conditional Helmholtz free energy, and we further show that it does not contradict the traditional second law.  Our findings also provide precise thermodynamic meanings for the negativity of single-shot conditional entropies and resolve an open problem in the axiomatic reconstruction of such conditional entropies.
\end{abstract}



\maketitle
\let\oldaddcontentsline\addcontentsline
\renewcommand{\addcontentsline}[3]{}


\textbf{\textit{Introduction}}---The second law of thermodynamics was once challenged by Maxwell's demon~\cite{maxwell1911QuoteUndatedLetter}, which is an intelligent agent in possession of side information about a system and therefore capable of extracting more work from the system than that allowed by the second law.  It was later clarified that the demon's scheme does not constitute actual violation of the second law~\cite{bennett1982ThermodynamicsComputationReview}, as the excessive work extracted by the demon would be offset by the work expended on resetting the demon's mental memory as far as a complete cycle is concerned~\cite{landauer1961IrreversibilityHeatGeneration,lloyd1989UseMutualInformation,bennett2003NotesLandauersPrinciple,maruyama2009ColloquiumPhysicsMaxwells,deoliveirajunior2025FriendlyGuideExorcising}.

Maxwell's demon inspired the study of \emph{feedback control} in thermodynamics~\cite{sagawa2008SecondLawThermodynamics,sagawa2009MinimalEnergyCost,cao2009ThermodynamicsFeedbackControlled,jacobs2009SecondLawThermodynamics,sagawa2012FluctuationTheoremInformation,parrondo2015ThermodynamicsInformation,narasimhachar2017ResourceTheoryConditioned,narasimhachar2019QuantifyingMemoryCapacity,morris2019AssistedWorkDistillation,prech2024QuantumFluctuationTheorem,minagawa2025UniversalValiditySecond}, which has become an active and fruitful area of research and contributed tremendous insights into the interplay between thermodynamics and information.  In these studies, a complete cycle involves a controller, who plays the role of Maxwell's demon, performing the following steps: (i) collecting side information by measuring the system and storing the measurement outcome in a memory, (ii) controlling the system with an operation conditioned on the measurement outcome, and (iii) resetting the memory to its initial state.  In step~(ii), known as the feedback-control phase, the transformation of the system is no longer governed by the traditional second law, owing to the feedback of side information to the system.  Instead, it is subject to a seminal result known as the \emph{second law of thermodynamics with (classical) feedback}~\cite{parrondo2015ThermodynamicsInformation}:
\begin{align}
	\left\langle W\right\rangle&\geq\Delta F_\abb{NE}+\beta^{-1}\Delta I, \label{eq:classical}
\end{align}
where $\langle W\rangle$ is the average amount of work consumed by (or alternatively, $-\langle W\rangle$ the average work extracted from) the system during feedback control, $\Delta F_\abb{NE}$ the change of the system's nonequilibrium Helmholtz free energy, $\beta$ the constant inverse temperature, and $\Delta I$ the change of the classical mutual information between the system and the controller's memory.  Compared with the traditional second law, any potential thermodynamic advantage brought by the feedback, as captured by the last term in Eq.~\eqref{eq:classical}, has been shown to be paid back in the measurement phase [step~(i)] and the memory-reset phase [step~(iii)], ensuring consistency in a complete cycle~\cite{sagawa2009MinimalEnergyCost}.

The significance of Eq.~\eqref{eq:classical} is substantial both in theory and in practice.  Conceptually, it further clarifies the role of information in thermodynamics by quantifying the extent to which thermodynamics needs to be revised when side information can be exploited by feedback control.  For applications, it separates the work expended on the system control from that expended on the measurement or the memory reset, thus providing no-go limits instructive for experimenters who wish to solely focus on the former without bookkeeping the latter.

Equation~\eqref{eq:classical} has also been partially generalized to situations where the system and its control are allowed to be quantum~\cite{sagawa2008SecondLawThermodynamics,jacobs2009SecondLawThermodynamics,minagawa2025UniversalValiditySecond}.  However, previous studies of thermodynamic feedback control typically assume that the controller's side information is collected by measurements.  This poses a severe limitation: even if a weak or indirect measurement may not destroy the coherence of the system~\cite{sagawa2008SecondLawThermodynamics,jacobs2009SecondLawThermodynamics,minagawa2025UniversalValiditySecond}, the side information it reveals and being fed back afterwards is necessarily classical.  Consequently, these studies only concern controllers providing \emph{classical feedback}, and their conclusions do not apply to a fully quantum controller who is capable of acquiring, processing, and feeding back quantum side information coherently.

In this paper, we develop a fully general resource-theoretic framework for thermodynamic control of quantum systems with \emph{quantum feedback} and establish fundamental no-go limits in terms of lower bounds on the work costs of system conversion.  Our bounds apply to arbitrary quantum systems in the presence of arbitrary quantum side information and subject to arbitrary physically admissible feedback-control schemes.  Moreover, they limit both the deterministic (microscopic) work and the average (macroscopic) work.  Remarkably, our most important bounds are the \emph{tightest possible axiomatically} even in the \emph{single-shot} regime, in the sense that they cannot be further improved without additional operational assumptions being made.

The notion of quantum feedback was first proposed in quantum control theory by Lloyd~\cite{lloyd2000CoherentQuantumFeedback} and has been demonstrated experimentally~\cite{serafini2012FeedbackControlQuantum}.  Unlike controllers providing classical feedback~\cite{wiseman1993QuantumTheoryOptical,nielsen1998InformationtheoreticApproachQuantum}, a controller providing quantum feedback is capable of preserving the coherence of quantum side information and feeding this information back to the system coherently~\cite{lloyd2000CoherentQuantumFeedback}.  Here we extend this notion to quantum thermodynamics.  The essential quality of the controller considered here, namely their ability to feed back quantum side information coherently, has not been envisioned in previous semiquantum generalizations of Maxwell's demon~\cite{lloyd1997QuantummechanicalMaxwellsDemon,kim2011QuantumSzilardEngine,cottet2017ObservingQuantumMaxwell} or thermodynamic feedback-control schemes~\cite{sagawa2008SecondLawThermodynamics,sagawa2009MinimalEnergyCost,jacobs2009SecondLawThermodynamics,narasimhachar2017ResourceTheoryConditioned,narasimhachar2019QuantifyingMemoryCapacity,morris2019AssistedWorkDistillation,prech2024QuantumFluctuationTheorem,minagawa2025UniversalValiditySecond}.

There are major reasons why allowing the feedback to be quantum is warranted and crucial.  First, compared with schemes with classical feedback, schemes with quantum feedback are significantly more general.  As recognized in Ref.~\cite{lloyd2000CoherentQuantumFeedback}, some tasks in quantum control theory are straightforward to implement with quantum feedback but impossible to implement with classical feedback, such as state exchange and entanglement transfer.  This is equally true in thermodynamics, as we will demonstrate.  Due to this drastic separation, only bounds established assuming quantum feedback qualify as fundamental no-go limits with universal applicability to all physically admissible feedback-control schemes.

Second, it is well known in quantum information theory that quantum side information has fundamentally different features from classical side information, and this difference can only be unlocked with quantum feedback.  More concretely, such a difference is encapsulated in the fact that conditional entropy can be negative in the presence of quantum side information~\cite{cerf1997NegativeEntropyInformation}, an impossibility when the side information is classical.  Operational meanings of negative conditional entropy have been demonstrated in various contexts, including communication~\cite{horodecki2005PartialQuantumInformation,horodecki2006QuantumStateMerging}, uncertainty relations~\cite{berta2010UncertaintyPrinciplePresence,tomamichel2011UncertaintyRelationSmooth,coles2012UncertaintyRelationsSimple}, and thermodynamics~\cite{delrio2011ThermodynamicMeaningNegative}.  Crucially, the contrast between negative and positive conditional entropy, and thus that between quantum and classical side information, is not only quantitative but also \emph{qualitative}, as it may signify the distinction between a certain resource being generated as opposed to being consumed in a protocol.  Similar contrasts will be demonstrated in our findings as well, distinguishing quantum feedback from its classical counterpart.  In addition, recent work has shown that negative conditional entropy for certain entangled states is ``inevitable'' on an \emph{axiomatic} level~\cite{gour2024InevitabilityKnowingLess}.  Our results strengthen this perspective also and resolve an open problem raised in Ref.~\cite{gour2024InevitabilityKnowingLess}, contributing further technical insights into the fundamental difference between quantum and classical side information.

\textbf{\textit{Framework}}---Our operational framework is cast as a resource theory of thermal nonequilibrium in the presence of quantum side information, or briefly, \emph{conditional athermality}.  A classical specialization of our resource theory was considered in Ref.~\cite{narasimhachar2017ResourceTheoryConditioned}, whereas here our setup is fully quantum.

The formalism of resource theories~\cite{horodecki2013QuantumnessContextResource,chitambar2019QuantumResourceTheories,gour2025QuantumResourceTheories} has been instrumental in the study of quantum thermodynamics~\cite{janzing2000ThermodynamicCostReliability,horodecki2003ReversibleTransformationsPure,horodecki2013FundamentalLimitationsQuantum,brandao2013ResourceTheoryQuantum,gour2015ResourceTheoryInformational,lostaglio2019IntroductoryReviewResource}, especially on the microscopic scale~\cite{dahlsten2011InadequacyNeumannEntropy,aberg2013TrulyWorklikeWork,horodecki2013FundamentalLimitationsQuantum,brandao2015SecondLawsQuantum,gour2015ResourceTheoryInformational,faist2015MinimalWorkCost,lostaglio2015StochasticIndependenceResource,alhambra2016FluctuatingStatesWhat,faist2018FundamentalWorkCost,muller2018CorrelatingThermalMachines,lipka-bartosik2021AllStatesAre,shiraishi2021QuantumThermodynamicsCorrelatedcatalytic,strasberg2021FirstSecondLaw,gour2022RoleQuantumCoherence}, where the number of copies of a system is not assumed to be arbitrarily large and finite-size effects such as fluctuation of work become nonnegligible~\cite{dahlsten2011InadequacyNeumannEntropy,aberg2013TrulyWorklikeWork,horodecki2013FundamentalLimitationsQuantum}.  The most general approach is to consider a single copy of a system, known in information theory as the single-shot regime~\cite{renner2005SecurityQuantumKey,tomamichel2012FrameworkNonasymptoticQuantum,khatri2024PrinciplesQuantumCommunication}.  Macroscopic laws of thermodynamics can then be deduced in the asymptotic limit of many independent copies.  Here, our resource theory provides the precise machinery necessary for the single-shot analysis of thermodynamic feedback control, which, to the best of our knowledge, has been lacking from previous studies~\cite{sagawa2008SecondLawThermodynamics,sagawa2009MinimalEnergyCost,cao2009ThermodynamicsFeedbackControlled,jacobs2009SecondLawThermodynamics,sagawa2012FluctuationTheoremInformation,parrondo2015ThermodynamicsInformation,narasimhachar2019QuantifyingMemoryCapacity,morris2019AssistedWorkDistillation,prech2024QuantumFluctuationTheorem,minagawa2025UniversalValiditySecond} except in a fully classical setup~\cite{narasimhachar2017ResourceTheoryConditioned}.

We define the free operations of our resource theory based on two operational principles.  First, representing a feedback-control scheme, every free operation is naturally composed of local operations on the system and on the controller's memory, combined with one-way communication from the memory to the system.  Denoting the initial (resp.\ final) system by $S$ (resp.\ $S'$) and the initial (resp.\ final) memory by $M$ (resp.\ $M'$), a free operation is thus represented by a quantum channel $\ch{N}_{SM\to S'M'}$ with the following structural decomposition:
\begin{align}
	\ch{N}_{SM\to S'M'}&=\ch{F}_{SL\to S'}\circ\ch{E}_{M\to LM'}, \label{eq:feedback}
\end{align}
where $\ch{E}_{M\to LM'}$ and $\ch{F}_{SL\to S'}$ are quantum channels with $L$ a quantum memory.  Since we are only concerned with the feedback-control phase of a complete cycle, the free operation need not account for the acquisition of side information but only accounts for the exploitation thereof, and thus no communication from the system to the memory is needed.  Notably, since the set of free operations is supposed to encompass all physically admissible feedback-control schemes, including those with quantum feedback, the communication from the memory to the system via $L$ is in general quantum, contrasting with previous works~\cite{sagawa2008SecondLawThermodynamics,sagawa2009MinimalEnergyCost,cao2009ThermodynamicsFeedbackControlled,jacobs2009SecondLawThermodynamics,sagawa2012FluctuationTheoremInformation,parrondo2015ThermodynamicsInformation,narasimhachar2017ResourceTheoryConditioned,narasimhachar2019QuantifyingMemoryCapacity,morris2019AssistedWorkDistillation,prech2024QuantumFluctuationTheorem,minagawa2025UniversalValiditySecond}.

Second, we require that the system control, represented by $\ch{F}_{SL\to S'}$, respect Kelvin's principle.  Specifically, if the feedback provided through $L$ contains no side information about the system, then the control should not be able to divert the system out of pre-existing thermal equilibrium.  Denoting the Gibbs state of $S$ (resp.\ $S'$) by $\gamma_S\equiv e^{-\beta\h{H}_S}/\tr[e^{-\beta\h{H}_S}]$ (resp.\ $\gamma'_{S'}$) with $\h{H}_S$ its Hamiltonian, this means that
\begin{align}
	\ch{F}_{SL\to S'}\fleft[\gamma_S\otimes\left(\cdot\right)_L\fright]&=\gamma'_{S'}\otimes\tr_L\fleft[\cdot\fright]. \label{eq:kelvin}
\end{align}
That is, the evolution of the system is described by a Gibbs-preserving map~\cite{lostaglio2015DescriptionQuantumCoherence,faist2015GibbspreservingMapsOutperform} if the content of the feedback is independent of the system.  This guarantees that, in any case, quantum feedback can only act as a source of information and not a source of work for system control, as is the case for classical feedback~\cite{sagawa2008SecondLawThermodynamics,sagawa2009MinimalEnergyCost,cao2009ThermodynamicsFeedbackControlled,jacobs2009SecondLawThermodynamics,sagawa2012FluctuationTheoremInformation,parrondo2015ThermodynamicsInformation,narasimhachar2017ResourceTheoryConditioned,narasimhachar2019QuantifyingMemoryCapacity,morris2019AssistedWorkDistillation,prech2024QuantumFluctuationTheorem,minagawa2025UniversalValiditySecond}.

We call a channel $\ch{N}_{SM\to S'M'}$ satisfying the above two operational principles [Eqs.~\eqref{eq:feedback} and \eqref{eq:kelvin}] a \emph{quantum-feedback-assisted Gibbs-preserving operation (QFGO)}, and we deem it to be free in our resource theory.  These two principles are necessarily upheld by all physically admissible feedback-control schemes without external work supply, whatever such schemes may be.  QFGOs are thus a conceptual relaxation and an axiomatic approximation of such schemes.  The benefit of our choice of QFGOs as free operations is thus evident: due to their maximality under these principles, the minimum amount of work needed to carry out a system conversion given that QFGOs are free provides a lower bound on the actual work cost of the conversion in \emph{any} physical model, thus serving as a fundamental, model-independent no-go limit.

To begin with, we derive necessary and sufficient conditions for a channel to be a QFGO (see Supplemental Material~\footnote{See Supplemental Material
for proofs, auxiliary lemmas, and additional discussions.} for a proof).  This enables us to identify a QFGO without referring to its structural decomposition.

\begin{lemma}
\label{lem:operation}
For a channel $\ch{N}_{SM\to S'M'}$, with $\gamma_S$ (resp.\ $\gamma'_{S'}$) the Gibbs state of $S$ (resp.\ $S'$), the following statements are equivalent.
\begin{itemize}
	\item[(i)] $\ch{N}_{SM\to S'M'}$ is a QFGO.
	\item[(ii)] $\ch{N}_{SM\to S'M'}$ is thermalization covariant on the system:
	\begin{align}
		\ch{N}_{SM\to S'M'}\circ\ch{R}_{S\to S}^\gamma&=\ch{R}_{S'\to S'}^{\gamma'}\circ\ch{N}_{SM\to S'M'},
	\end{align}
	where $\ch{R}_{S\to S}^\gamma[\cdot]\equiv\gamma_S\otimes\tr_S[\cdot]$ (resp.\ $\ch{R}_{S'\to S'}^{\gamma'}$) is the channel thermalizing $S$ (resp.\ $S'$).
	\item[(iii)] $\ch{N}_{SM\to S'M'}$ is nonsignaling from the system to the memory and conditionally Gibbs preserving:
	\begin{align}
		\tr_{S'}\circ\ch{N}_{SM\to S'M'}&=\tr_S\otimes\ch{E}_{M\to M'}, \\
		\ch{N}_{SM\to S'M'}\fleft[\gamma_S\otimes\left(\cdot\right)_M\fright]&=\gamma'_{S'}\otimes\ch{E}_{M\to M'}\fleft[\cdot\fright],
	\end{align}
	where $\ch{E}_{M\to M'}$ is a quantum channel.
\end{itemize}
\end{lemma}

Particular noteworthy is the equivalence between statements~(i) and (ii) of Lemma~\ref{lem:operation}, which says that QFGOs are precisely those that act covariantly with respect to a resource-destroying map~\cite{liu2017ResourceDestroyingMaps}, represented by channels that thermalize the system.  Statement~(iii) of Lemma~\ref{lem:operation} generalizes the concept of quantum conditional majorization in Ref.~\cite{gour2024InevitabilityKnowingLess}.  Accordingly, convertibility under QFGOs can be understood as a notion of quantum conditional thermo-majorization.

To quantify the work consumption of a system conversion under a QFGO, we let the system exchange work with a battery, which is always in a pure state with a trivial Hamiltonian~\cite{faist2018FundamentalWorkCost}.  With the QFGO controlling both the battery and the system while receiving quantum feedback from the memory, the amount of work consumed by the system then equals the battery's decrease in charge, $\beta^{-1}(\ln\lvert S_0\rvert-\ln\lvert S_1\rvert)$ according to Landauer's principle~\cite{landauer1961IrreversibilityHeatGeneration}, where $S_0$ (resp.\ $S_1$) is the initial (resp.\ final) battery; see Fig.~\ref{fig}.  Alternatively, the work extracted from the system equals the battery's increase in charge.

\begin{figure}[t]
\includegraphics[scale=0.25]{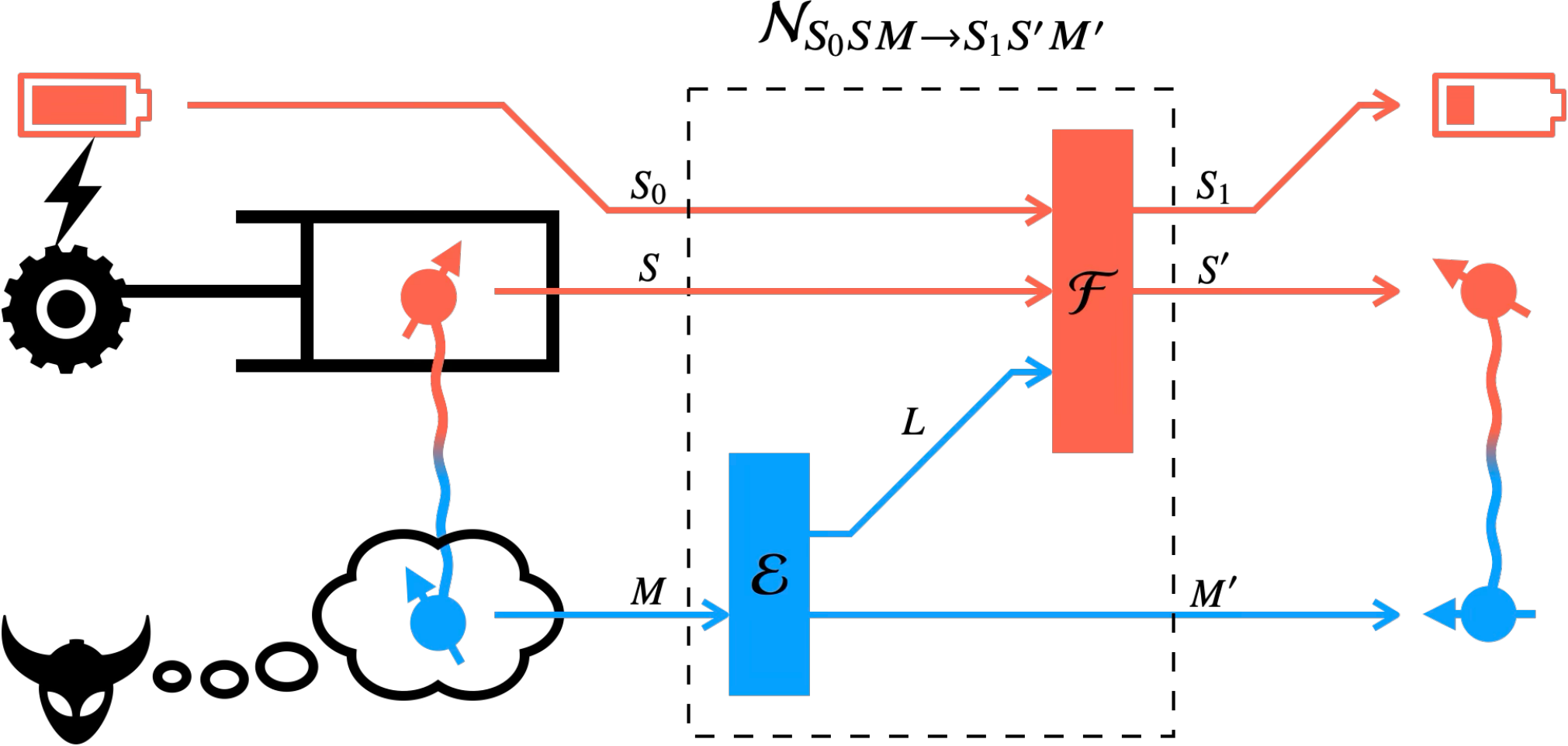}
\caption{System conversion under a QFGO $\ch{N}_{S_0SM\to S_1S'M'}$ with work supply from a battery.  The QFGO controls both the battery and the system (red) while receiving quantum feedback from the controller's memory (blue).  The amount of work consumed by the system equals the battery's decrease in charge, $\beta^{-1}(\ln\lvert S_0\rvert-\ln\lvert S_1\rvert)$.}
\label{fig}
\end{figure}

\textbf{\textit{System formation with quantum feedback}}---Without feedback, system formation is a well-studied task in quantum thermodynamics~\cite{horodecki2013FundamentalLimitationsQuantum,brandao2013ResourceTheoryQuantum,gour2022RoleQuantumCoherence}.  We now characterize a generalized version thereof with quantum feedback.  Here the controller aims to convert a system $S$ in thermal equilibrium to a given target state $\rho_{SM}$ coupled with a memory $M$ while expending as little work as possible.  This task is of particular relevance in the context of writing data stored in a memory into a thermodynamic device~\cite{narasimhachar2019QuantifyingMemoryCapacity}.  When performing this task with classical feedback, as considered in Ref.~\cite{narasimhachar2019QuantifyingMemoryCapacity}, only classical data can be written into the device despite the device being quantum, and only classical correlations can be created between the device and the memory even if an infinite amount of work is supplied.  In contrast, with quantum feedback, the controller is able to write quantum data into the device, which can consequently become entangled with the memory.  For additional clarity, we denote the target state of the task by the pair $(\rho_{SM},\gamma_S)$ with $\gamma_S$ the Gibbs state of $S$.  We also allow for imperfection in the conversion, measured by an error parameter $\varepsilon$ bounding the trace distance between the actual final state and the target state.  We define the \emph{single-shot work of formation under QFGOs}, $W_\abb{form}^\varepsilon(\rho_{SM},\gamma_S)$, as the battery's decrease in charge, $\beta^{-1}(\ln\lvert S_0\rvert-\ln\lvert S_1\rvert)$, minimized over every QFGO $\ch{N}_{S_0S\to S_1SM}$ such that $\ch{N}_{S_0S\to S_1SM}[\op{0}{0}_{S_0}\otimes\gamma_S]\approx_\varepsilon\op{0}{0}_{S_1}\otimes\rho_{SM}$.  Our first main result is a precise characterization of this work of formation and its asymptotic limit in terms of generalized mutual informations (GMIs) (see Supplemental Material~\cite{Note1} for a proof).

\begin{theorem}
\label{thm:formation}
For a target state $(\rho_{SM},\gamma_S)$ and an error parameter $\varepsilon\in[0,1]$, the single-shot work of formation under QFGOs and its asymptotic limit are given by
\begin{align}
	W_\abb{form}^\varepsilon\fleft(\rho_{SM},\gamma_S\fright)&=\beta^{-1}I_{\max}^{\uparrow,\varepsilon}\fleft(\rho_{SM}\middle\|\gamma_S\fright), \\
	\lim_{n\to\infty}\tfrac{1}{n}W_\abb{form}^\varepsilon\fleft(\rho_{SM}^{\otimes n},\gamma_S^{\otimes n}\fright)&=\beta^{-1}I\fleft(\rho_{SM}\middle\|\gamma_S\fright)\;\;\forall\varepsilon\in(0,1). \label{eq:formation}
\end{align}
Here we define the smoothed max-GMI as
\begin{align}
	&I_{\max}^{\uparrow,\varepsilon}\fleft(\rho_{SM}\middle\|\gamma_S\fright)\coloneq\inf\left\{\ln t\colon\tau_{SM}\leq t\gamma_S\otimes\sigma_M,\right. \notag\\
	&\qquad\qquad\left.\tau_M\leq\sigma_M,\;\tau_{SM}\approx_\varepsilon\rho_{SM}\right\}, \label{eq:smoothed-max}
\end{align}
where the infimization is over every real number $t$, subnormalized state $\tau_{SM}$, and state $\sigma_M$, and the proximity is measured in the generalized trace distance~\cite{tomamichel2010DualitySmoothMin}.  The Umegaki GMI~\cite{hayashi2016CorrelationDetectionOperational,morris2019AssistedWorkDistillation} is defined as $I(\rho_{SM}\|\gamma_S)\coloneq D(\rho_{SM}\|\gamma_S\otimes\rho_M)=\inf_{\sigma_M}D(\rho_{SM}\|\gamma_S\otimes\sigma_M)=D(\rho_S\|\gamma_S)+I(S;M)_\rho$ with $D$ the Umegaki relative entropy and $I$ the quantum mutual information.
\end{theorem}

In Theorem~\ref{thm:formation}, our characterization under QFGOs being precise means that the implied lower bounds under any physically admissible feedback-control schemes cannot be further tightened based on the current operational principles.

In the $\varepsilon=0$ case of Eq.~\eqref{eq:smoothed-max}, the max-GMI can be expressed as $I_{\max}^\uparrow(\rho_{SM}\|\gamma_S)=D_{\max}(\rho_{SM}\|\gamma_S\otimes\rho_M)$, which has been considered in Refs.~\cite{hayashi2016CorrelationDetectionOperational,burri2025DoublyMinimizedSandwiched}, with $D_{\max}$ the max-relative entropy~\cite{datta2009MinMaxrelativeEntropies}.  The $\varepsilon>0$ case, however, uses a novel smoothing technique that may be of independent interest, and we prove a corresponding asymptotic equipartition property to establish Eq.~\eqref{eq:formation}; see Appendix~A.

\textbf{\textit{Work extraction with quantum feedback}}---The next task we characterize is work extraction with quantum feedback, which has an opposite goal to system formation.  Here the controller aims to convert a system $S$ in a given initial state $(\rho_{SM},\gamma_S)$ coupled with a memory $M$ into as much work as possible.  This generalizes work extraction without feedback~\cite{aberg2013TrulyWorklikeWork,horodecki2013FundamentalLimitationsQuantum,brandao2013ResourceTheoryQuantum,gour2022RoleQuantumCoherence}.  The task of extracting work from a system assisted by quantum side information has been considered in Ref.~\cite{morris2019AssistedWorkDistillation}, but there it is assumed that such side information can only be fed back classically.  Here, we consider a similar scenario but allow for quantum feedback, and we observe that the extractable work under QFGOs can be strictly larger than that when the feedback is classical, with the difference given by quantum discord~\cite{ollivier2001QuantumDiscordMeasure,zurek2003QuantumDiscordMaxwells} in the asymptotic regime~\cite{manzano2018OptimalWorkExtraction}.  Allowing for imperfection parametrized by $\varepsilon$, we define the \emph{single-shot extractable work under QFGOs}, $W_\abb{extr}^\varepsilon(\rho_{SM},\gamma_S)$, as the battery's increase in charge, $\beta^{-1}(\ln\lvert S_1\rvert-\ln\lvert S_0\rvert)$, maximized over every QFGO $\ch{N}_{S_0SM\to S_1S}$ such that $\ch{N}_{S_0SM\to S_1S}[\op{0}{0}_{S_0}\otimes\rho_{SM}]\approx_\varepsilon\op{0}{0}_{S_1}\otimes\gamma_S$.  Our second main result is a precise characterization of this extractable work and its asymptotic limit in terms of GMIs (see Supplemental Material~\cite{Note1} for a proof).

\begin{theorem}
\label{thm:extraction}
For an initial state $(\rho_{SM},\gamma_S)$ and an error parameter $\varepsilon\in[0,1]$, the single-shot extractable work under QFGOs and its asymptotic limit are given by
\begin{align}
	W_\abb{extr}^\varepsilon\fleft(\rho_{SM},\gamma_S\fright)&=\beta^{-1}I_{\min}^{\downarrow,\varepsilon}\fleft(\rho_{SM}\middle\|\gamma_S\fright), \\
	\lim_{n\to\infty}\tfrac{1}{n}W_\abb{extr}^\varepsilon\fleft(\rho_{SM}^{\otimes n},\gamma_S^{\otimes n}\fright)&=\beta^{-1}I\fleft(\rho_{SM}\middle\|\gamma_S\fright)\;\;\forall\varepsilon\in(0,1).
\end{align}
Here we define the smoothed min-GMI as
\begin{align}
	&I_{\min}^{\downarrow,\varepsilon}\fleft(\rho_{SM}\middle\|\gamma_S\fright)\coloneq\sup\left\{-\ln\left\lVert\tr_S\fleft[\Lambda_{SM}\gamma_S\fright]\right\rVert_\infty\colon\right. \notag\\
	&\qquad\qquad\left.\tr\fleft[\Lambda_{SM}\rho_{SM}\fright]\geq1-\varepsilon,\;0\leq\Lambda_{SM}\leq\1_{SM}\right\}, \label{eq:smoothed-min}
\end{align}
with the supremization over every operator $\Lambda_{SM}$.
\end{theorem}

The smoothed min-GMI in Eq.~\eqref{eq:smoothed-min} can also be expressed as $I_{\min}^{\downarrow,\varepsilon}(\rho_{SM}\|\gamma_S)=\inf_{\sigma_M}D_{\min}^\varepsilon(\rho_{SM}\|\gamma_S\otimes\sigma_M)$ with $D_{\min}^\varepsilon$ the hypothesis-testing relative entropy~\cite{buscemi2010QuantumCapacityChannels,wang2012OneshotClassicalquantumCapacity}.

Analogously to how system formation is related to data writing, work extraction is closely related to data erasure.  Erasing a system $S$ with a trivial Hamiltonian corresponds to converting $S$ from its initial state to a pure state $\op{0}{0}_S$.  By definition, the work cost of this equals $\beta^{-1}\ln\lvert S\rvert$ subtracting the extractable work of $S$.  It is then immediate that, for an initial state $\rho_{SM}$ coupled with a memory $M$, the single-shot work cost of erasing $S$ under QFGOs can be precisely expressed in terms of a smoothed conditional max-entropy, $\beta^{-1}H_{\max}^{\uparrow,\varepsilon}(S|M)_\rho\coloneq-\beta^{-1}I_{\min}^{\downarrow,\varepsilon}(\rho_{SM}\|\1_S)$~\footnote{Here $H_{\max}^{\uparrow,\varepsilon}(S|M)_\rho$ is the (operator-) smoothed (optimized) (Petz-type) conditional max-entropy, different from quantities under similar names in the literature~\cite{renner2005SecurityQuantumKey,konig2009OperationalMeaningMin,tomamichel2009FullyQuantumAsymptotic,tomamichel2010DualitySmoothMin,tomamichel2011LeftoverHashingQuantum}}.  This can be viewed as a single-shot generalization of Landauer's principle~\cite{landauer1961IrreversibilityHeatGeneration,esposito2011SecondLawLandauer,reeb2014ImprovedLandauerPrinciple} with quantum feedback.  The potential negativity of $H_{\max}^{\uparrow,\varepsilon}(S|M)_\rho$ thus acquires a thermodynamic meaning, signifying cooling rather than heating during the erasure.  The conditional von Neumann entropy $H(S|M)_\rho=-I(\rho_{SM}\|\1_S)$ is recovered in the asymptotic limit, echoing the main finding of Ref.~\cite{delrio2011ThermodynamicMeaningNegative} but in a different operational setting.

\textbf{\textit{Generalized second law with quantum feedback}}---Most generally, one can consider the task of converting a system $S$ in a given initial state $(\rho_{SM},\gamma_S)$ coupled with a memory $M$ to a given target state $(\rho'_{S'M'},\gamma'_{S'})$ with quantum feedback while expending as little work as possible, with $S'$ and $M'$ denoting the final system and memory.  This reduces to system formation or work extraction with quantum feedback, respectively, when the initial or target state is a Gibbs state.  We define the associated \emph{single-shot work cost under QFGOs}, $W^\varepsilon((\rho_{SM},\gamma_S)\to(\rho'_{S'M'},\gamma'_{S'}))$, as the battery's decrease in charge, $\beta^{-1}(\ln\lvert S_0\rvert-\ln\lvert S_1\rvert)$, minimized over every QFGO $\ch{N}_{S_0SM\to S_1S'M'}$ such that $\ch{N}_{S_0SM\to S_1S'M'}[\op{0}{0}_{S_0}\otimes\rho_{SM}]\approx_\varepsilon\op{0}{0}_{S_1}\otimes\rho'_{S'M'}$.  We derive upper and lower bounds on this single-shot work cost and show that these bounds coincide in the asymptotic limit; see Appendix~B for a formal statement.  Since every physically admissible feedback-control scheme without external work supply is a QFGO, this implies a fundamental lower limit on the average work cost $\langle W\rangle$ of an arbitrary conversion with quantum feedback, thus generalizing Eq.~\eqref{eq:classical} to the fully quantum setup.

\begin{theorem}
\label{thm:law}
For an initial state $(\rho_{SM},\gamma_S)$, a target state $(\rho'_{S'M'},\gamma'_{S'})$, and an error parameter $\varepsilon\in(0,1)$, the average work cost $\langle W\rangle$ with quantum feedback is lower bounded by the asymptotic work cost under QFGOs, which is given by
\begin{align}
	\left\langle W\right\rangle&\geq\lim_{n\to\infty}\tfrac{1}{n}W^\varepsilon\fleft(\fleft(\rho_{SM}^{\otimes n},\gamma_S^{\otimes n}\fright)\to\fleft(\rho_{S'M'}^{\prime\,\otimes n},\gamma_{S'}^{\prime\,\otimes n}\fright)\fright) \\
	&=\left(F\fleft(S'\middle|M'\fright)_{\rho'}-F\fleft(S'\fright)_{\gamma'}\right)-\left(F\fleft(S\middle|M\fright)_\rho-F\fleft(S\fright)_\gamma\right),
\end{align}
where the conditional Helmholtz free energy~\cite{bera2017GeneralizedLawsThermodynamics} is defined as $F(S|M)_\rho\coloneq\tr[\h{H}_S\rho_S]-\beta^{-1}H(S|M)_\rho=F(S)_\rho+\beta^{-1}I(S;M)_\rho=F(S)_\gamma+\beta^{-1}I(\rho_{SM}\|\gamma_S)$.
\end{theorem}

The generalized second law of thermodynamics with quantum feedback, as presented in Theorem~\ref{thm:law}, can be condensed into one line: $\langle W\rangle\geq\Delta F_\abb{NE}+\beta^{-1}\Delta I$, which appears the same as Eq.~\eqref{eq:classical} but now with the system and the memory both being quantum and $\Delta I$ representing the change of the \emph{quantum} mutual information ($\Delta F_\abb{NE}$ remains the change of the system's nonequilibrium Helmholtz free energy).  Notably, the same inequality was also derived in Ref.~\cite{bera2017GeneralizedLawsThermodynamics}, albeit from different operational assumptions.  The operations allowed for free in Ref.~\cite{bera2017GeneralizedLawsThermodynamics} are those preserving the joint von Neumann entropy of the system and the memory, whereas the QFGOs considered here are subject to a structural decomposition and Kelvin's principle.  That our framework is not based on phenomenological quantities as in Ref.~\cite{bera2017GeneralizedLawsThermodynamics} but instead is based on operational principles enables us to establish tight results even on the microscopic scale (see Theorems~\ref{thm:formation} and \ref{thm:extraction}).  Our resource-theoretic approach and the ensuing findings can thus be viewed as a single-shot, operational sharpening of Ref.~\cite{bera2017GeneralizedLawsThermodynamics}.

While our generalized second law with quantum feedback allows the controller to overcome the traditional second law in the feedback-control phase, it does not contradict the latter as far as a complete cycle is concerned due to the structural decomposition in Eq.~\eqref{eq:feedback}; see Supplemental Material~\cite{Note1} for details.  This observation generalizes the insight of Ref.~\cite{sagawa2009MinimalEnergyCost} to the fully quantum setup.

\textbf{\textit{Axiomatization of conditional entropies}}---Reconstructing known entropic measures from basic mathematical properties or physical principles has been a quest long upheld by both information theorists~\cite{birkhoff1946TresObservacionesSobre,shannon1948MathematicalTheoryCommunication,blackwell1953EquivalentComparisonsExperiments,renyi1961MeasuresEntropyInformation,petz1992CharacterizationRelativeEntropy,csiszar2008AxiomaticCharacterizationsInformation,wilming2017AxiomaticCharacterizationQuantum,gour2018ConditionalUncertaintyPrinciple,gour2021EntropyRelativeEntropy,gour2020OptimalExtensionsResource,gour2021UniquenessOptimalityDynamical,gour2024InevitabilityKnowingLess,gour2025InevitableNegativityAdditivity} and physicists~\cite{lieb1998GuideEntropySecond,lieb1999PhysicsMathematicsSecond,lieb2013EntropyConceptNonequilibrium,lieb2014EntropyMetersEntropy,weilenmann2016AxiomaticRelationThermodynamic,weilenmann2018SmoothEntropyAxiomatic}.  In a recent advancement, an axiomatic criterion was developed for identifying conditional entropies in the quantum domain based on a minimal set of properties~\cite{gour2024InevitabilityKnowingLess}: (C1) antimonotonicity under quantum conditional majorization, (C2) additivity for product states, and (C3) normalization.  Any real-valued function $\rho_{SM}\mapsto\g{H}(S|M)_\rho$, whatever it is, can justifiably be considered as \emph{a conditional entropy} as long as it satisfies properties~(C1)--(C3).  The main findings of Ref.~\cite{gour2024InevitabilityKnowingLess} are twofold: that every conditional entropy in this axiomatic sense must be lower bounded by the conditional min-entropy, and that it must be negative when $\rho_{SM}$ is a maximally entangled state between $S$ and $M$.  Applying our characterization of the single-shot work of formation and extractable work under QFGOs to the special case of trivial Hamiltonians, we strengthen these findings and resolve an open problem raised in Ref.~\cite{gour2024InevitabilityKnowingLess}, showing that every conditional entropy must also be upper bounded by the conditional max-entropy; see Appendix~C for details.

\begin{theorem}
\label{thm:axiomatization}
Every real-valued function $\rho_{SM}\mapsto\g{H}(S|M)_\rho$ satisfying properties~(C1)--(C3) is bounded as
\begin{align}
	H_{\min}^\downarrow(S|M)_\rho&\leq\g{H}(S|M)_\rho\leq H_{\max}^\uparrow(S|M)_\rho\;\;\forall\rho_{SM},
\end{align}
where the conditional min-entropy~\cite{renner2005SecurityQuantumKey} is defined as $H_{\min}^\downarrow(S|M)_\rho\coloneq-\ln\lVert\rho_M^{-\frac{1}{2}}\rho_{SM}\rho_M^{-\frac{1}{2}}\rVert_\infty$ and the conditional max-entropy~\cite{renner2005SecurityQuantumKey} as $H_{\max}^\uparrow(S|M)_\rho\coloneq\ln\lVert\tr_S[\rho_{SM}^0]\rVert_\infty$.
\end{theorem}

Since the conditional min- and max-entropies themselves satisfy properties~(C1)--(C3), they can thus be reconstructed solely from these properties.  They are also a pair of conditional entropies sharing a duality relation~\cite{berta2008SingleshotQuantumState,tomamichel2014RelatingDifferentQuantum}.  The newly established upper bound in Theorem~\ref{thm:axiomatization} not only confirms that negative conditional entropy is ``inevitable'' for \emph{certain} states such as maximally entangled states~\cite{gour2024InevitabilityKnowingLess}, but it also delineates for \emph{what} states this is inevitable: every state $\rho_{SM}$ such that $\tr_S[\rho_{SM}^0]<\1_M$.

\textbf{\textit{Discussion}}---We have established fundamental lower limits on the work costs of system conversion with quantum feedback, based on two operational principles that every physically admissible feedback-control schemes should satisfy.  These include the tightest possible bounds on the single-shot work of formation and extractable work with quantum feedback and a generalized second law of thermodynamics with quantum feedback in the asymptotic regime.  Our findings extend the study of thermodynamic feedback control~\cite{sagawa2008SecondLawThermodynamics,sagawa2009MinimalEnergyCost,jacobs2009SecondLawThermodynamics,parrondo2015ThermodynamicsInformation,narasimhachar2017ResourceTheoryConditioned,narasimhachar2019QuantifyingMemoryCapacity,morris2019AssistedWorkDistillation,minagawa2025UniversalValiditySecond} to a fully quantum setup and generalize many information-theoretic insights in single-shot quantum thermodynamics~\cite{horodecki2013FundamentalLimitationsQuantum,gour2015ResourceTheoryInformational,gour2022RoleQuantumCoherence} to a setting where quantum side information is present.  They also shed light on the operational and axiomatic understanding of single-shot conditional entropies.

Our work leads to several future directions.  First, the set of free operations considered here is an outer approximation of the set of physically admissible feedback-control schemes, leading to fundamental lower limits.  Whether such limits can be reached realistically can only be answered by considering more restrictive sets of free operations, e.g., incorporating restrictions on coherence supply~\cite{lostaglio2015DescriptionQuantumCoherence,lostaglio2015QuantumCoherenceTimetranslation,cwiklinski2015LimitationsEvolutionQuantum,gour2022RoleQuantumCoherence,tajima2025GibbspreservingOperationsRequiring}.  Second, catalysis is a major theme in thermodynamics~\cite{brandao2015SecondLawsQuantum,muller2018CorrelatingThermalMachines,lipka-bartosik2021AllStatesAre,shiraishi2021QuantumThermodynamicsCorrelatedcatalytic,lipka-bartosik2024CatalysisQuantumInformation}.  It would be intriguing to see whether Rényi GMIs play a similar role to Rényi relative entropies in dictating single-shot catalytic convertibility~\cite{brandao2015SecondLawsQuantum,gour2021EntropyRelativeEntropy} but with feedback.  Third, due to a close connection between conditional entropies and channel entropies~\cite{gour2019ComparisonQuantumChannels,gour2021EntropyQuantumChannel,gour2025InevitableNegativityAdditivity}, we expect the ideas and techniques presented here to be instructive for the study of channel entropies and channel thermodynamics.

\begin{acknowledgements}
\textbf{\textit{Acknowledgments}}---We thank Karol Horodecki, Chung-Yun Hsieh, Seth Lloyd, Varun Narasimhachar, Nelly Ng, and Ryuji Takagi for helpful discussions and comments.  K.J. and M.M.W. acknowledge support from the NSF under Grant No.~2329662. G.G. acknowledges support from the Israel Science Foundation under Grant No.~1192/24.
\end{acknowledgements}


\bibliographystyle{apsc}
\bibliography{Library}


\onecolumngrid

\begin{center}
\textbf{\large Appendices}
\end{center}

\twocolumngrid

\textbf{\textit{Appendix~A: Generalized mutual informations (GMIs)}}---For a real-valued function $(\rho,\sigma)\mapsto\g{D}(\rho\|\sigma)$ satisfying the data-processing inequality, we define the \emph{unoptimized} and the \emph{optimized GMIs} based on $\g{D}$, respectively, as:
\begin{align}
	\g{I}^\uparrow\fleft(\rho_{SM}\middle\|\gamma_S\fright)&\coloneq\g{D}\fleft(\rho_{SM}\middle\|\gamma_S\otimes\rho_M\fright), \label{eq:unoptimized}\\
	\g{I}^\downarrow\fleft(\rho_{SM}\middle\|\gamma_S\fright)&\coloneq\inf_{\sigma_M}\g{D}\fleft(\rho_{SM}\middle\|\gamma_S\otimes\sigma_M\fright), \label{eq:optimized}
\end{align}
with the infimization over every state $\sigma_M$.  In the context of conditional athermality, they can be understood through of the lens of resource-destroying maps~\cite{liu2017ResourceDestroyingMaps} (see Supplemental Material~\cite{Note1} for details).

The unoptimized GMI [Eq.~\eqref{eq:unoptimized}] based on the max-relative entropy~\cite{datta2009MinMaxrelativeEntropies} $D_{\max}(\rho\|\sigma)\coloneq\ln\lVert\sigma^{-\frac{1}{2}}\rho\sigma^{-\frac{1}{2}}\rVert_\infty$, dubbed the \emph{(unoptimized) max-GMI}~\cite{hayashi2016CorrelationDetectionOperational}, is given by
\begin{align}
	I_{\max}^\uparrow\fleft(\rho_{SM}\middle\|\gamma_S\fright)&\coloneq\ln\left\lVert\left(\gamma_S^{-\frac{1}{2}}\otimes\rho_M^{-\frac{1}{2}}\right)\rho_{SM}\left(\gamma_S^{-\frac{1}{2}}\otimes\rho_M^{-\frac{1}{2}}\right)\right\rVert_\infty. \label{eq:max}
\end{align}
The optimized GMI [Eq.~\eqref{eq:optimized}] based on the (Petz-type) min-relative entropy~\cite{datta2009MinMaxrelativeEntropies} $D_{\min}(\rho\|\sigma)\coloneq-\ln\tr[\rho^0\sigma]$, dubbed the \emph{(optimized) (Petz-type) min-GMI}~\cite{hayashi2016CorrelationDetectionOperational}, is given by
\begin{align}
	I_{\min}^\downarrow\fleft(\rho_{SM}\middle\|\gamma_S\fright)&\coloneq-\ln\left\lVert\tr_S\fleft[\rho_{SM}^0\gamma_S\fright]\right\rVert_\infty. \label{eq:min}
\end{align}
The GMI [either Eq.~\eqref{eq:unoptimized} or \eqref{eq:optimized}] based on the Umegaki relative entropy~\cite{umegaki1962ConditionalExpectationOperator} $D(\rho\|\sigma)\coloneq\tr[\rho(\ln\rho-\ln\sigma)]$, dubbed the \emph{Umegaki GMI}~\cite{hayashi2016CorrelationDetectionOperational,morris2019AssistedWorkDistillation}, is given by
\begin{align}
	I\fleft(\rho_{SM}\middle\|\gamma_S\fright)&\coloneq\tr\fleft[\rho_{SM}\ln\rho_{SM}\fright]-\tr\fleft[\rho_S\ln\gamma_S\fright] \notag\\
	&\qquad-\tr\fleft[\rho_M\ln\rho_M\fright]. \label{eq:umegaki}
\end{align}
Unoptimized and optimized GMIs [Eqs.~\eqref{eq:unoptimized} and \eqref{eq:optimized}] based on the sandwiched Rényi relative entropy~\cite{muller-lennert2013QuantumRenyiEntropies,wilde2014StrongConverseClassical} $\sw{D}_\alpha(\rho\|\sigma)\coloneq(\alpha-1)^{-1}\ln\tr[(\sigma^\frac{1-\alpha}{2\alpha}\rho\sigma^\frac{1-\alpha}{2\alpha})^\alpha]$ and the Petz--Rényi relative entropy~\cite{petz1985QuasientropiesStatesNeumann,petz1986QuasientropiesFiniteQuantum} $\pz{D}_\alpha(\rho\|\sigma)\coloneq(\alpha-1)^{-1}\ln\tr[\rho^\alpha\sigma^{1-\alpha}]$ are denoted by $\sw{I}_\alpha^\uparrow$, $\sw{I}_\alpha^\downarrow$, $\pz{I}_\alpha^\uparrow$, and $\pz{I}_\alpha^\downarrow$ correspondingly~\cite{hayashi2016CorrelationDetectionOperational,burri2025DoublyMinimizedPetz,burri2025DoublyMinimizedSandwiched}.

For a smoothing parameter $\varepsilon\in[0,1]$, we define the \emph{smoothed (unoptimized) max-GMI} and the \emph{(operator-) smoothed (optimized) (Petz-type) min-GMI} in Eqs.~\eqref{eq:smoothed-max} and \eqref{eq:smoothed-min}, respectively.  In the $\varepsilon=0$ case, they reduce to their corresponding nonsmoothed versions in Eqs.~\eqref{eq:max} and \eqref{eq:min}.  All properties below are proved in Supplemental Material~\cite{Note1}.  The smoothed max-GMI has the following alternative expression:
\begin{align}
	I_{\max}^{\uparrow,\varepsilon}\fleft(\rho_{SM}\middle\|\gamma_S\fright)&=\inf_{S_1,\omega_{S_1SM}}\left\{I_{\max}^\uparrow\fleft(\omega_{S_1SM}\middle\|\1_{S_1}\otimes\gamma_S\fright)\colon\right. \notag\\
	&\qquad\left.\omega_{S_1SM}\approx_\varepsilon\op{0}{0}_{S_1}\otimes\rho_{SM}\right\}, \label{eq:alternative}
\end{align}
where the infimization is over every battery $S_1$ and (normalized) state $\omega_{S_1SM}$, and the proximity is measured in the trace distance.  The smoothing in Eq.~\eqref{eq:alternative} can be interpreted as follows: we dilate $S$ by including a battery $S_1$, calculate the max-GMI for an approximating state $(\omega_{S_1SM},\1_{S_1}/\lvert S_1\rvert\otimes\gamma_S)$, and subtract the battery's contribution $\ln\lvert S_1\rvert$.  The smoothed min-GMI can also be expressed by Eq.~\eqref{eq:optimized} based on the hypothesis-testing relative entropy~\cite{buscemi2010QuantumCapacityChannels,wang2012OneshotClassicalquantumCapacity} $D_{\min}^\varepsilon(\rho\|\sigma)\coloneq\sup_\Lambda\{-\ln\tr[\Lambda\sigma]\colon\tr[\Lambda\rho]\geq1-\varepsilon,\;0\leq\Lambda\leq\1\}$.  Both the smoothed max- and min-GMIs are efficiently computable via semidefinite programs.  Moreover, they furthermore share the following properties, which are useful for deriving our main results.

\begin{proposition}
\label{prop:properties}
Let $\hat{I}^\varepsilon\in\{I_{\max}^{\uparrow,\varepsilon},I_{\min}^{\downarrow,\varepsilon}\}$ be the smoothed max- or min-GMI.  For a state $(\rho_{SM},\gamma_S)$ and a smoothing parameter $\varepsilon\in[0,1]$, it satisfies the following properties.
\begin{itemize}
	\item[(i)] Monotonicity under QFGOs: for every QFGO $\ch{N}_{SM\to S'M'}$,
	\begin{align}
		\hat{I}^\varepsilon\fleft(\rho_{SM}\middle\|\gamma_S\fright)&\geq\hat{I}^\varepsilon\fleft(\ch{N}_{SM\to S'M'}\fleft[\rho_{SM}\fright]\middle\|\gamma'_{S'}\fright),
	\end{align}
	with $\gamma'_{S'}$ the Gibbs state of $S'$.
	\item[(ii)] Robustness against battery: for every battery $S_0$,
	\begin{align}
		\hat{I}^\varepsilon\fleft(\op{0}{0}_{S_0}\otimes\rho_{SM}\middle\|\tfrac{\1_{S_0}}{\left\lvert S_0\right\rvert}\otimes\gamma_S\fright)&=\hat{I}^\varepsilon\fleft(\rho_{SM}\middle\|\gamma_S\fright)+\ln\left\lvert S_0\right\rvert.
	\end{align}
	\item[(iii)] Asymptotic equipartition property:
	\begin{align}
		\lim_{n\to\infty}\tfrac{1}{n}\hat{I}^\varepsilon\fleft(\rho_{SM}^{\otimes n}\middle\|\gamma_S^{\otimes n}\fright)&=I\fleft(\rho_{SM}\middle\|\gamma_S\fright)\;\;\forall\varepsilon\in(0,1).
	\end{align}
\end{itemize}
\end{proposition}

Equations~\eqref{eq:unoptimized} and \eqref{eq:optimized} generalize common recipes for constructing conditional entropies~\cite{muller-lennert2013QuantumRenyiEntropies,tomamichel2009FullyQuantumAsymptotic,sharma2013FundamentalBoundReliability,tomamichel2014RelatingDifferentQuantum,gour2024InevitabilityKnowingLess} and mutual informations~\cite{hayashi2016CorrelationDetectionOperational,beigi2013SandwichedRenyiDivergence,gupta2015MultiplicativityCompletelyBounded,burri2025DoublyMinimizedPetz,burri2025DoublyMinimizedSandwiched}: $\g{H}^\downarrow(S|M)_\rho\coloneq-\g{I}^\uparrow(\rho_{SM}\|\1_S)$, $\g{H}^\uparrow(S|M)_\rho\coloneq-\g{I}^\downarrow(\rho_{SM}\|\1_S)$, $\g{I}^\uparrow(S;M)_\rho\coloneq\g{I}^\uparrow(\rho_{SM}\|\rho_S)$, and $\g{I}^\downarrow(S;M)_\rho\coloneq\g{I}^\downarrow(\rho_{SM}\|\rho_S)$.  Proposition~\ref{prop:properties} thus also applies to the \emph{smoothed (unoptimized) conditional min-entropy} $H_{\min}^{\downarrow,\varepsilon}(S|M)_\rho\coloneq-I_{\max}^{\uparrow,\varepsilon}(\rho_{SM}\|\1_S)$ and the \emph{(operator-) smoothed (optimized) (Petz-type) conditional max-entropy} $H_{\max}^{\uparrow,\varepsilon}(S|M)_\rho\coloneq-I_{\min}^{\downarrow,\varepsilon}(\rho_{SM}\|\1_S)$.  The smoothed conditional min- and max-entropies considered here differ from quantities under similar names in the literature~\cite{renner2005SecurityQuantumKey,konig2009OperationalMeaningMin,tomamichel2009FullyQuantumAsymptotic,tomamichel2010DualitySmoothMin,tomamichel2011LeftoverHashingQuantum}, as the smoothing techniques used here are different.

\textbf{\textit{Appendix~B: Bounds on the single-shot work cost with quantum feedback}}---We derive the following upper and lower bounds on the single-shot work cost under QFGOs (see Supplemental Material~\cite{Note1} for a proof).

\begin{proposition}
\label{prop:bounds}
For an initial state $(\rho_{SM},\gamma_S)$, a target state $(\rho'_{S'M'},\gamma'_{S'})$, and an error parameter $\varepsilon\in[0,1)$, the single-shot work cost under QFGOs is bounded as
\begin{align}
	&\beta^{-1}\max\left\{A,B\right\}\leq W^\varepsilon\fleft(\fleft(\rho_{SM},\gamma_S\vphantom{\rho'_{S'M'},\gamma'_{S'}}\fright)\to\fleft(\rho'_{S'M'},\gamma'_{S'}\fright)\fright) \\
	&\leq\beta^{-1}\inf_{\eta\in[0,\varepsilon]}\left(I_{\max}^{\uparrow,\varepsilon-\eta}\fleft(\rho'_{S'M'}\middle\|\gamma'_{S'}\fright)-I_{\min}^{\downarrow,\eta}\fleft(\rho_{SM}\middle\|\gamma_S\fright)\right),
\end{align}
where
\begin{align}
	A&\equiv\sup_{\alpha\in(\frac{1}{2},1)}\left(\sw{I}_\alpha^\downarrow\fleft(\rho'_{S'M'}\middle\|\gamma'_{S'}\fright)-\sw{I}_\frac{\alpha}{2\alpha-1}^\downarrow\fleft(\rho_{SM}\middle\|\gamma_S\fright)-\tfrac{2\alpha}{1-\alpha}\ln\tfrac{1}{1-\varepsilon}\right), \\
	B&\equiv\sup_{\alpha\in(0,1)}\left(\pz{I}_\alpha^\downarrow\fleft(\rho'_{S'M'}\middle\|\gamma'_{S'}\fright)-\pz{I}_{2-\alpha}^\downarrow\fleft(\rho_{SM}\middle\|\gamma_S\fright)-\tfrac{2}{1-\alpha}\ln\tfrac{1}{1-\varepsilon}\right).
\end{align}
\end{proposition}

The upper and lower bounds in Proposition~\ref{prop:bounds} coincide in the asymptotic limit (see Supplemental Material~\cite{Note1} for a proof), leading to the generalized second law of thermodynamics with quantum feedback in Theorem~\ref{thm:law}.

\begin{table*}[t]
\begin{tabular}{m{12em}|>{\centering\arraybackslash}m{8em} >{\centering\arraybackslash}m{8em} >{\centering\arraybackslash}m{8em} >{\centering\arraybackslash}m{8em} >{\centering\arraybackslash}m{8em}} \botrule
& \makecell[c]{Nontrivial \\ Hamiltonians} & Quantum systems & Side information & \makecell[c]{Quantum \\ side information} & Smoothing \\ \hline 
Entropies~\cite{gour2015ResourceTheoryInformational,weilenmann2016AxiomaticRelationThermodynamic,gour2021EntropyRelativeEntropy} & & \ding{51} \\ \hline
Relative entropies~\cite{weilenmann2016AxiomaticRelationThermodynamic,gour2021EntropyRelativeEntropy} & \ding{51} \\
\hphantom{Relative entropies~}\cite{wang2019ResourceTheoryAsymmetric,gour2020OptimalExtensionsResource} & \ding{51} & \ding{51} \\ \hline
Conditional entropies~\cite{gour2018ConditionalUncertaintyPrinciple} & & & \ding{51} \\
\hphantom{Conditional e~}\cite[partially]{gour2024InevitabilityKnowingLess} & & \ding{51} & \ding{51} & \ding{51} \\ \hline
Smoothed entropies~\cite{weilenmann2018SmoothEntropyAxiomatic} & & \ding{51} & & & \ding{51} \\ \hline
[This paper] & \ding{51} & \ding{51} & \ding{51} & \ding{51} & $\star$ \\ \botrule
\end{tabular}
\caption{Axiomatization of different classes of entropic measures.  Here axiomatization specifically refers to reconstructing the maximal and the minimal members of a class from simple properties.  The most general class axiomatized in this paper, namely proper conditional-athermality monotones, assumes nontrivial Hamiltonians, quantum systems, and quantum side information.  As indicated by $\star$, smoothing can also be addressed due to our consideration of imperfect conversion, and we leave the detailed verification for future work.}
\label{tab}
\end{table*}

\textbf{\textit{Appendix~C: Axiomatization of entropic measures}}---Majorization~\cite{birkhoff1946TresObservacionesSobre} is central to the information-theoretic study of entropies~\cite{gour2021EntropyRelativeEntropy}.  A generalization thereof, known as thermo-majorization, pertains to the study of relative entropies~\cite{blackwell1953EquivalentComparisonsExperiments,gour2021EntropyRelativeEntropy}.  Convertibility under Gibbs-preserving maps can be viewed as a notion of quantum thermo-majorization~\cite{wang2019ResourceTheoryAsymmetric,buscemi2019InformationtheoreticTreatmentQuantum,lipka-bartosik2024QuantumDichotomiesCoherent,gour2020OptimalExtensionsResource}.  Any real-valued function $(\rho,\sigma)\mapsto\g{D}(\rho\|\sigma)$~\footnote{Here we assume that $\supp(\rho)\subseteq\supp(\sigma)$ for simplicity, thus avoiding the possibility of $\g{D}(\rho\|\sigma)=\infty$.} satisfying the following properties can justifiably be considered as \emph{a relative entropy}~\cite{gour2021EntropyRelativeEntropy,gour2020OptimalExtensionsResource}.
\begin{itemize}
	\item[(R1)] Data-processing inequality: $\g{D}(\rho\|\sigma)\geq\g{D}(\ch{E}[\rho]\|\ch{E}[\sigma])$ for every channel $\ch{E}$.
	\item[(R2)] Additivity for product states: $\g{D}(\rho\otimes\rho'\|\sigma\otimes\sigma')=\g{D}(\rho\|\sigma)+\g{D}(\rho'\|\sigma')$.
	\item[(R3)] Normalization: $\g{D}(\op{0}{0}_S\|\1_S/2)=\ln2$ if $\lvert S\rvert=2$.
\end{itemize}

Conditional majorization~\cite{gour2018ConditionalUncertaintyPrinciple} and its extension to the quantum domain~\cite{gour2024InevitabilityKnowingLess}, on the other hand, enable the axiomatic study of conditional entropies.  Within our resource theory of conditional athermality, we can rephrase the definition of quantum conditional majorization~\cite{gour2024InevitabilityKnowingLess} as follows: $\rho_{SM}$ is said to \emph{conditionally majorize} $\rho'_{S'M'}$, denoted by $\rho_{SM}\succcurlyeq_{S,S'}\rho'_{S'M'}$, whenever there exists a system $\hat{S}$ with a trivial Hamiltonian, two isometric channels $\ch{V}_{S\to\hat{S}}$ and $\ch{V}'_{S'\to\hat{S}}$, and a QFGO $\ch{N}_{\hat{S}M\to\hat{S}M'}$ such that $(\ch{N}_{\hat{S}M\to\hat{S}M'}\circ\ch{V}_{S\to\hat{S}})[\rho_{SM}]=\ch{V}'_{S'\to\hat{S}}[\rho'_{S'M'}]$.  Any real-valued function $\rho_{SM}\mapsto\g{H}(S|M)_\rho$ satisfying the following properties can justifiably be considered as \emph{a conditional entropy}~\cite{gour2024InevitabilityKnowingLess}.
\begin{itemize}
	\item[(C1)] Antimonotonicity under quantum conditional majorization: $\g{H}(S|M)_\rho\leq\g{H}(S'|M')_{\rho'}$ if $\rho_{SM}\succcurlyeq_{S,S'}\rho'_{S'M'}$.
	\item[(C2)] Additivity for product states: $\g{H}(SS'|MM')_{\rho\otimes\rho'}=\g{H}(S|M)_\rho+\g{H}(S'|M')_{\rho'}$.
	\item[(C3)] Normalization: $\g{H}(S)_{\1/2}=\ln2$ if $\lvert S\rvert=2$.
\end{itemize}

Following this line of thought, convertibility under QFGOs can be understood as a notion of quantum conditional thermo-majorization.  This justifies identifying a class of functions of the form $(\rho_{SM},\gamma_S)\mapsto\g{I}(\rho_{SM}\|\gamma_S)$ satisfying the following properties, and we call them \emph{proper conditional-athermality monotones}.
\begin{itemize}
	\item[(I1)] Monotonicity under QFGOs: $\g{I}(\rho_{SM}\|\gamma_S)\geq\g{I}(\ch{N}_{SM\to S'M'}[\rho_{SM}]\|\gamma'_{S'})$ for every QFGO $\ch{N}_{SM\to S'M'}$, with $\gamma_S$ (resp.\ $\gamma'_{S'}$) the Gibbs state of $S$ (resp.\ $S'$).
	\item[(I2)] Additivity for product states: $\g{I}(\rho_{SM}\otimes\rho'_{S'M'}\|\gamma_S\otimes\gamma'_{S'})=\g{I}(\rho_{SM}\|\gamma_S)+\g{I}(\rho'_{S'M'}\|\gamma'_{S'})$.
	\item[(I3)] Normalization: $\g{I}(\op{0}{0}_S\|\1_S/2)=\ln2$ if $\lvert S\rvert=2$.
\end{itemize}
It can be inferred from Refs.~\cite{liu2017ResourceDestroyingMaps,hayashi2016CorrelationDetectionOperational} that the unoptimized GMI $\g{I}^\uparrow$ defined in Eq.~\eqref{eq:unoptimized} based on any relative entropy $\g{D}$ is a proper conditional-athermality monotone, and so is the optimized GMI $\g{I}^\downarrow$ defined in Eq.~\eqref{eq:optimized} if $\g{D}$ is a sandwiched--Rényi or Petz--Rényi relative entropy (see Supplemental Material~\cite{Note1} for details).  These include the max-, min-, and Umegaki GMIs [Eqs.~\eqref{eq:max}--\eqref{eq:umegaki}].  Every proper conditional-athermality monotone $\g{I}$ can be viewed as inducing \emph{a conditional free energy}:
\begin{align}
	\g{F}\fleft(S\middle|M\fright)_\rho&\coloneq\beta^{-1}\left(\g{I}\fleft(\rho_{SM}\middle\|\gamma_S\fright)-\ln\tr\fleft[e^{-\beta\h{H}_S}\fright]\right).
\end{align}

Evidently, properties~(I1)--(I3) reduce to (R1)--(R3) when the memory is trivial, and thus proper conditional-atheramlity monotones generalize relative entropies.  We show that they also generalize conditional neg-entropies, which arise in the special case of trivial Hamiltonians (see Supplemental Material~\cite{Note1} for a proof).

\begin{lemma}
\label{lem:negentropy}
A real-valued function $\rho_{SM}\mapsto\g{H}(S|M)_\rho$ is a conditional entropy [i.e., satisfies properties~(C1)--(C3)] if and only if the function $\rho_{SM}\mapsto\ln\lvert S\rvert-\g{H}(S|M)_\rho$ is a proper conditional-athermality monotone [i.e., satisfies properties~(I1)--(I3)] with trivial Hamiltonians.
\end{lemma}

As a consequence of Theorems~\ref{thm:formation} and \ref{thm:extraction}, we show that the max- and min-GMIs are the maximal and the minimal proper conditional-athermality monotones (see Supplemental Material~\cite{Note1} for a proof).  

\begin{proposition}
\label{prop:axiomatization}
Every real-valued function $(\rho_{SM},\gamma_S)\mapsto\g{I}(\rho_{SM}\|\gamma_S)$ satisfying properties (I1)--(I3) is bounded as
\begin{align}
	I_{\min}^\downarrow\fleft(\rho_{SM}\middle\|\gamma_S\fright)&\leq\g{I}\fleft(\rho_{SM}\middle\|\gamma_S\fright)\leq I_{\max}^\uparrow\fleft(\rho_{SM}\middle\|\gamma_S\fright)\;\;\forall\rho_{SM},\gamma_S.
\end{align}
\end{proposition}

Theorem~\ref{thm:axiomatization} follows immediately from Lemma~\ref{lem:negentropy} and Proposition~\ref{prop:axiomatization}.  This enables the reconstruction of the relevant entropic measures from simple properties.

The connection between our resource theory of conditional athermality and axiomatization of entropic measures can also be examined though the lens of Lieb and Yngvason's work~\cite{lieb1998GuideEntropySecond,lieb1999PhysicsMathematicsSecond,lieb2013EntropyConceptNonequilibrium,lieb2014EntropyMetersEntropy}.  Their framework was originally devised for reconstructing thermodynamic entropies and yet has also been applied successfully to reconstructing information-theoretic entropies from certain resource theories~\cite{weilenmann2016AxiomaticRelationThermodynamic,weilenmann2018SmoothEntropyAxiomatic}.  An open problem was raised in Refs.~\cite{weilenmann2016AxiomaticRelationThermodynamic,weilenmann2018SmoothEntropyAxiomatic} to reconstruct entropic measures that incorporate quantum side information under this framework.  Our resource theory of conditional athermality is naturally suited to this purpose and furthermore accommodates smoothing~\cite{weilenmann2018SmoothEntropyAxiomatic} by considering imperfect conversion.  This would lead to a unified axiomatic treatment of entropic measures, simultaneously accounting for nontrivial Hamiltonians~\cite{gour2015ResourceTheoryInformational,gour2021EntropyRelativeEntropy,gour2020OptimalExtensionsResource,wang2019ResourceTheoryAsymmetric,weilenmann2016AxiomaticRelationThermodynamic}, quantum side information~\cite{gour2024InevitabilityKnowingLess}, and smoothing~\cite{weilenmann2018SmoothEntropyAxiomatic}; see Table~\ref{tab}.  We leave as future work to verify the details of reconstructing the smoothed max- and min-GMIs from this unifying treatment.


\clearpage
\onecolumngrid
\setcounter{page}{1}
\setcounter{equation}{0}
\setcounter{definition}{0}
\setcounter{figure}{0}
\setcounter{remark}{0}
\setcounter{theorem}{0}
\renewcommand{\theequation}{S\arabic{equation}}
\renewcommand{\thedefinition}{S\arabic{definition}}
\renewcommand{\thefigure}{S\arabic{figure}}
\renewcommand{\theremark}{S\arabic{remark}}
\renewcommand{\thetheorem}{S\arabic{theorem}}

\let\addcontentsline\oldaddcontentsline
\setcounter{secnumdepth}{3}

\makeatletter
\renewcommand\normalsize{\@setfontsize\normalsize{12}{14}}
\renewcommand\small{\@setfontsize\small{11}{13}}
\renewcommand\footnotesize{\@setfontsize\footnotesize{10}{12}}
\renewcommand\large{\@setfontsize\large{14}{17}}
\renewcommand\Large{\@setfontsize\Large{17}{20}}
\makeatother
\normalsize

\begin{center}
\textbf{\large Supplemental Material: Fundamental Limits for Thermodynamic Control with Quantum Feedback}
\end{center}

\tableofcontents

\newcounter{temporary}

\section{Quantum-feedback-assisted Gibbs-preserving operations (QFGO\lowercase{s})}
\label{sec:framework}

Our operational framework is cast as a \emph{resource theory of thermal nonequilibrium in the presence of quantum side information}.  We also refer to our resource theory as \emph{conditional athermality}, which is a fully quantum generalization of a classical resource theory under the same name~\cite{narasimhachar2017ResourceTheoryConditioned}.

Our setting concerns the thermodynamic control of a quantum system by a controller who has access to a quantum memory.  The resource of interest is the isothermal work content of the system, which is in contact with a heat bath at a constant inverse temperature $\beta$.  Consider a system $S$ and a memory $M$.  Denoting the Hamiltonian of $S$ by $\h{H}_S$, the Gibbs state of $S$ is given by
\begin{align}
	\gamma_S&\equiv\frac{e^{-\beta\h{H}_S}}{\tr\fleft[e^{-\beta\h{H}_S}\fright]}.
\end{align}
Since we are only interested in the work content of $S$, the energetics of $M$ is irrelevant for our purpose, and thus we need not specify a Hamiltonian for $M$~\footnote{The only exception to this is  Sec.~\ref{sec:consistency}, where the energetics of $M$ is useful for demonstrating consistency between our framework and the traditional second law of thermodynamics.}.  The actual state of $S$ and $M$ is described by a joint quantum state (i.e., a positive semidefinite unit-trace operator) $\rho_{SM}$.  If $\rho_{SM}\neq\rho_S\otimes\rho_M$, then $M$ is said to contain side information about $S$, and such side information is said to be quantum if $\rho_{SM}$ is entangled between $S$ and $M$.  For completeness, we henceforth refer to the pair $(\rho_{SM},\gamma_S)$ as the \emph{state} of $S$ and $M$.

\begin{definition}[QFGO]
\label{def:operation}
A quantum channel (i.e., a completely positive trace-preserving linear map) $\ch{N}_{SM\to S'M'}$ is called a \emph{quantum-feedback-assisted Gibbs-preserving operation (QFGO)} whenever there exist two channels $\ch{E}_{M\to LM'}$ and $\ch{F}_{SL\to S'}$, with $L$ a quantum memory, such that
\begin{align}
	\ch{N}_{SM\to S'M'}&=\ch{F}_{SL\to S'}\circ\ch{E}_{M\to LM'}, \label{eq:operation-1}\\
	\ch{F}_{SL\to S'}\fleft[\gamma_S\otimes\left(\cdot\right)_L\fright]&=\gamma'_{S'}\otimes\tr_L\fleft[\cdot\fright], \label{eq:operation-2}
\end{align}
with $\gamma_S$ and $\gamma'_{S'}$ the Gibbs states of $S$ and $S'$, respectively.
\end{definition}

For a QFGO $\ch{N}_{SM\to S'M'}$, Eq.~\eqref{eq:operation-1} indicates that it can be decomposed into local operations on the system and on the memory and one-way communication from the memory to the system.  Both the local operations and the communication can be quantum, reflecting quantum control of the system and quantum feedback of side information.  Equation~\eqref{eq:operation-2} indicates that the operation controlling the system, $\ch{F}_{SL\to S'}$, must respect Kelvin's principle: if the feedback provided through $L$ contains no side information about $S$, then the final system $S'$ must remain in thermal equilibrium if the initial system $S$ is.  

Every physically admissible feedback-control scheme without external work supply must satisfy Eqs.~\eqref{eq:operation-1} and \eqref{eq:operation-2} and is thus a QFGO.  In our resource theory of conditional athermality, we deem QFGOs as the free operations.  States that can be generated from triviality by QFGOs are precisely those of the form $(\gamma_S\otimes\rho_M,\gamma_S)$, with the system in thermal equilibrium and the memory containing no side information.  We call such states \emph{conditional Gibbs states}, and they are the free states of our resource theory.

\subsection{Equivalent characterizations}
\label{sec:operation}

\begin{lemma}[Equivalent characterizations of QFGOs; restatement of Lemma~\ref{lem:operation}]
\label{lem:operation-supp}
For a channel $\ch{N}_{SM\to S'M'}$, with $\gamma_S$ and $\gamma'_{S'}$ the Gibbs states of $S$ and $S'$, respectively, the following statements are equivalent.
\begin{itemize}
	\item[(i)] $\ch{N}_{SM\to S'M'}$ is a QFGO.
	\item[(ii)] $\ch{N}_{SM\to S'M'}$ is thermalization covariant on the system:
	\begin{align}
		\ch{N}_{SM\to S'M'}\circ\ch{R}_{S\to S}^\gamma&=\ch{R}_{S'\to S'}^{\gamma'}\circ\ch{N}_{SM\to S'M'}, \label{eq:operation-3}
	\end{align}
	where
	\begin{align}
		\ch{R}_{S\to S}^\gamma[\cdot]&\equiv\gamma_S\otimes\tr_S[\cdot] \label{eq:thermalization}
	\end{align}
	and $\ch{R}_{S'\to S'}^{\gamma'}$ are the channels thermalizing $S$ and $S'$, respectively.
	\item[(iii)] $\ch{N}_{SM\to S'M'}$ is nonsignaling from the system to the memory:
	\begin{align}
		\tr_{S'}\circ\ch{N}_{SM\to S'M'}&=\tr_S\otimes\ch{E}_{M\to M'}, \label{eq:operation-4}
	\end{align}
	where $\ch{E}_{M\to M'}$ is a quantum channel; furthermore, it is conditionally Gibbs preserving:
	\begin{align}
		\ch{N}_{SM\to S'M'}\fleft[\gamma_S\otimes\left(\cdot\right)_M\fright]&=\gamma'_{S'}\otimes\ch{E}_{M\to M'}\fleft[\cdot\fright]. \label{eq:operation-5}
	\end{align}
\end{itemize}
\end{lemma}

\begin{proof}[Proof of (i) $\Rightarrow$ (ii)]
Let $\ch{N}_{SM\to S'M'}$ be a QFGO.  It follows from Definition~\ref{def:operation} that there exist two channels $\ch{E}_{M\to LM'}$ and $\ch{F}_{SL\to S'}$ satisfying Eqs.~\eqref{eq:operation-1} and \eqref{eq:operation-2}.  It follows from Eq.~\eqref{eq:operation-1} that
\begin{align}
	\left(\ch{N}_{SM\to S'M'}\circ\ch{R}_{S\to S}^\gamma\right)\fleft[\cdot\fright]&=\left(\ch{F}_{SL\to S'}\circ\left(\ch{R}_{S\to S}^\gamma\otimes\ch{E}_{M\to LM'}\right)\right)\fleft[\cdot\fright] \\
	&=\ch{F}_{SL\to S'}\fleft[\gamma_S\otimes\left(\tr_S\otimes\ch{E}_{M\to LM'}\right)\fleft[\cdot\fright]\fright] \label{pf:operation-1}\\
	&=\gamma'_{S'}\otimes\left(\tr_S\otimes\ch{E}_{M\to M'}\right)\fleft[\cdot\fright] \label{pf:operation-2}\\
	&=\gamma'_{S'}\otimes\left(\tr_{S'}\circ\ch{N}_{SM\to S'M'}\right)\fleft[\cdot\fright] \label{pf:operation-3}\\
	&=\left(\ch{R}_{S'\to S'}^{\gamma'}\circ\ch{N}_{SM\to S'M'}\right)\fleft[\cdot\fright]. \label{pf:operation-4}
\end{align}
Here Eqs.~\eqref{pf:operation-1} and \eqref{pf:operation-4} follow from the definition of thermalization [Eq.~\eqref{eq:thermalization}]; Eq.~\eqref{pf:operation-2} follows from Eq.~\eqref{eq:operation-2}; Eq.~\eqref{pf:operation-3} follows from Eq.~\eqref{eq:operation-4}, which itself is implied by Eq.~\eqref{eq:operation-1}.  This recovers Eq.~\eqref{eq:operation-2}.
\end{proof}

\begin{proof}[Proof of (ii) $\Rightarrow$ (iii)]
Let $\ch{N}_{SM\to S'M'}$ be a channel satisfying Eq.~\eqref{eq:operation-3}.  Define the following channel:
\begin{align}
	\ch{E}_{M\to M'}\fleft[\cdot\fright]&\coloneq\ch{N}_{SM\to M'}\fleft[\gamma_S\otimes\left(\cdot\right)_M\fright]. \label{pf:operation-5}
\end{align}
It follows from the definition of thermalization [Eq.~\eqref{eq:thermalization}] that
\begin{align}
	\left(\tr_{S'}\circ\ch{N}_{SM\to S'M'}\right)\fleft[\cdot\fright]&=\left(\tr_{S'}\circ\ch{R}_{S'\to S'}^{\gamma'}\circ\ch{N}_{SM\to S'M'}\right)\fleft[\cdot\fright] \\
	&=\left(\tr_{S'}\circ\ch{N}_{SM\to S'M'}\circ\ch{R}_{S\to S}^{\gamma}\right)\fleft[\cdot\fright] \label{pf:operation-6}\\
	&=\ch{N}_{SM\to M'}\fleft[\gamma_S\otimes\tr_S\fleft[\cdot\fright]\fright] \label{pf:operation-7}\\
	&=\left(\tr_S\otimes\ch{E}_{M\to M'}\right)\fleft[\cdot\fright]. \label{pf:operation-8}
\end{align}
Here Eq.~\eqref{pf:operation-6} follows from Eq.~\eqref{eq:operation-3}; Eq.~\eqref{pf:operation-7} follows from the definition of thermalization [Eq.~\eqref{eq:thermalization}]; Eq.~\eqref{pf:operation-8} follows from Eq.~\eqref{pf:operation-5}.  This recovers Eq.~\eqref{eq:operation-4}.  Furthermore, it follows from the definition of thermalization [Eq.~\eqref{eq:thermalization}] that
\begin{align}
	\ch{N}_{SM\to S'M'}\fleft[\gamma_S\otimes\left(\cdot\right)_M\fright]&=\left(\ch{N}_{SM\to S'M'}\circ\ch{R}_{S\to S}^\gamma\right)\fleft[\gamma_S\otimes\left(\cdot\right)_M\fright] \\
	&=\left(\ch{R}_{S'\to S'}^{\gamma'}\circ\ch{N}_{SM\to S'M'}\right)\fleft[\gamma_S\otimes\left(\cdot\right)_M\fright] \label{pf:operation-9}\\
	&=\gamma'_{S'}\otimes\ch{N}_{SM\to M'}\fleft[\gamma_S\otimes\left(\cdot\right)_M\fright] \label{pf:operation-10}\\
	&=\gamma'_{S'}\otimes\ch{E}_{M\to M'}\fleft[\cdot\fright]. \label{pf:operation-11}
\end{align}
Here Eq.~\eqref{pf:operation-9} follows from Eq.~\eqref{eq:operation-3}; Eq.~\eqref{pf:operation-10} follows from the definition of thermalization [Eq.~\eqref{eq:thermalization}]; Eq.~\eqref{pf:operation-6} follows from Eq.~\eqref{pf:operation-5}.  This recovers Eq.~\eqref{eq:operation-5}.
\end{proof}

\begin{proof}[Proof of (iii) $\Rightarrow$ (i)]
The proof follows a similar idea to that of Ref.~\cite[Theorem~6]{gour2025InevitableNegativityAdditivity}.  Let $\ch{N}_{SM\to S'M'}$ be a channel such that there exists a channel $\ch{E}_{M\to M'}$ satisfying Eqs.~\eqref{eq:operation-4} and \eqref{eq:operation-5}.  Due to the equivalence between semicausality and semilocalizability~\cite{eggeling2002SemicausalOperationsAre,piani2006PropertiesQuantumNonsignaling}, it follows from Eq.~\eqref{eq:operation-4} that there exists an isometric channel $\ch{E}_{M\to LM'}$ and a channel $\ch{F}_{SL\to S'}$ such that
\begin{align}
	\ch{N}_{SM\to S'M'}&=\ch{F}_{SL\to S'}\circ\ch{E}_{M\to LM'}, \label{pf:operation-12}\\
	\tr_L\circ\ch{E}_{M\to LM'}&=\ch{E}_{M\to M'}.
\end{align}
Consider the Choi state of $\ch{E}_{M\to LM'}$:
\begin{align}
	\Phi_{MLM'}^\ch{E}&\equiv\ch{E}_{\rpl{M}\to LM'}\fleft[\Phi_{M\rpl{M}}\fright], \label{pf:operation-13}
\end{align}
where $\Phi_{M\rpl{M}}\equiv(1/\lvert M\rvert)\sum_{i,j=0}^{\lvert M\rvert-1}\op{i}{j}_M\otimes\op{i}{j}_\rpl{M}$.  Consider the following Schmidt decomposition of $\Phi_{MLM'}^\ch{E}$:
\begin{align}
	\Phi_{MLM'}^\ch{E}&=\sum_{i,j=0}^{r-1}\left(p_ip_j\right)^\frac{1}{2}\op{\phi_i}{\phi_j}_{MM'}\otimes\op{\psi_i}{\psi_j}_{L'}, \label{pf:operation-14}
\end{align}
where $r\coloneq\rk(\Phi_{MLM'}^\ch{E})$, $(p_i)_{i=0}^{r-1}$ is a probability distribution, and $(\ket{\phi_i})_{i=0}^{r-1}$ and $(\ket{\psi_i})_{i=0}^{r-1}$ are two orthogonal sequences of vectors.  Since the output of $\ch{E}_{M\to LM'}$ on $L$ is always restricted to the subspace spanned by $(\ket{\psi_i})_{i=0}^{r-1}$ regardless of the input on $M$, without loss of generality, we can assume that $\lvert L\rvert=r$.  Define the following transposition map:
\begin{align}
	\left(\cdot\right)_{MM'}^\top&\coloneq\sum_{i,j=0}^{r-1}\tr_L\fleft[\op{\psi_i}{\psi_j}_L\left(\cdot\right)_L\fright]\op{\phi_i}{\phi_j}_{MM'}. \label{pf:operation-15}
\end{align}
It follows from Eqs.~\eqref{pf:operation-14} and \eqref{pf:operation-15} that
\begin{align}
	\left(\cdot\right)_L&=\tr_{MM'}\fleft[\left(\Phi_{MM'}^\ch{E}\right)^{-\frac{1}{2}}\Phi_{MLM'}^\ch{E}\left(\Phi_{MM'}^\ch{E}\right)^{-\frac{1}{2}}\left(\cdot\right)_{MM'}^\top\fright].
\end{align}
This implies that
\begin{align}
	\ch{F}_{SL\to S'}\fleft[\gamma_S\otimes\left(\cdot\right)_L\fright]&=\ch{F}_{SL\to S'}\fleft[\gamma_S\vphantom{\otimes\tr_{MM'}\fleft[\left(\Phi_{MM'}^\ch{E}\right)^{-\frac{1}{2}}\Phi_{MLM'}^\ch{E}\left(\Phi_{MM'}^\ch{E}\right)^{-\frac{1}{2}}\rho_{MM'}\fright]}\otimes\tr_{MM'}\fleft[\left(\Phi_{MM'}^\ch{E}\right)^{-\frac{1}{2}}\Phi_{MLM'}^\ch{E}\left(\Phi_{MM'}^\ch{E}\right)^{-\frac{1}{2}}\left(\cdot\right)_{MM'}^\top\fright]\fright] \\
	&=\tr_{MM'}\fleft[\left(\Phi_{MM'}^\ch{E}\right)^{-\frac{1}{2}}\left(\ch{F}_{SL\to S'}\circ\ch{E}_{\rpl{M}\to LM'}\right)\fleft[\gamma_S\otimes\Phi_{M\rpl{M}}\fright]\left(\Phi_{MM'}^\ch{E}\right)^{-\frac{1}{2}}\left(\cdot\right)_{MM'}^\top\fright] \label{pf:operation-16}\\
	&=\tr_{MM'}\fleft[\left(\Phi_{MM'}^\ch{E}\right)^{-\frac{1}{2}}\ch{N}_{S\rpl{M}\to S'M'}\fleft[\gamma_S\otimes\Phi_{M\rpl{M}}\fright]\left(\Phi_{MM'}^\ch{E}\right)^{-\frac{1}{2}}\left(\cdot\right)_{MM'}^\top\fright] \label{pf:operation-17}\\
	&=\gamma'_{S'}\otimes\tr_{MM'}\fleft[\left(\Phi_{MM'}^\ch{E}\right)^{-\frac{1}{2}}\ch{E}_{\rpl{M}\to M'}\fleft[\Phi_{M\rpl{M}}\fright]\left(\Phi_{MM'}^\ch{E}\right)^{-\frac{1}{2}}\left(\cdot\right)_{MM'}^\top\fright] \label{pf:operation-18}\\
	&=\gamma'_{S'}\otimes\tr_{MM'}\fleft[\left(\cdot\right)_{MM'}^\top\fright] \label{pf:operation-19}\\
	&=\gamma'_{S'}\otimes\tr_L\fleft[\cdot\fright]. \label{pf:operation-20}
\end{align}
Here Eqs.~\eqref{pf:operation-16} and \eqref{pf:operation-19} follow from Eq.~\eqref{pf:operation-13}; Eq.~\eqref{pf:operation-17} follows from Eq.~\eqref{pf:operation-12}; Eq.~\eqref{pf:operation-18} follows from Eq.~\eqref{eq:operation-5}; Eq.~\eqref{pf:operation-20} follows from Eq.~\eqref{pf:operation-15}.  This recovers Eq.~\eqref{eq:operation-2}.  The desired statement follows from Definition~\ref{def:operation}.
\end{proof}

\subsection{Petz recovery maps}
\label{sec:recovery}

For a channel $\ch{E}_{S\to S'}$ and a state $\sigma_S$, the Petz recovery map is defined as~\cite{petz1986SufficientSubalgebrasRelative,petz1988SufficiencyChannelsNeumann}
\begin{align}
	\ch{P}_{S'\to S}^{\ch{E},\sigma}\fleft[\cdot\fright]&\coloneq\sigma_S^\frac{1}{2}\ch{E}_{S'\to S}^\dagger\fleft[\left(\ch{E}_{S\to S'}\fleft[\sigma_S\fright]\right)^{-\frac{1}{2}}\left(\cdot\right)_{S'}\left(\ch{E}_{S\to S'}\fleft[\sigma_S\fright]\right)^{-\frac{1}{2}}\fright]\sigma_S^\frac{1}{2}. \label{eq:recovery-1}
\end{align}
For a real number $t\in\spa{R}$, the rotated Petz recovery map is defined as
\begin{align}
	\ch{P}_{S'\to S}^{\ch{E},\sigma,t}&\coloneq\ch{U}_{S\to S}^{\sigma,t}\circ\ch{P}_{S'\to S}^{\ch{E},\sigma}\circ\ch{U}_{S'\to S'}^{\ch{E}\fleft[\sigma\fright],-t}, \label{eq:rotation-1}
\end{align}
where
\begin{align}
	\ch{U}_{S\to S}^{\sigma,t}\fleft[\cdot\fright]&\equiv\sigma_S^{it}\left(\cdot\right)_S\sigma_S^{-it} \label{eq:rotation-2}
\end{align}
and $\ch{U}_{S'\to S'}^{\ch{E}\fleft[\sigma\fright],-t}$ are the unitary channels for time translation on $S$ and $S'$, respectively.  The rotated Petz recovery map has the following property:
\begin{align}
	\ch{P}_{S'\to S}^{\ch{E},\sigma,t}\fleft[\ch{E}_{S\to S'}\fleft[\sigma_S\fright]\fright]&=\sigma_S. \label{eq:perfect}
\end{align}

\begin{proposition}[Petz recovery map for a QFGO]
\label{prop:recovery}
For a QFGO $\ch{N}_{SM\to S'M'}$ and a conditional Gibbs state $(\gamma_S\otimes\rho_M,\gamma_S)$, the Petz recovery map $\ch{P}_{S'M'\to SM}^{\ch{N},\gamma\otimes\rho}$ is a QFGO.  Moreover, it satisfies that
\begin{align}
	\tr_S\circ\ch{P}_{S'M'\to SM}^{\ch{N},\gamma\otimes\rho}&=\tr_{S'}\otimes\ch{P}_{M'\to M}^{\ch{E},\rho}, \label{eq:recovery-2}\\
	\ch{P}_{S'M'\to SM}^{\ch{N},\gamma\otimes\rho}\fleft[\gamma'_{S'}\otimes\left(\cdot\right)_{M'}\fright]&=\gamma_S\otimes\ch{P}_{M'\to M}^{\ch{E},\rho}\fleft[\cdot\fright], \label{eq:recovery-3}
\end{align}
where
\begin{align}
	\ch{E}_{M\to M'}\fleft[\cdot\fright]&\equiv\ch{N}_{SM\to M'}\fleft[\gamma_S\otimes\left(\cdot\right)_M\fright]. \label{eq:recovery-4}
\end{align}
\end{proposition}

\begin{proof}
Due to the equivalence between statements~(i) and (iii) of Lemma~\ref{lem:operation-supp}, it suffices to prove Eqs.~\eqref{eq:recovery-2} and \eqref{eq:recovery-3}.  Taking the adjoint of Eq.~\eqref{eq:operation-5}, we have that
\begin{align}
	\tr_S\fleft[\gamma_S^\frac{1}{2}\ch{N}_{S'M'\to SM}^\dagger\fleft[\cdot\fright]\gamma_S^\frac{1}{2}\fright]&=\left(\tr_{S'}\otimes\ch{E}_{M'\to M}^\dagger\right)\fleft[\gamma_{S'}^{\prime\frac{1}{2}}\left(\cdot\right)_{S'M'}\gamma_{S'}^{\prime\frac{1}{2}}\fright]. \label{pf:recovery-1}
\end{align}
It follows from the definition of the Petz recovery map [Eq.~\eqref{eq:recovery-1}] that
\begin{align}
	&\left(\tr_S\circ\ch{P}_{S'M'\to SM}^{\ch{N},\gamma\otimes\rho}\right)\fleft[\cdot\fright] \notag\\
	&=\tr_S\fleft[\left(\gamma_S^\frac{1}{2}\otimes\rho_M^\frac{1}{2}\right)\ch{N}_{S'M'\to SM}^\dagger\fleft[\left(\ch{N}_{SM\to S'M'}\fleft[\gamma_S\otimes\rho_M\fright]\right)^{-\frac{1}{2}}\left(\cdot\right)_{S'M'}\left(\ch{N}_{SM\to S'M'}\fleft[\gamma_S\otimes\rho_M\fright]\right)^{-\frac{1}{2}}\fright]\left(\gamma_S^\frac{1}{2}\otimes\rho_M^\frac{1}{2}\right)\fright] \\
	&=\tr_S\fleft[\left(\gamma_S^\frac{1}{2}\otimes\rho_M^\frac{1}{2}\right)\ch{N}_{S'M'\to SM}^\dagger\fleft[\left(\gamma_{S'}^{\prime-\frac{1}{2}}\otimes\left(\ch{E}_{M\to M'}\fleft[\rho_M\fright]\right)^{-\frac{1}{2}}\right)\left(\cdot\right)_{S'M'}\left(\gamma_{S'}^{\prime-\frac{1}{2}}\otimes\left(\ch{E}_{M\to M'}\fleft[\rho_M\fright]\right)^{-\frac{1}{2}}\right)\fright]\left(\gamma_S^\frac{1}{2}\otimes\rho_M^\frac{1}{2}\right)\fright] \label{pf:recovery-2}\\
	&=\rho_M^\frac{1}{2}\left(\tr_{S'}\otimes\ch{E}_{M'\to M}^\dagger\right)\fleft[\left(\ch{E}_{M\to M'}\fleft[\rho_M\fright]\right)^{-\frac{1}{2}}\left(\cdot\right)_{S'M'}\left(\ch{E}_{M\to M'}\fleft[\rho_M\fright]\right)^{-\frac{1}{2}}\fright]\rho_M^\frac{1}{2} \label{pf:recovery-3}\\
	&=\tr_{S'}\otimes\ch{P}_{M'\to M}^{\ch{E},\rho}, \label{pf:recovery-4}
\end{align}
with $\gamma'_{S'}$ the Gibbs state of $S'$.  Here Eqs.~\eqref{pf:recovery-2} follows from Eqs.~\eqref{eq:operation-5} and \eqref{eq:recovery-4}; Eq.~\eqref{pf:recovery-3} follows from Eq.~\eqref{pf:recovery-1}; Eq.~\eqref{pf:recovery-4} follows from the definition of the Petz recovery map [Eq.~\eqref{eq:recovery-1}].  This recovers Eq.~\eqref{eq:recovery-2}.  Furthermore, taking the adjoint of Eq.~\eqref{eq:operation-4}, we have that
\begin{align}
	\ch{N}_{S'M'\to SM}^\dagger\fleft[\1_{S'}\otimes\left(\cdot\right)_{M'}\fright]&=\1_S\otimes\ch{E}_{M'\to M}^\dagger\fleft[\cdot\fright]. \label{pf:recovery-5}
\end{align}
It follows from the definition of the Petz recovery map [Eq.~\eqref{eq:recovery-1}] that
\begin{align}
	&\ch{P}_{S'M'\to SM}^{\ch{N},\gamma\otimes\rho}\fleft[\gamma'_{S'}\otimes\left(\cdot\right)_{M'}\fright] \notag\\
	&=\left(\gamma_S^\frac{1}{2}\otimes\rho_M^\frac{1}{2}\right)\ch{N}_{S'M'\to SM}^\dagger\fleft[\left(\ch{N}_{SM\to S'M'}\fleft[\gamma_S\otimes\rho_M\fright]\right)^{-\frac{1}{2}}\left(\gamma'_{S'}\otimes\left(\cdot\right)_{M'}\right)\left(\ch{N}_{SM\to S'M'}\fleft[\gamma_S\otimes\rho_M\fright]\right)^{-\frac{1}{2}}\fright]\left(\gamma_S^\frac{1}{2}\otimes\rho_M^\frac{1}{2}\right) \\
	&=\left(\gamma_S^\frac{1}{2}\otimes\rho_M^\frac{1}{2}\right)\ch{N}_{S'M'\to SM}^\dagger\fleft[\left(\gamma_{S'}^{\prime-\frac{1}{2}}\otimes\left(\ch{E}_{M\to M'}\fleft[\rho_M\fright]\right)^{-\frac{1}{2}}\right)\left(\gamma'_{S'}\otimes\left(\cdot\right)_{M'}\right)\left(\gamma_{S'}^{\prime-\frac{1}{2}}\otimes\left(\ch{E}_{M\to M'}\fleft[\rho_M\fright]\right)^{-\frac{1}{2}}\right)\fright]\left(\gamma_S^\frac{1}{2}\otimes\rho_M^\frac{1}{2}\right) \label{pf:recovery-6}\\
	&=\left(\gamma_S^\frac{1}{2}\otimes\rho_M^\frac{1}{2}\right)\ch{N}_{S'M'\to SM}^\dagger\fleft[\1_{S'}\otimes\left(\ch{E}_{M\to M'}\fleft[\rho_M\fright]\right)^{-\frac{1}{2}}\left(\cdot\right)_{M'}\left(\ch{E}_{M\to M'}\fleft[\rho_M\fright]\right)^{-\frac{1}{2}}\fright]\left(\gamma_S^\frac{1}{2}\otimes\rho_M^\frac{1}{2}\right) \\
	&=\left(\gamma_S^\frac{1}{2}\otimes\rho_M^\frac{1}{2}\right)\left(\1_S\otimes\ch{E}_{M'\to M}^\dagger\fleft[\left(\ch{E}_{M\to M'}\fleft[\rho_M\fright]\right)^{-\frac{1}{2}}\left(\cdot\right)_{M'}\left(\ch{E}_{M\to M'}\fleft[\rho_M\fright]\right)^{-\frac{1}{2}}\fright]\right)\left(\gamma_S^\frac{1}{2}\otimes\rho_M^\frac{1}{2}\right) \label{pf:recovery-7}\\
	&=\gamma_S\otimes\ch{P}_{M'\to M}^{\ch{E},\rho}\fleft[\cdot\fright]. \label{pf:recovery-8}
\end{align}
Here Eqs.~\eqref{pf:recovery-6} follows from Eqs.~\eqref{eq:operation-5} and \eqref{eq:recovery-4}; Eq.~\eqref{pf:recovery-7} follows from Eq.~\eqref{pf:recovery-5}; Eq.~\eqref{pf:recovery-8} follows from the definition of the Petz recovery map [Eq.~\eqref{eq:recovery-1}].  This recovers Eq.~\eqref{eq:recovery-3}.
\end{proof}

\begin{corollary}[Rotated Petz recovery map for a QFGO]
\label{cor:recovery}
For a QFGO $\ch{N}_{SM\to S'M'}$, a conditional Gibbs state $(\gamma_S\otimes\rho_M,\gamma_S)$, and a real number $t\in\spa{R}$, the rotated Petz recovery map $\ch{P}_{S'M'\to SM}^{\ch{N},\gamma\otimes\rho,t}$ is a QFGO.  Moreover, it satisfies that
\begin{align}
	\tr_S\circ\ch{P}_{S'M'\to SM}^{\ch{N},\gamma\otimes\rho,t}&=\tr_{S'}\otimes\ch{P}_{M'\to M}^{\ch{E},\rho,t}, \label{eq:rotation-3}\\
	\ch{P}_{S'M'\to SM}^{\ch{N},\gamma\otimes\rho,t}\fleft[\gamma'_{S'}\otimes\left(\cdot\right)_{M'}\fright]&=\gamma_S\otimes\ch{P}_{M'\to M}^{\ch{E},\rho,t}\fleft[\cdot\fright], \label{eq:rotation-4}
\end{align}
with $\ch{E}_{M\to M'}$ defined in Eq.~\eqref{eq:recovery-4}.
\end{corollary}

\begin{proof}
It follows from the definition of the rotated Petz recovery map [Eq.~\eqref{eq:rotation-1}] that
\begin{align}
	\ch{P}_{S'M'\to SM}^{\ch{N},\gamma\otimes\rho,t}&=\ch{U}_{SM\to SM}^{\gamma\otimes\rho,t}\circ\ch{P}_{S'M'\to SM}^{\ch{N},\gamma\otimes\rho}\circ\ch{U}_{S'M'\to S'M'}^{\ch{N}\fleft[\gamma\otimes\rho\fright],-t} \\
	&=\left(\ch{U}_{S\to S}^{\gamma,t}\otimes\ch{U}_{M\to M}^{\rho,t}\right)\circ\ch{P}_{S'M'\to SM}^{\ch{N},\gamma\otimes\rho}\circ\left(\ch{U}_{S'\to S'}^{\gamma',-t}\otimes\ch{U}_{M'\to M'}^{\ch{E}\fleft[\rho\fright],-t}\right), \label{pf:rotation-1}
\end{align}
with $\gamma'_{S'}$ the Gibbs state of $S'$.  Here Eq.~\eqref{pf:rotation-1} follows from Eq.~\eqref{eq:rotation-2}.  This implies that
\begin{align}
	\tr_S\circ\ch{P}_{S'M'\to SM}^{\ch{N},\gamma\otimes\rho,t}&=\tr_S\circ\left(\ch{U}_{S\to S}^{\gamma,t}\otimes\ch{U}_{M\to M}^{\rho,t}\right)\circ\ch{P}_{S'M'\to SM}^{\ch{N},\gamma\otimes\rho}\circ\left(\ch{U}_{S'\to S'}^{\gamma',-t}\otimes\ch{U}_{M'\to M'}^{\ch{E}\fleft[\rho\fright],-t}\right) \\
	&=\left(\tr_S\otimes\ch{U}_{M\to M}^{\rho,t}\right)\circ\ch{P}_{S'M'\to SM}^{\ch{N},\gamma\otimes\rho}\circ\left(\ch{U}_{S'\to S'}^{\gamma',-t}\otimes\ch{U}_{M'\to M'}^{\ch{E}\fleft[\rho\fright],-t}\right) \label{pf:rotation-2}\\
	&=\ch{U}_{M\to M}^{\rho,t}\circ\left(\tr_{S'}\otimes\ch{P}_{M'\to M}^{\ch{E},\rho}\right)\circ\left(\ch{U}_{S'\to S'}^{\gamma',-t}\otimes\ch{U}_{M'\to M'}^{\ch{E}\fleft[\rho\fright],-t}\right) \label{pf:rotation-3}\\
	&=\ch{U}_{M\to M}^{\rho,t}\circ\ch{P}_{M'\to M}^{\ch{E},\rho}\circ\ch{U}_{M'\to M'}^{\ch{E}\fleft[\rho\fright],-t} \label{pf:rotation-4}\\
	&=\ch{P}_{M'\to M}^{\ch{E},\rho,t}. \label{pf:rotation-5}
\end{align}
Here Eqs.~\eqref{pf:rotation-2} and \eqref{pf:rotation-4} follow from the facts that $\tr_S\circ\ch{U}_{S\to S}^{\gamma,t}=\tr_S$ and $\tr_{S'}\circ\ch{U}_{S'\to S'}^{\gamma',-t}=\tr_{S'}$ due to Eq.~\eqref{eq:rotation-2}; Eq.~\eqref{pf:rotation-3} follows from Eq.~\eqref{eq:recovery-2}; Eq.~\eqref{pf:rotation-5} follows from the definition of the rotated Petz recovery map [Eq.~\eqref{eq:rotation-1}].  This recovers Eq.~\eqref{eq:rotation-3}.  Furthermore, it follows from Eq.~\eqref{pf:rotation-1} that
\begin{align}
	\ch{P}_{S'M'\to SM}^{\ch{N},\gamma\otimes\rho,t}\fleft[\gamma'_{S'}\otimes\left(\cdot\right)_{M'}\fright]&=\left(\left(\ch{U}_{S\to S}^{\gamma,t}\otimes\ch{U}_{M\to M}^{\rho,t}\right)\circ\ch{P}_{S'M'\to SM}^{\ch{N},\gamma\otimes\rho}\circ\left(\ch{U}_{S'\to S'}^{\gamma',-t}\otimes\ch{U}_{M'\to M'}^{\ch{E}\fleft[\rho\fright],-t}\right)\right)\fleft[\gamma'_{S'}\otimes\left(\cdot\right)_{M'}\fright] \\
	&=\left(\left(\ch{U}_{S\to S}^{\gamma,t}\otimes\ch{U}_{M\to M}^{\rho,t}\right)\circ\ch{P}_{S'M'\to SM}^{\ch{N},\gamma\otimes\rho}\right)\fleft[\gamma'_{S'}\otimes\ch{U}_{M'\to M'}^{\ch{E}\fleft[\rho\fright],-t}\fleft[\cdot\fright]\fright] \label{pf:rotation-6}\\
	&=\left(\ch{U}_{S\to S}^{\gamma,t}\otimes\ch{U}_{M\to M}^{\rho,t}\right)\fleft[\gamma_S\otimes\left(\ch{P}_{M'\to M}^{\ch{N},\rho}\circ\ch{U}_{M'\to M'}^{\ch{E}\fleft[\rho\fright],-t}\right)\fleft[\cdot\fright]\fright] \label{pf:rotation-7}\\
	&=\gamma_S\otimes\left(\ch{U}_{M\to M}^{\rho,t}\circ\ch{P}_{M'\to M}^{\ch{N},\rho}\circ\ch{U}_{M'\to M'}^{\ch{E}\fleft[\rho\fright],-t}\right)\fleft[\cdot\fright] \label{pf:rotation-8}\\
	&=\gamma_S\otimes\ch{P}_{M'\to M}^{\ch{E},\rho,t}\fleft[\cdot\fright]. \label{pf:rotation-9}
\end{align}
Here Eqs.~\eqref{pf:rotation-6} and \eqref{pf:rotation-8} follow from the facts that $\ch{U}_{S'\to S'}^{\gamma',-t}[\gamma'_{S'}]=\gamma'_{S'}$ and $\ch{U}_{S\to S}^{\gamma,t}[\gamma_S]=\gamma_S$ due to Eq.~\eqref{eq:rotation-2}; Eq.~\eqref{pf:rotation-7} follows from Eq.~\eqref{eq:recovery-3}; Eq.~\eqref{pf:rotation-9} follows from the definition of the rotated Petz recovery map [Eq.~\eqref{eq:rotation-1}].  This recovers Eq.~\eqref{eq:rotation-4}.
\end{proof}

\begin{remark}[QFGOs as coarse graining]
\label{rem:recovery}
Consider a system and a memory pictured by two observers, Alice and Bob, with Bob's picture being a coarse-grained version of Alice's.  The system and the register are represented by $S$ and $M$, respectively, in Alice's picture and by $S'$ and $M'$ in Bob's picture.  The coarse graining, which translates Alice's picture to Bob's, is represented by a channel $\ch{N}_{SM\to S'M'}$.  Assuming that the Gibbs states $\gamma_S$ and $\gamma'_{S'}$ are common knowledge to both Alice and Bob, a minimal requirement for the coarse graining is to maintain a consistent description of thermalization between the two pictures.  Specifically, it should make no difference whether one (i) observes the system to thermalize in Alice's picture and then shifts to Bob's picture, or (ii) shifts to Bob's picture first and then observes the system to thermalize there.  This implies that $\ch{N}_{SM\to S'M'}$ is thermalization covariant on the system [see Eq.~\eqref{eq:operation-3}] and is thus a QFGO (see Lemma~\ref{lem:operation-supp}).  Another interesting question is how Bob should reconstruct Alice's picture from his own coarse-grained picture.  Following a similar idea to Ref.~\cite[Sec.~III~E]{faist2018FundamentalWorkCost}, one can argue that a suitable choice is to apply the Petz recovery map $\ch{P}_{S'M'\to SM}^{\ch{N},\gamma\otimes\rho}$, with $\rho_M$ a prior of the memory's marginal state in Alice's picture.  The fact that $\ch{P}_{S'M'\to SM}^{\ch{N},\gamma\otimes\rho}$ is also a QFGO (see Lemma~\ref{prop:recovery}) ensures that Bob's reconstruction likewise maintains a consistent description of thermalization between Alice's and Bob's pictures.
\end{remark}

\subsection{Conditional-athermality monotones}
\label{sec:monotone}

\begin{definition}[Conditional-athermality monotone]
\label{def:monotone}
A real-valued function $(\rho_{SM},\gamma_S)\mapsto\g{I}(\rho_{SM}\|\gamma_S)$ is called a \emph{conditional-athermality monotone} whenever it is monotonically nonincreasing under QFGOs: for every state $(\rho_{SM},\gamma_S)$ and QFGO $\ch{N}_{SM\to S'M'}$
\begin{align}
	\g{I}\fleft(\rho_{SM}\middle\|\gamma_S\fright)&\geq\g{I}\fleft(\ch{N}_{SM\to S'M'}\fleft[\rho_{SM}\fright]\middle\|\gamma'_{S'}\fright),
\end{align}
with $\gamma'_{S'}$ the Gibbs state of $S'$.
\end{definition}

A real-valued function $\g{D}\colon(\rho,\sigma)\mapsto\g{D}(\rho\|\sigma)$ is said to satisfy the data-processing inequality whenever
\begin{align}
	\g{D}\fleft(\rho\middle\|\sigma\fright)&\geq\g{D}\fleft(\ch{E}\fleft[\rho\fright]\middle\|\ch{E}\fleft[\sigma\fright]\fright)
\end{align}
for every state $\rho$, state $\sigma$, and channel $\ch{E}$.  

For a real-valued function $\g{D}$ satisfying the data-processing inequality and a state $(\rho_{SM},\gamma_S)$, we define the \emph{unoptimized generalized mutual information (GMI)} based on $\g{D}$ as
\begin{align}
	\g{I}^\uparrow\fleft(\rho_{SM}\middle\|\gamma_S\fright)&\coloneq\g{D}\fleft(\rho_{SM}\middle\|\gamma_S\otimes\rho_M\fright). \label{eq:unoptimized-information}
\end{align}
We define the \emph{optimized GMI} based on $\g{D}$ as
\begin{align}
	\g{I}^\downarrow\fleft(\rho_{SM}\middle\|\gamma_S\fright)&\coloneq\inf_{\sigma_M}\g{D}\fleft(\rho_{SM}\middle\|\gamma_S\otimes\sigma_M\fright), \label{eq:optimized-information}
\end{align}
where the infimization is over every state $\sigma_M$.

\begin{lemma}[GMIs as conditional-athermality monotones]
\label{lem:monotone}
For a real-valued function $\g{D}$ satisfying the data-processing inequality, the unoptimized GMI $\g{I}^\uparrow$ and the optimized GMI $\g{I}^\downarrow$ based on $\g{D}$ are both conditional-athermality monotones.
\end{lemma}

\begin{proof}
Let $(\rho_{SM},\gamma_S)$ be a state and $\ch{N}_{SM\to S'M'}$ a QFGO.  It follows from the definition of the unoptimized GMI [Eq.~\eqref{eq:unoptimized-information}] that
\begin{align}
	\g{I}^\uparrow\fleft(\rho_{SM}\middle\|\gamma_S\fright)&=\g{D}\fleft(\rho_{SM}\middle\|\gamma_S\otimes\rho_M\fright) \\
	&=\g{D}\fleft(\rho_{SM}\middle\|\ch{R}_{S\to S}^\gamma\fleft[\rho_{SM}\fright]\fright) \label{pf:monotone-1}\\
	&\geq\g{D}\fleft(\ch{N}_{SM\to S'M'}\fleft[\rho_{SM}\fright]\middle\|\left(\ch{N}_{SM\to S'M'}\circ\ch{R}_{S\to S}^\gamma\right)\fleft[\rho_{SM}\fright]\fright) \label{pf:monotone-2}\\
	&=\g{D}\fleft(\ch{N}_{SM\to S'M'}\fleft[\rho_{SM}\fright]\middle\|\left(\ch{R}_{S'\to S'}^{\gamma'}\circ\ch{N}_{SM\to S'M'}\right)\fleft[\rho_{SM}\fright]\fright) \label{pf:monotone-3}\\
	&=\g{D}\fleft(\ch{N}_{SM\to S'M'}\fleft[\rho_{SM}\fright]\middle\|\gamma'_{S'}\otimes\ch{N}_{SM\to M'}\fleft[\rho_{SM}\fright]\fright) \label{pf:monotone-4}\\
	&=\g{I}^\uparrow\fleft(\ch{N}_{SM\to S'M'}\fleft[\rho_{SM}\fright]\middle\|\gamma'_{S'}\fright). \label{pf:monotone-5}
\end{align}
Here Eqs.~\eqref{pf:monotone-1} and \eqref{pf:monotone-4} follows from the definition of thermalization [Eq.~\eqref{eq:thermalization}]; Eq.~\eqref{pf:monotone-2} follows from the data-processing inequality for $\g{D}$; Eq.~\eqref{pf:monotone-3} follows from Lemma~\ref{lem:operation-supp}; Eq.~\eqref{pf:monotone-5} follows from the definition of the unoptimized GMI [Eq.~\eqref{eq:unoptimized-information}].  This shows that $\g{I}^\uparrow$ is a conditional-athermality monotone (Definition~\ref{def:monotone}).  Furthermore, it follows from the definition of the optimized GMI [Eq.~\eqref{eq:optimized-information}] that
\begin{align}
	\g{I}^\downarrow\fleft(\rho_{SM}\middle\|\gamma_S\fright)&=\inf_{\sigma_M}\g{D}\fleft(\rho_{SM}\middle\|\gamma_S\otimes\sigma_M\fright) \\
	&=\inf_{\sigma_M}\g{D}\fleft(\rho_{SM}\middle\|\ch{R}_{S\to S}^\gamma\fleft[\gamma_S\otimes\sigma_M\fright]\fright) \label{pf:monotone-6}\\
	&\geq\inf_{\sigma_M}\g{D}\fleft(\ch{N}_{SM\to S'M'}\fleft[\rho_{SM}\fright]\middle\|\left(\ch{N}_{SM\to S'M'}\circ\ch{R}_{S\to S}^\gamma\right)\fleft[\gamma_S\otimes\sigma_M\fright]\fright) \label{pf:monotone-7}\\
	&=\inf_{\sigma_M}\g{D}\fleft(\ch{N}_{SM\to S'M'}\fleft[\rho_{SM}\fright]\middle\|\left(\ch{R}_{S'\to S'}^{\gamma'}\circ\ch{N}_{SM\to S'M'}\right)\fleft[\gamma_S\otimes\sigma_M\fright]\fright) \label{pf:monotone-8}\\
	&=\inf_{\sigma_M}\g{D}\fleft(\ch{N}_{SM\to S'M'}\fleft[\rho_{SM}\fright]\middle\|\gamma'_{S'}\otimes\ch{N}_{SM\to M'}\fleft[\gamma_S\otimes\sigma_M\fright]\fright) \label{pf:monotone-9}\\
	&\geq\inf_{\sigma'_{M'}}\g{D}\fleft(\ch{N}_{SM\to S'M'}\fleft[\rho_{SM}\fright]\middle\|\gamma'_{S'}\otimes\sigma'_{M'}\fright) \\
	&=\g{I}^\downarrow\fleft(\ch{N}_{SM\to S'M'}\fleft[\rho_{SM}\fright]\middle\|\gamma'_{S'}\fright), \label{pf:monotone-10}
\end{align}
where the infimization is over every state $\sigma_M$ and state $\sigma'_{M'}$.  Here Eqs.~\eqref{pf:monotone-6} and \eqref{pf:monotone-9} follows from the definition of thermalization [Eq.~\eqref{eq:thermalization}]; Eq.~\eqref{pf:monotone-7} follows from the data-processing inequality for $\g{D}$; Eq.~\eqref{pf:monotone-8} follows from Lemma~\ref{lem:operation-supp}; Eq.~\eqref{pf:monotone-10} follows from the definition of the unoptimized GMI [Eq.~\eqref{eq:optimized-information}].  This shows that $\g{I}^\downarrow$ is a conditional-athermality monotone (Definition~\ref{def:monotone}).
\end{proof}

\begin{remark}[Relation to resource-destroying maps]
Consider a resource theory with a set $\s{F}$ of free states and a resource-destroying map $\aleph$, which maps every state to a free state and maps every free state to itself~\cite{liu2017ResourceDestroyingMaps}.  It has been shown in Ref.~\cite{liu2017ResourceDestroyingMaps} that, if every free operation $\ch{N}$ of the resource theory acts covariantly with respect to the resource-destroying map $\aleph$ in the sense that $\ch{N}\circ\aleph=\aleph\circ\ch{N}$, then, for a real-valued function $\g{D}\colon(\rho,\sigma)\mapsto\g{D}(\rho\|\sigma)$ satisfying the data-processing inequality, both of the following functions are resource monotones (i.e., monotonically nonincreasing under every free operation):
\begin{align}
	\rho&\mapsto\g{D}\fleft(\rho\middle\|\aleph\fleft(\rho\fright)\fright), \label{eq:destroy-1}\\
	\rho&\mapsto\inf_{\sigma\in\s{F}}\g{D}\fleft(\rho\middle\|\sigma\fright). \label{eq:destroy-2}
\end{align}
In our resource theory of conditional athermality, a naturally arising resource-destroying map is thermalization of the system [see Eq.~\eqref{eq:thermalization}].  The functions defined in Eqs.~\eqref{eq:destroy-1} and \eqref{eq:destroy-2} then manifest precisely as the unoptimized and the optimized GMIs based on $\g{D}$ [Eqs.~\eqref{eq:unoptimized-information} and \eqref{eq:optimized-information}] in our resource theory.  Since QFGOs are thermalization covariant on the system (see Lemma~\ref{lem:operation-supp}), the GMIs are thus conditional-athermality monotones. 
\end{remark}

\section{Generalized mutual informations (GMI\lowercase{s})}
\label{sec:information}

\subsection{Definitions}
\label{sec:definition}

For two subnormalized states (i.e., positive semidefinite operators of trace less than or equal to $1$) $\tau$ and $\rho$, the \emph{generalized trace distance} is defined as~\cite{tomamichel2010DualitySmoothMin}
\begin{align}
	\delta\fleft(\tau,\rho\fright)&\coloneq\frac{1}{2}\left(\left\lVert\tau-\rho\right\rVert_1+\left\lvert\tr\fleft[\tau\fright]-\tr\fleft[\rho\fright]\right\rvert\right) \label{eq:distance}\\
	&=\inf_{\varepsilon,\lambda,\mu}\left\{\varepsilon\colon\tr\fleft[\lambda\fright]\leq\varepsilon,\;\tr\fleft[\mu\fright]\leq\varepsilon,\;\lambda-\mu=\tau-\rho\right\} \label{eq:primal}\\
	&=\sup_{\Lambda,t}\left\{\tr\fleft[\Lambda\left(\tau-\rho\right)\fright]+t\left(\tr\fleft[\rho\fright]-\tr\fleft[\tau\fright]\right)\colon0\leq\Lambda\leq\1,\;0\leq t\leq 1\right\}, \label{eq:dual}
\end{align}
where the infimization is over every real number $\varepsilon$, positive semidefinite operator $\lambda$, and positive semidefinite operator $\mu$.

For a subnormalized state $\rho$ and a positive semidefinite operator $\sigma$, the sandwiched Rényi relative entropy for $\alpha\in[1/2,1)\cup(1,\infty)$ is defined as~\cite{muller-lennert2013QuantumRenyiEntropies,wilde2014StrongConverseClassical}
\begin{align}
	\sw{D}_\alpha\fleft(\rho\middle\|\sigma\fright)&\coloneq\frac{1}{\alpha-1}\ln\tr\fleft[\left(\sigma^\frac{1-\alpha}{2\alpha}\rho\sigma^\frac{1-\alpha}{2\alpha}\right)^\alpha\fright]. \label{eq:sandwiched-relative}
\end{align}
The Petz--Rényi relative entropy for $\alpha\in[0,1)\cup(1,2]$ is defined as~\cite{petz1985QuasientropiesStatesNeumann,petz1986QuasientropiesFiniteQuantum}
\begin{align}
	\pz{D}_\alpha\fleft(\rho\middle\|\sigma\fright)&\coloneq\frac{1}{\alpha-1}\ln\tr\fleft[\rho^\alpha\sigma^{1-\alpha}\fright]. \label{eq:petz-relative}
\end{align}
The \emph{max-}~\cite{datta2009MinMaxrelativeEntropies}, \emph{(Petz-type) min-}~\cite{datta2009MinMaxrelativeEntropies}, and \emph{Umegaki}~\cite{umegaki1962ConditionalExpectationOperator} \emph{relative entropies} are defined, respectively, as
\begin{align}
	D_{\max}\fleft(\rho\middle\|\sigma\fright)&\coloneq\ln\left\lVert\sigma^{-\frac{1}{2}}\rho\sigma^{-\frac{1}{2}}\right\rVert, \label{eq:max-relative}\\
	D_{\min}\fleft(\rho\middle\|\sigma\fright)&\coloneq\ln\tr\fleft[\rho^0\sigma\fright], \label{eq:min-relative}\\
	D\fleft(\rho\middle\|\sigma\fright)&\coloneq\tr\fleft[\rho\ln\rho\fright]-\tr\fleft[\rho\ln\sigma\fright]. \label{eq:umegaki-relative}
\end{align}

For a smoothing parameter $\varepsilon\in[0,1]$, the \emph{smoothed max-relative entropy} is defined as~\cite{tomamichel2009FullyQuantumAsymptotic}
\begin{align}
	D_{\max}^\varepsilon\fleft(\rho\middle\|\sigma\fright)&\coloneq\inf_{\tau}\left\{D_{\max}\fleft(\tau\middle\|\sigma\fright)\colon\delta\fleft(\tau,\rho\fright)\leq\varepsilon\right\}, \label{eq:smoothed-max-relative}
\end{align}
where the infimization is over every subnormalized state $\tau$.  The \emph{hypothesis-testing relative entropy} is defined as~\cite{buscemi2010QuantumCapacityChannels,wang2012OneshotClassicalquantumCapacity}
\begin{align}
	D_{\min}^\varepsilon\fleft(\rho\middle\|\sigma\fright)&\coloneq\sup_{\Lambda}\left\{-\ln\tr\fleft[\Lambda\sigma\fright]\colon\tr\fleft[\Lambda\rho\fright]\geq1-\varepsilon,\;0\leq\Lambda\leq\1\right\}. \label{eq:hypothesis-relative}
\end{align}

For a subnormalized state $\rho_{SM}$ and a positive semidefinite operator $\gamma_S$, the \emph{(unoptimized) max-GMI} is defined as~\cite{hayashi2016CorrelationDetectionOperational}
\begin{align}
	I_{\max}^\uparrow\fleft(\rho_{SM}\middle\|\gamma_S\fright)&\coloneq D_{\max}\fleft(\rho_{SM}\middle\|\gamma_S\otimes\rho_M\fright) \\
	&=\ln\left\lVert\left(\gamma_S^{-\frac{1}{2}}\otimes\rho_M^{-\frac{1}{2}}\right)\rho_{SM}\left(\gamma_S^{-\frac{1}{2}}\otimes\rho_M^{-\frac{1}{2}}\right)\right\rVert_\infty. \label{eq:max-information}
\end{align}
The \emph{optimized max-GMI} is defined as~\cite{hayashi2016CorrelationDetectionOperational}
\begin{align}
	I_{\max}^\downarrow\fleft(\rho_{SM}\middle\|\gamma_S\fright)&\coloneq\inf_{\sigma_M}D_{\max}\fleft(\rho_{SM}\middle\|\gamma_S\otimes\sigma_M\fright) \\
	&=\inf_{t,\sigma_M}\left\{\ln t\colon\rho_{SM}\leq t\gamma_S\otimes\sigma_M\right\}, \label{eq:optimized-max-information}
\end{align}
with the infimizations over every real number $t$ and state $\sigma_M$.  The \emph{(optimized) (Petz-type) min-GMI} is defined as~\cite{hayashi2016CorrelationDetectionOperational}
\begin{align}
	I_{\min}^\downarrow\fleft(\rho_{SM}\middle\|\gamma_S\fright)&\coloneq\inf_{\sigma_M}D_{\min}\fleft(\rho_{SM}\middle\|\gamma_S\otimes\sigma_M\fright) \\
	&=-\ln\left\lVert\tr_S\fleft[\rho_{SM}^0\gamma_S\fright]\right\rVert_\infty, \label{eq:min-information}
\end{align}
with the infimization over every state $\sigma_M$.  The \emph{Umegaki GMI} is defined as~\cite{hayashi2016CorrelationDetectionOperational,morris2019AssistedWorkDistillation}
\begin{align}
	I\fleft(\rho_{SM}\middle\|\gamma_S\fright)&\coloneq D\fleft(\rho_{SM}\middle\|\gamma_S\otimes\rho_M\fright) \label{eq:unoptimized-umegaki-information}\\
	&=\inf_{\sigma_M}D\fleft(\rho_{SM}\middle\|\gamma_S\otimes\sigma_M\fright) \label{eq:optimized-umegaki-information}\\
	&=\tr\fleft[\rho_{SM}\ln\rho_{SM}\fright]-\tr\fleft[\rho_S\ln\gamma_S\fright]-\tr\fleft[\rho_M\ln\rho_M\fright], \label{eq:umegaki-information}
\end{align}
with the infimization over every state $\sigma_M$.  We define the \emph{smoothed (unoptimized) max-GMI} as
\begin{align}
	I_{\max}^{\uparrow,\varepsilon}\fleft(\rho_{SM}\middle\|\gamma_S\fright)&\coloneq\inf_{t,\tau_{SM},\sigma_M}\left\{\ln t\colon\tau_{SM}\leq t\gamma_S\otimes\sigma_M,\;\tau_M\leq\sigma_M,\;\delta\fleft(\tau_{SM},\rho_{SM}\fright)\leq\varepsilon\right\}, \label{eq:smoothed-max-information}
\end{align}
where the infimization is over every real number $t$, subnormalized state $\tau_{SM}$, and state $\sigma_M$.  We define the \emph{smoothed optimized max-GMI} as
\begin{align}
	I_{\max}^{\downarrow,\varepsilon}\fleft(\rho_{SM}\middle\|\gamma_S\fright)&\coloneq\inf_{\sigma_M}D_{\max}^\varepsilon\fleft(\rho_{SM}\middle\|\gamma_S\otimes\sigma_M\fright) \\
	&=\inf_{\tau_{SM}}\left\{I_{\max}^\downarrow\fleft(\tau_{SM}\middle\|\gamma_S\fright)\colon\delta\fleft(\tau_{SM},\rho_{SM}\fright)\leq\varepsilon\right\}, \label{eq:smoothed-optimized-max-information}
\end{align}
where the infimizations are over every state $\sigma_M$ and subnormalized state $\tau_{SM}$.  We define the \emph{(operator-) smoothed (optimized) (Petz-type) min-GMI} as
\begin{align}
	I_{\min}^{\downarrow,\varepsilon}\fleft(\rho_{SM}\middle\|\gamma_S\fright)&\coloneq\inf_{\sigma_M}D_{\min}^\varepsilon\fleft(\rho_{SM}\middle\|\gamma_S\otimes\sigma_M\fright) \label{eq:smoothed-optimized-min-information}\\
	&=\inf_{\sigma_M}\sup_{\Lambda_{SM}}\left\{-\ln\tr\fleft[\Lambda_{SM}\left(\gamma_S\otimes\sigma_M\right)\fright]\colon\tr\fleft[\Lambda_{SM}\rho_{SM}\fright]\geq1-\varepsilon,\;0\leq\Lambda_{SM}\leq\1_{SM}\right\} \label{eq:minimax}\\
	&=\sup_{\Lambda_{SM}}\left\{-\ln\left\lVert\tr_S\fleft[\Lambda_{SM}\gamma_S\fright]\right\rVert_\infty\colon\tr\fleft[\Lambda_{SM}\rho_{SM}\fright]\geq1-\varepsilon,\;0\leq\Lambda_{SM}\leq\1_{SM}\right\}, \label{eq:smoothed-min-information}
\end{align}
with the infimizations over every state $\sigma_M$.  Here Eq.~\eqref{eq:smoothed-min-information} follows from the observation that the infimization and the supremization in Eq.~\eqref{eq:minimax} are exchangeable in order due to the von Neumann minimax theorem~\cite{v.neumann1928ZurTheorieGesellschaftsspiele}, which applies because the feasible regions of $\sigma_M$ and $\Lambda_{SM}$ are compact and convex and the objective function is bilinear up to logarithm.

\begin{remark}[Efficient computability of the smoothed max- and min-GMIs]
\label{rem:efficient}
It follows from the definition of the smoothed max-GMI [Eq.~\eqref{eq:smoothed-max-information}] that
\begin{align}
	I_{\max}^{\uparrow,\varepsilon}\fleft(\rho_{SM}\middle\|\gamma_S\fright)&=\inf_{t,\tau_{SM},\sigma_M}\left\{\ln t\colon\tau_{SM}\leq t\gamma_S\otimes\sigma_M,\;\tau_M\leq\sigma_M,\;\delta\fleft(\tau_{SM},\rho_{SM}\fright)\leq\varepsilon\right\} \\
	&=\inf_{t,\tau_{SM},\sigma_M,\lambda_{SM},\mu_{SM}}\left\{\ln t\colon\tau_{SM}\leq t\gamma_S\otimes\sigma_M,\;\tau_M\leq\sigma_M,\;\tr\fleft[\lambda_{SM}\fright]\leq\varepsilon,\;\tr\fleft[\mu_{SM}\fright]\leq\varepsilon,\right. \notag\\
	&\qquad\left.\lambda_{SM}-\mu_{SM}=\tau_{SM}-\rho_{SM}\right\}. \label{eq:efficient-1}
\end{align}
where the infimizations are over every real number $t$, subnormalized state $\tau_{SM}$, state $\sigma_M$, positive semidefinite operator $\lambda_{SM}$, and positive semidefinite operator $\mu_{SM}$.  Here Eq.~\eqref{eq:efficient-1} follows from the variational expression of the generalized trace distance [Eq.~\eqref{eq:primal}].  While Eq.~\eqref{eq:efficient-1} itself is not a semidefinite program (SDP), it implies that whether $I_{\max}^{\uparrow,\varepsilon}(\rho_{SM}\|\gamma_S)\leq\ln t$ for a fixed real number $t$ can be decided efficiently from the feasibility of an SDP.  The smoothed max-GMI is thus efficiently computable.  Furthermore, it follows from the definition of the smoothed min-GMI [Eq.~\eqref{eq:smoothed-min-information}] that
\begin{align}
	I_{\min}^{\downarrow,\varepsilon}\fleft(\rho_{SM}\middle\|\gamma_S\fright)&\coloneq\inf_{\sigma_M}D_{\min}^\varepsilon\fleft(\rho_{SM}\middle\|\gamma_S\otimes\sigma_M\fright) \\
	&=\inf_{\sigma_M,t,\lambda_{SM}}\left\{-\ln\left(t\left(1-\varepsilon\right)-\tr\fleft[\lambda_{SM}\fright]\right)\colon t\rho_{SM}\leq\gamma_S\otimes\sigma_M+\lambda_{SM}\right\}, \label{eq:efficient-2}
\end{align}
where the infimizations are over every state $\sigma_M$, nonnegative real number $t$, and positive semidefinite operator $\lambda_{SM}$.  Here Eq.~\eqref{eq:efficient-2} follows from the dual SDP of the hypothesis-testing relative entropy~\cite[Proposition~7.66]{khatri2024PrinciplesQuantumCommunication}.  Since Eq.~\eqref{eq:efficient-2} is an SDP, the smoothed min-GMI is efficiently computable.
\end{remark}

\subsection{Properties of the smoothed max- and min-GMIs}
\label{sec:properties}

We now examine properties of the smoothed max- and min-GMIs in the context of conditional athermality.  To this end, we restrict our attention to \emph{normalized} actual states and Gibbs states and as before refer to such pairs as the states of the system and the memory.

\begin{lemma}[Alternative expression for the smoothed max-GMI; restatement of Eq.~\eqref{eq:alternative}]
\label{lem:alternative}
For a state $(\rho_{SM},\gamma_S)$ and a smoothing parameter $\varepsilon\in[0,1]$,
\begin{align}
	I_{\max}^{\uparrow,\varepsilon}\fleft(\rho_{SM}\middle\|\gamma_S\fright)&=\inf_{S_1,\omega_{S_1SM}}\left\{I_{\max}^\uparrow\fleft(\omega_{S_1SM}\middle\|\1_{S_1}\otimes\gamma_S\fright)\colon\delta\fleft(\omega_{S_1SM},\op{0}{0}_{S_1}\otimes\rho_{SM}\fright)\leq\varepsilon\right\},
\end{align}
where the infimization is over every battery $S_1$ and (normalized) state $\omega_{S_1SM}$.
\end{lemma}

\begin{proof}[Proof of l.h.s.\ $\leq$ r.h.s.]
Let $S_1$ be a battery and $\omega_{S_1SM}$ a state such that
\begin{align}
	\delta\fleft(\omega_{S_1SM},\op{0}{0}_{S_1}\otimes\rho_{SM}\fright)&\leq\varepsilon. \label{pf:alternative-1}
\end{align}
Define the following real number, subnormalized state, and state, respectively:
\begin{align}
	t&\coloneq e^{I_{\max}^\uparrow\fleft(\omega_{S_1SM}\middle\|\1_{S_1}\otimes\gamma_S\fright)}, \label{pf:alternative-2}\\
	\tau_{SM}&\coloneq\tr\fleft[\op{0}{0}_{S_1}\omega_{S_1SM}\fright], \label{pf:alternative-3}\\
	\sigma_M&\coloneq\omega_M. \label{pf:alternative-4}
\end{align}
It follows from the definition of the generalized trace distance [Eq.~\eqref{eq:distance}] and Eq.~\eqref{pf:alternative-3} that
\begin{align}
	\delta\fleft(\tau_{SM},\rho_{SM}\fright)&=\delta\fleft(\sum_{i=0}^{\left\lvert S_1\right\rvert-1}\op{i}{i}_{S_1}\otimes\tr_{S_1}\fleft[\op{i}{i}_{S_1}\omega_{S_1SM}\fright],\op{0}{0}_{S_1}\otimes\rho_{SM}\fright) \\
	&\leq\delta\fleft(\omega_{S_1SM},\op{0}{0}_{S_1}\otimes\rho_{SM}\fright) \label{pf:alternative-5}\\
	&\leq\varepsilon. \label{pf:alternative-6}
\end{align}
Here Eq.~\eqref{pf:alternative-5} follows from the data-processing inequality for the trace distance under the completely dephasing channel on $S_1$; Eq.~\eqref{pf:alternative-6} follows from Eq.~\eqref{pf:alternative-1}.  It follows from Eq.~\eqref{pf:alternative-2} and the definition of the max-GMI [Eq.~\eqref{eq:max-information}] that
\begin{align}
	\omega_{S_1SM}&\leq t\1_{S_1}\otimes\gamma_S\otimes\omega_M. \label{pf:alternative-7}
\end{align}
It follows from Eqs.~\eqref{pf:alternative-3}, \eqref{pf:alternative-4}, and \eqref{pf:alternative-7} that
\begin{align}
	\tau_{SM}&\leq t\gamma_S\otimes\sigma_M, \label{pf:alternative-8}\\
	\tau_M&\leq\sigma_M. \label{pf:alternative-9}
\end{align}
It follows from the definition of the smoothed max-GMI [Eq.~\eqref{eq:smoothed-max-information}] and Eqs.~\eqref{pf:alternative-6}, \eqref{pf:alternative-8}, and \eqref{pf:alternative-9} that
\begin{align}
	I_{\max}^{\uparrow,\varepsilon}\fleft(\rho_{SM}\middle\|\gamma_S\fright)&\leq\ln t \\
	&=I_{\max}^\uparrow\fleft(\omega_{S_1SM}\middle\|\1_{S_1}\otimes\gamma_S\fright). \label{pf:alternative-10}
\end{align}
Here Eq.~\eqref{pf:alternative-10} follows from Eq.~\eqref{pf:alternative-2}.  Taking the infimum on the right-hand side of Eq.~\eqref{pf:alternative-10} over every battery $S_1$ and state $\omega_{S_1SM}$ satisfying Eq.~\eqref{pf:alternative-1} leads to the desired inequality.
\end{proof}

\begin{proof}[Proof of l.h.s.\ $\geq$ r.h.s.]
Let $t$ be a real number, $\tau_{SM}$ a subnormalized state, and $\sigma_M$ a state such that
\begin{align}
	\tau_{SM}&\leq t\gamma_S\otimes\sigma_M, \label{pf:alternative-11}\\
	\tau_M&\leq\sigma_M, \label{pf:alternative-12}\\
	\delta\fleft(\tau_{SM},\rho_{SM}\fright)&\leq\varepsilon. \label{pf:alternative-13}
\end{align}
Define the following state:
\begin{align}
	\omega_{S_1SM}&\coloneq\op{0}{0}_{S_1}\otimes\tau_{SM}+\frac{\1_{S_1}-\op{0}{0}_{S_1}}{\left\lvert S_1\right\rvert-1}\otimes\gamma_S\otimes\left(\sigma_M-\tau_M\right), \label{pf:alternative-14}
\end{align}
where $S_1$ is a battery such that
\begin{align}
	\left\lvert S_1\right\rvert&\geq1+\frac{1}{t}. \label{pf:alternative-15}
\end{align}
It follows from the definition of the trace distance [Eq.~\eqref{eq:distance}] and Eq.~\eqref{pf:alternative-14} that
\begin{align}
	\delta\fleft(\omega_{S_1SM},\op{0}{0}_{S_1}\otimes\rho_{SM}\fright)&=\delta\fleft(\op{0}{0}_{S_1}\otimes\tau_{SM},\op{0}{0}_{S_1}\otimes\rho_{SM}\fright) \\
	&=\delta\fleft(\tau_{SM},\rho_{SM}\fright) \\
	&\leq\varepsilon. \label{pf:alternative-16}
\end{align}
Here Eq.~\eqref{pf:alternative-16} follows from Eq.~\eqref{pf:alternative-13}.  It follows from Eqs.~\eqref{pf:alternative-11}, \eqref{pf:alternative-12}, \eqref{pf:alternative-14}, and \eqref{pf:alternative-15} that
\begin{align}
	\omega_{S_1SM}&\leq\op{0}{0}_{S_1}\otimes t\gamma_S\otimes\sigma_M+\left(\1_{S_1}-\op{0}{0}_{S_1}\right)\otimes t\gamma_S\otimes\sigma_M \\
	&=t\1_S\otimes\gamma_S\otimes\omega_M. \label{pf:alternative-17}
\end{align}
Here Eq.~\eqref{pf:alternative-17} follows from Eq.~\eqref{pf:alternative-14}.  This implies that
\begin{align}
	\ln t&\geq\ln\left\lVert\left(\gamma_S^{-\frac{1}{2}}\otimes\omega_M^{-\frac{1}{2}}\right)\omega_{S_1SM}\left(\gamma_S^{-\frac{1}{2}}\otimes\omega_M^{-\frac{1}{2}}\right)\right\rVert_\infty \label{pf:alternative-18}\\
	&=I_{\max}^\uparrow\fleft(\omega_{S_1SM}\middle\|\1_{S_1}\otimes\gamma_S\fright) \label{pf:alternative-19}\\
	&\geq\inf_{S_1,\omega_{S_1SM}}\left\{I_{\max}^\uparrow\fleft(\omega_{S_1SM}\middle\|\1_{S_1}\otimes\gamma_S\fright)\colon\delta\fleft(\omega_{S_1SM},\op{0}{0}_{S_1}\otimes\rho_{SM}\fright)\leq\varepsilon\right\}, \label{pf:alternative-20}
\end{align}
with the infimization over every battery $S_1$ and state $\omega_{S_1SM}$.  Here Eq.~\eqref{pf:alternative-19} follows from the definition of the max-GMI [Eq.~\eqref{eq:max-information}]; Eq.~\eqref{pf:alternative-20} follows from Eq.~\eqref{pf:alternative-16}.  Taking the infimum on the left-hand side of Eq.~\eqref{pf:alternative-18} over every real number $t$, subnormalized state $\tau_{SM}$, and state $\sigma_M$ satisfying Eqs.~\eqref{pf:alternative-11}--\eqref{pf:alternative-13}, the desired inequality follows from the definition of the smoothed max-GMI [Eq.~\eqref{eq:smoothed-max-information}].
\end{proof}

\begin{proposition}[Properties of the smoothed max- and min-GMIs; restatement of Proposition~\ref{prop:properties}]
\label{prop:properties-supp}
Let $\hat{I}^\varepsilon\in\{I_{\max}^{\uparrow,\varepsilon},I_{\min}^{\downarrow,\varepsilon}\}$ be the smoothed max- or min-GMI.  For a state $(\rho_{SM},\gamma_S)$ and a smoothing parameter $\varepsilon\in[0,1]$, it satisfies the following properties.
\begin{itemize}
	\item[(i)] Monotonicity under QFGOs: for every QFGO $\ch{N}_{SM\to S'M'}$,
	\begin{align}
		\hat{I}^\varepsilon\fleft(\rho_{SM}\middle\|\gamma_S\fright)&\geq\hat{I}^\varepsilon\fleft(\ch{N}_{SM\to S'M'}\fleft[\rho_{SM}\fright]\middle\|\gamma'_{S'}\fright),
	\end{align}
	with $\gamma'_{S'}$ the Gibbs state of $S'$.
	\item[(ii)] Robustness against battery: for every battery $S_0$,
	\begin{align}
		\hat{I}^\varepsilon\fleft(\op{0}{0}_{S_0}\otimes\rho_{SM}\middle\|\tfrac{\1_{S_0}}{\left\lvert S_0\right\rvert}\otimes\gamma_S\fright)&=\hat{I}^\varepsilon\fleft(\rho_{SM}\middle\|\gamma_S\fright)+\ln\left\lvert S_0\right\rvert.
	\end{align}
	\item[(iii)] Asymptotic equipartition property:
	\begin{align}
		\lim_{n\to\infty}\tfrac{1}{n}\hat{I}^\varepsilon\fleft(\rho_{SM}^{\otimes n}\middle\|\gamma_S^{\otimes n}\fright)&=I\fleft(\rho_{SM}\middle\|\gamma_S\fright)\;\;\forall\varepsilon\in(0,1).
	\end{align}
\end{itemize}
\end{proposition}

\begin{proof}[Proof of (i) for the smoothed max-GMI]
It follows from the alternative expression for the smoothed max-GMI (Lemma~\ref{lem:alternative}) that
\begin{align}
	&I_{\max}^{\uparrow,\varepsilon}\fleft(\rho_{SM}\middle\|\gamma_S\fright) \notag\\
	&=\inf_{S_1,\omega_{S_1SM}}\left\{I_{\max}^\uparrow\fleft(\omega_{S_1SM}\middle\|\1_{S_1}\otimes\gamma_S\fright)\colon\delta\fleft(\omega_{S_1SM},\op{0}{0}_{S_1}\otimes\rho_{SM}\fright)\leq\varepsilon\right\} \\
	&=\inf_{S_1,\omega_{S_1SM}}\left\{I_{\max}^\uparrow\fleft(\omega_{S_1SM}\middle\|\tfrac{\1_{S_1}}{\left\lvert S_1\right\rvert}\otimes\gamma_S\fright)-\ln\left\lvert S_1\right\rvert\colon\delta\fleft(\omega_{S_1SM},\op{0}{0}_{S_1}\otimes\rho_{SM}\fright)\leq\varepsilon\right\} \label{pf:monotonicity-1}\\
	&\geq\inf_{S_1,\omega_{S_1SM}}\left\{I_{\max}^\uparrow\fleft(\ch{N}_{SM\to S'M'}\fleft[\omega_{S_1SM}\fright]\middle\|\tfrac{\1_{S_1}}{\left\lvert S_1\right\rvert}\otimes\gamma'_{S'}\fright)-\ln\left\lvert S_1\right\rvert\colon\delta\fleft(\omega_{S_1SM},\op{0}{0}_{S_1}\otimes\rho_{SM}\fright)\leq\varepsilon\right\} \label{pf:monotonicity-2}\\
	&\geq\inf_{S_1,\omega_{S_1SM}}\left\{I_{\max}^\uparrow\fleft(\ch{N}_{SM\to S'M'}\fleft[\omega_{S_1SM}\fright]\middle\|\tfrac{\1_{S_1}}{\left\lvert S_1\right\rvert}\otimes\gamma'_{S'}\fright)-\ln\left\lvert S_1\right\rvert\colon\right. \notag\\
	&\qquad\left.\vphantom{I_{\max}^\uparrow\fleft(\ch{N}_{SM\to S'M'}\fleft[\omega_{S_1SM}\fright]\middle\|\tfrac{\1_{S_1}}{\left\lvert S_1\right\rvert}\otimes\gamma'_{S'}\fright)-\ln\left\lvert S_1\right\rvert\colon}\delta\fleft(\ch{N}_{SM\to S'M'}\fleft[\omega_{S_1SM}\fright],\op{0}{0}_{S_1}\otimes\ch{N}_{SM\to S'M'}\fleft[\rho_{SM}\fright]\fright)\leq\varepsilon\right\} \label{pf:monotonicity-3}\\
	&\geq\inf_{S_1,\omega'_{S_1S'M'}}\left\{I_{\max}^\uparrow\fleft(\omega'_{S_1S'M'}\middle\|\tfrac{\1_{S_1}}{\left\lvert S_1\right\rvert}\otimes\gamma'_{S'}\fright)-\ln\left\lvert S_1\right\rvert\colon\delta\fleft(\omega'_{S_1S'M'},\op{0}{0}_{S_1}\otimes\ch{N}_{SM\to S'M'}\fleft[\rho_{SM}\fright]\fright)\leq\varepsilon\right\} \\
	&=\inf_{S_1,\omega'_{S_1S'M'}}\left\{I_{\max}^\uparrow\fleft(\omega'_{S_1S'M'}\middle\|\1_{S_1}\otimes\gamma'_{S'}\fright)\colon\delta\fleft(\omega'_{S_1S'M'},\op{0}{0}_{S_1}\otimes\ch{N}_{SM\to S'M'}\fleft[\rho_{SM}\fright]\fright)\leq\varepsilon\right\} \label{pf:monotonicity-4}\\
	&=I_{\max}^{\uparrow,\varepsilon}\fleft(\ch{N}_{SM\to S'M'}\fleft[\rho_{SM}\fright]\middle\|\gamma'_{S'}\fright), \label{pf:monotonicity-5}
\end{align}
with the infimizations over every battery $S_1$, state $\omega_{S_1SM}$, and state $\omega'_{S_1S'M'}$.  Here Eqs.~\eqref{pf:monotonicity-1} and \eqref{pf:monotonicity-4} follows from the definition of the max-GMI [Eq.~\eqref{eq:max-information}]; Eq.~\eqref{pf:monotonicity-2} follows from the monotonicity of the max-GMI under QFGOs [Proposition~\ref{prop:properties-supp}(i)]; Eq.~\eqref{pf:monotonicity-3} follows from the data-processing inequality for the trace distance; Eq.~\eqref{pf:monotonicity-5} follows from the alternative expression for the smoothed max-GMI (Lemma~\ref{lem:alternative}).
\end{proof}

\begin{proof}[Proof of (i) for the smoothed min-GMI]
This follows from Lemma~\ref{lem:monotone} and the definition of the smoothed min-GMI [Eq.~\eqref{eq:smoothed-optimized-min-information}].
\end{proof}

\begin{proof}[Proof of (ii) for the smoothed max-GMI]
It follows from the alternative expression for the smoothed max-GMI (Lemma~\ref{lem:alternative}) that
\begin{align}
	&I_{\max}^{\uparrow,\varepsilon}\fleft(\op{0}{0}_{S_0}\otimes\rho_{SM}\middle\|\tfrac{\1_{S_0}}{\left\lvert S_0\right\rvert}\otimes\gamma_S\fright) \notag\\
	&=\inf_{S_1,\omega_{S_0S_1SM}}\left\{I_{\max}^\uparrow\fleft(\omega_{S_0S_1SM}\middle\|\tfrac{\1_{S_0}}{\left\lvert S_0\right\rvert}\otimes\1_{S_1}\otimes\gamma_S\fright)\colon\delta\fleft(\omega_{S_0S_1SM},\op{0}{0}_{S_0}\otimes\op{0}{0}_{S_1}\otimes\rho_{SM}\fright)\leq\varepsilon\right\} \\
	&=\inf_{S_1,\omega_{S_0S_1SM}}\left\{I_{\max}^\uparrow\fleft(\omega_{S_0S_1SM}\middle\|\1_{S_0}\otimes\1_{S_1}\otimes\gamma_S\fright)\colon\delta\fleft(\omega_{S_0S_1SM},\op{0}{0}_{S_0}\otimes\op{0}{0}_{S_1}\otimes\rho_{SM}\fright)\leq\varepsilon\right\}+\ln\left\lvert S_0\right\rvert \label{pf:battery-1}\\
	&\leq\inf_{S_1,\omega_{S_1SM}}\left\{I_{\max}^\uparrow\fleft(\op{0}{0}_{S_0}\otimes\omega_{S_1SM}\middle\|\1_{S_0}\otimes\1_{S_1}\otimes\gamma_S\fright)\colon\delta\fleft(\op{0}{0}_{S_0}\otimes\omega_{S_1SM},\op{0}{0}_{S_0}\otimes\op{0}{0}_{S_1}\otimes\rho_{SM}\fright)\leq\varepsilon\right\} \notag\\
	&\qquad+\ln\left\lvert S_0\right\rvert \\
	&=\inf_{S_1,\omega_{S_1SM}}\left\{I_{\max}^\uparrow\fleft(\omega_{S_1SM}\middle\|\1_{S_1}\otimes\gamma_S\fright)\colon\delta\fleft(\omega_{S_1SM},\op{0}{0}_{S_1}\otimes\rho_{SM}\fright)\leq\varepsilon\right\}+\ln\left\lvert S_0\right\rvert \label{pf:battery-2}\\
	&=I_{\max}^{\uparrow,\varepsilon}\fleft(\rho_{SM}\middle\|\gamma_S\fright)+\ln\left\lvert S_0\right\rvert, \label{pf:battery-3}
\end{align}
with the infimizations over every battery $S_1$, state $\omega_{S_0S_1SM}$, and state $\omega_{S_1SM}$.  Here Eq.~\eqref{pf:battery-1} follows from the definition of the max-GMI [Eq.~\eqref{eq:max-information}]; Eq~\eqref{pf:battery-2} follows from the definitions of the max-GMI [Eq.~\eqref{eq:max-information}] and the generalized trace distance [Eq.~\eqref{eq:distance}]; Eq.~\eqref{pf:battery-3} follows from the alternative expression for the smoothed max-GMI (Lemma~\ref{lem:alternative}).  Conversely, it follows from Eq.~\eqref{pf:battery-1} that
\begin{align}
	&I_{\max}^{\uparrow,\varepsilon}\fleft(\op{0}{0}_{S_0}\otimes\rho_{SM}\middle\|\tfrac{\1_{S_0}}{\left\lvert S_0\right\rvert}\otimes\gamma_S\fright) \notag\\
	&=\inf_{S_1,\omega_{S_0S_1SM}}\left\{I_{\max}^\uparrow\fleft(\omega_{S_0S_1SM}\middle\|\1_{S_0}\otimes\1_{S_1}\otimes\gamma_S\fright)\colon\delta\fleft(\omega_{S_0S_1SM},\op{0}{0}_{S_0}\otimes\op{0}{0}_{S_1}\otimes\rho_{SM}\fright)\leq\varepsilon\right\}+\ln\left\lvert S_0\right\rvert \\
	&\geq\inf_{S_2,\omega_{S_2SM}}\left\{I_{\max}^\uparrow\fleft(\omega_{S_2SM}\middle\|\1_{S_2}\otimes\gamma_S\fright)\colon\delta\fleft(\omega_{S_2SM},\op{0}{0}_{S_2}\otimes\rho_{SM}\fright)\leq\varepsilon\right\}+\ln\left\lvert S_0\right\rvert \\
	&=I_{\max}^{\uparrow,\varepsilon}\fleft(\rho_{SM}\middle\|\gamma_S\fright)+\ln\left\lvert S_0\right\rvert, \label{pf:battery-4}
\end{align}
with the infimizations over every battery $S_1$, battery $S_2$, state $\omega_{S_0S_1SM}$, and state $\omega_{S_2SM}$.  Here Eq.~\eqref{pf:battery-4} follows from the alternative expression for the smoothed max-GMI (Lemma~\ref{lem:alternative}).  Combining Eqs.~\eqref{pf:battery-3} and \eqref{pf:battery-4} leads to the desired statement.
\end{proof}

\begin{proof}[Proof of (ii) for the smoothed min-GMI]
It follows from the definition of the smoothed min-GMI [Eq.~\eqref{eq:smoothed-min-information}] that
\begin{align}
	&I_{\min}^{\downarrow,\varepsilon}\fleft(\op{0}{0}_{S_0}\otimes\rho_{SM}\middle\|\tfrac{\1_{S_0}}{\left\lvert S_0\right\rvert}\otimes\gamma_S\fright) \notag\\
	&=\sup_{\Lambda_{S_0SM}}\left\{-\ln\left\lVert\tr_{S_0S}\fleft[\Lambda_{S_0SM}\left(\tfrac{\1_{S_0}}{\left\lvert S_0\right\rvert}\otimes\gamma_S\right)\fright]\right\rVert_\infty\colon\tr\fleft[\Lambda_{S_0SM}\left(\op{0}{0}_{S_0}\otimes\rho_{SM}\right)\fright]\geq1-\varepsilon,\;0\leq\Lambda_{S_0SM}\leq\1_{S_0SM}\right\} \label{pf:battery-5}\\
	&\leq\sup_{\Lambda_{S_0SM}}\left\{-\ln\left\lVert\tr_{S_0S}\fleft[\Lambda_{S_0SM}\left(\op{0}{0}_{S_0}\otimes\gamma_S\right)\fright]\right\rVert_\infty\colon\tr\fleft[\Lambda_{S_0SM}\left(\op{0}{0}_{S_0}\otimes\rho_{SM}\right)\fright]\geq1-\varepsilon,\right. \notag\\
	&\qquad\fleft.0\leq\Lambda_{S_0SM}\leq\1_{S_0SM}\right\}+\ln\left\lvert S_0\right\rvert \\
	&=\sup_{\Lambda_{S_0SM}}\left\{-\ln\left\lVert\tr_S\fleft[\tr_{S_0}\fleft[\Lambda_{S_0SM}\op{0}{0}_{S_0}\fright]\gamma_S\fright]\right\rVert_\infty\colon\tr\fleft[\tr_{S_0}\fleft[\Lambda_{S_0SM}\op{0}{0}_{S_0}\fright]\rho_{SM}\fright]\geq1-\varepsilon,\right. \notag\\
	&\qquad\fleft.0\leq\Lambda_{S_0SM}\leq\1_{S_0SM}\right\}+\ln\left\lvert S_0\right\rvert \\
	&\leq\sup_{\Lambda_{SM}}\left\{-\ln\left\lVert\tr_S\fleft[\Lambda_{SM}\gamma_S\fright]\right\rVert_\infty\colon\tr\fleft[\Lambda_{SM}\rho_{SM}\fright]\geq1-\varepsilon,\;0\leq\Lambda_{SM}\leq\1_{SM}\right\}+\ln\left\lvert S_0\right\rvert \\
	&=I_{\min}^{\downarrow,\varepsilon}\fleft(\rho_{SM}\middle\|\gamma_S\fright)+\ln\left\lvert S_0\right\rvert. \label{pf:battery-6}
\end{align}
Here Eq.~\eqref{pf:battery-6} follows from the definition of the smoothed min-GMI [Eq.~\eqref{eq:smoothed-min-information}].  Conversely, it follows from Eq.~\eqref{pf:battery-5} that
\begin{align}
	&I_{\min}^{\downarrow,\varepsilon}\fleft(\op{0}{0}_{S_0}\otimes\rho_{SM}\middle\|\tfrac{\1_{S_0}}{\left\lvert S_0\right\rvert}\otimes\gamma_S\fright) \notag\\
	&=\sup_{\Lambda_{S_0SM}}\left\{-\ln\left\lVert\tr_{S_0S}\fleft[\Lambda_{S_0SM}\left(\tfrac{\1_{S_0}}{\left\lvert S_0\right\rvert}\otimes\gamma_S\right)\fright]\right\rVert_\infty\colon\tr\fleft[\Lambda_{S_0SM}\left(\op{0}{0}_{S_0}\otimes\rho_{SM}\right)\fright]\geq1-\varepsilon,\;0\leq\Lambda_{S_0SM}\leq\1_{S_0SM}\right\} \\
	&\geq\sup_{\Lambda_{SM}}\left\{-\ln\left\lVert\tr_{S_0S}\fleft[\left(\op{0}{0}_{S_0}\otimes\Lambda_{SM}\right)\left(\tfrac{\1_{S_0}}{\left\lvert S_0\right\rvert}\otimes\gamma_S\right)\fright]\right\rVert_\infty\colon\tr\fleft[\left(\op{0}{0}_{S_0}\otimes\Lambda_{SM}\right)\left(\op{0}{0}_{S_0}\otimes\rho_{SM}\right)\fright]\geq1-\varepsilon,\right. \notag\\
	&\qquad\left.\vphantom{-\ln\left\lVert\tr_{S_0S}\fleft[\left(\op{0}{0}_{S_0}\otimes\Lambda_{SM}\right)\left(\tfrac{\1_{S_0}}{\left\lvert S_0\right\rvert}\otimes\gamma_S\right)\fright]\right\rVert_\infty\colon\tr\fleft[\left(\op{0}{0}_{S_0}\otimes\Lambda_{SM}\right)\left(\op{0}{0}_{S_0}\otimes\rho_{SM}\right)\fright]\geq1-\varepsilon,}0\leq\Lambda_{SM}\leq\1_{SM}\right\} \\
	&=\sup_{\Lambda_{SM}}\left\{-\ln\left\lVert\tr_S\fleft[\Lambda_{SM}\gamma_S\fright]\right\rVert_\infty\colon\tr\fleft[\Lambda_{SM}\rho_{SM}\fright]\geq1-\varepsilon,\;0\leq\Lambda_{SM}\leq\1_{SM}\right\}+\ln\left\lvert S_0\right\rvert \\
	&=I_{\min}^{\downarrow,\varepsilon}\fleft(\rho_{SM}\middle\|\gamma_S\fright)+\ln\left\lvert S_0\right\rvert. \label{pf:battery-7}
\end{align}
Here Eq.~\eqref{pf:battery-7} follows from the definition of the smoothed min-GMI [Eq.~\eqref{eq:smoothed-min-information}].  Combining Eqs.~\eqref{pf:battery-6} and \eqref{pf:battery-7} leads to the desired statement.
\end{proof}

\begin{proof}[Proof of (iii) for the smoothed max-GMI]
See Sec.~\ref{sec:equipartition}.
\end{proof}

\begin{proof}[Proof of (iii) for the smoothed min-GMI]
It follows from the definition of the smoothed min-GMI [Eq.~\eqref{eq:smoothed-optimized-min-information}] that
\begin{align}
	\limsup_{n\to\infty}\frac{1}{n}I_{\min}^{\downarrow,\varepsilon}\fleft(\rho_{SM}^{\otimes n}\middle\|\gamma_S^{\otimes n}\fright)&=\limsup_{n\to\infty}\inf_{\sigma_{M^n}}\frac{1}{n}D_{\min}^\varepsilon\fleft(\rho_{SM}^{\otimes n}\middle\|\gamma_S^{\otimes n}\otimes\sigma_{M^n}\fright) \\
	&\leq\limsup_{n\to\infty}\frac{1}{n}D_{\min}^\varepsilon\fleft(\rho_{SM}^{\otimes n}\middle\|\gamma_S^{\otimes n}\otimes\rho_M^{\otimes n}\fright) \\
	&=D\fleft(\rho_{SM}\middle\|\gamma_S\otimes\rho_M\fright) \label{pf:equipartition-1}\\
	&=I\fleft(\rho_{SM}\middle\|\gamma_S\fright), \label{pf:equipartition-2}
\end{align}
with the infimization over every state $\sigma_{M^n}$.  Here Eq.~\eqref{pf:equipartition-1} follows from the quantum Stein's lemma~\cite{hiai1991ProperFormulaRelative,nagaoka2000StrongConverseSteins}; Eq.~\eqref{pf:equipartition-2} follows from the definition of the Umegaki GMI [Eq.~\eqref{eq:unoptimized-umegaki-information}].  Conversely, it follows from the definition of the smoothed min-GMI [Eq.~\eqref{eq:smoothed-optimized-min-information}] that that
\begin{align}
	\liminf_{n\to\infty}\frac{1}{n}I_{\min}^{\downarrow,\varepsilon}\fleft(\rho_{SM}^{\otimes n}\middle\|\gamma_S^{\otimes n}\fright)&=\liminf_{n\to\infty}\inf_{\sigma_{M^n}}\frac{1}{n}D_{\min}^\varepsilon\fleft(\rho_{SM}^{\otimes n}\middle\|\gamma_S^{\otimes n}\otimes\sigma_{M^n}\fright) \\
	&\geq\liminf_{n\to\infty}\inf_{\sigma_{M^n}}\sup_{\alpha\in[0,1)}\frac{1}{n}\left(\pz{D}_\alpha\fleft(\rho_{SM}^{\otimes n}\middle\|\gamma_S^{\otimes n}\otimes\sigma_{M^n}\fright)-\frac{\alpha}{1-\alpha}\ln\frac{1}{\varepsilon}\right) \label{pf:equipartition-3}\\
	&\geq\sup_{\alpha\in[0,1)}\liminf_{n\to\infty}\inf_{\sigma_{M^n}}\frac{1}{n}\left(\pz{D}_\alpha\fleft(\rho_{SM}^{\otimes n}\middle\|\gamma_S^{\otimes n}\otimes\sigma_{M^n}\fright)-\frac{\alpha}{1-\alpha}\ln\frac{1}{\varepsilon}\right) \\
	&=\sup_{\alpha\in[0,1)}\liminf_{n\to\infty}\frac{1}{n}\pz{I}_\alpha^\downarrow\fleft(\rho_{SM}^{\otimes n}\middle\|\gamma_S^{\otimes n}\fright) \label{pf:equipartition-4}\\
	&=\sup_{\alpha\in[0,1)}\pz{I}_\alpha^\downarrow\fleft(\rho_{SM}\middle\|\gamma_S\fright) \label{pf:equipartition-5}\\
	&=\sup_{\alpha\in[0,1)}\inf_{\sigma_M}\pz{D}_\alpha^\downarrow\fleft(\rho_{SM}\middle\|\gamma_S\otimes\sigma_M\fright) \label{pf:equipartition-6}\\
	&=\inf_{\sigma_M}\sup_{\alpha\in[0,1)}\pz{D}_\alpha\fleft(\rho_{SM}\middle\|\gamma_S\otimes\sigma_M\fright) \label{pf:equipartition-7}\\
	&=\inf_{\sigma_M}D\fleft(\rho_{SM}\middle\|\gamma_S\otimes\sigma_M\fright) \label{pf:equipartition-8}\\
	&=I\fleft(\rho_{SM}\middle\|\gamma_S\fright) \label{pf:equipartition-9},
\end{align}
with the infimizations are over every state $\sigma_{M^n}$ and state $\sigma_M$.  Here Eq.~\eqref{pf:equipartition-3} follows from Ref.~\cite[Proposition~3]{qi2018ApplicationsPositionbasedCoding}; Eqs.~\eqref{pf:equipartition-4} and \eqref{pf:equipartition-6} follows from the definition of the optimized Petz--Rényi GMI [see Eqs.~\eqref{eq:optimized-information} and \eqref{eq:petz-relative}]; Eq.~\eqref{pf:equipartition-5} follows from the additivity of the optimized Petz--Rényi GMI~\cite[Lemma~7]{hayashi2016CorrelationDetectionOperational}; Eq.~\eqref{pf:equipartition-7} follows from the Mosonyi--Hiai minimax theorem~\cite[Corollary~A.2]{mosonyi2011QuantumRenyiRelative}, which applies because the Petz--Rényi relative entropy is monotonically nondecreasing in $\alpha$~\cite[Lemma~3]{tomamichel2009FullyQuantumAsymptotic} and lower semicontinuous in its second argument; Eq.~\eqref{pf:equipartition-8} follows from the nondecreasing monotonicity of the Petz--Rényi relative entropy in $\alpha$ and its $\alpha\nearrow1$ limit given by the Umegaki relative entropy~\cite{petz1986QuasientropiesFiniteQuantum}; Eq.~\eqref{pf:equipartition-9} follows from the definition of the Umegaki GMI [Eq.~\eqref{eq:optimized-umegaki-information}].  Combining Eqs.~\eqref{pf:equipartition-2} and \eqref{pf:equipartition-9} leads to the desired statement.

\end{proof}

\subsection{Asymptotic equipartition property for the smoothed max-GMI}
\label{sec:equipartition}

\begin{lemma}[Relating the smoothed max-GMI and the optimized max-GMI]
\label{lem:optimized}
For a subnormalized state $\rho_{SM}$, a positive semidefinite operator $\gamma_S$, and a smoothing parameter $\varepsilon\in(0,1]$,
\begin{align}
	I_{\max}^{\uparrow,\varepsilon}\fleft(\rho_{SM}\middle\|\gamma_S\fright)&\leq I_{\max}^\downarrow\fleft(\rho_{SM}\middle\|\gamma_S\fright)+\ln\frac{1}{1-\left(1-\varepsilon^2\right)^\frac{1}{2}}.
\end{align}
\end{lemma}

\begin{proof}
The proof follows a similar idea to that of Ref.~\cite[Lemma~19]{tomamichel2011LeftoverHashingQuantum}.  Let $t$ be a real number and $\sigma_M$ a state such that
\begin{align}
	\rho_{SM}&\leq t\gamma_S\otimes\sigma_M. \label{pf:optimized-1}
\end{align}
Consider the following spectral decomposition of the operator $\rho_M^{-\frac{1}{2}}\sigma_M\rho_M^{-\frac{1}{2}}$:
\begin{align}
	\rho_M^{-\frac{1}{2}}\sigma_M\rho_M^{-\frac{1}{2}}&=\sum_{i=0}^{r-1}p_i\op{\phi_i}{\phi_i}_M, \label{pf:optimized-2}
\end{align}
where $r\coloneq\rk(\rho_M)$, $(p_i)_{i=0}^{r-1}$ is a probability distribution, and $(\ket{\phi_i})_{i=0}^{r-1}$ is an orthogonal sequence of vectors.  Since Eq.~\eqref{pf:optimized-1} implies that $\supp(\rho_M)\subseteq\supp(\sigma_M)$, without loss of generality, we can assume that
\begin{align}
	0&<p_i\leq p_{i+1}\;\;\forall i\in\{0,1,\dots,r-2\}. \label{pf:optimized-3}
\end{align}
It follows from Eq.~\eqref{pf:optimized-1} that
\begin{align}
	\rho_M^0&=\left(\rho_M^{-\frac{1}{2}}\sigma_M\rho_M^{-\frac{1}{2}}\right)^0 \\
	&=\sum_{i=0}^{r-1}\op{\phi_i}{\phi_i}_M. \label{pf:optimized-4}
\end{align}
Here Eq.~\eqref{pf:optimized-4} follows from Eqs.~\eqref{pf:optimized-2} and \eqref{pf:optimized-3}.  Define the following integer and two projectors, respectively:
\begin{align}
	k&\coloneq\min_{j}\left\{j\colon\sum_{i=j}^{r-1}\tr\fleft[\op{\phi_i}{\phi_i}_M\rho_M\fright]\leq1-\left(1-\varepsilon^2\right)^\frac{1}{2}\right\}, \label{pf:optimized-5}\\
	\Pi_M&\coloneq\sum_{i=0}^{k-1}\op{\phi_i}{\phi_i}_M, \label{pf:optimized-6}\\
	\Xi_M&\coloneq\sum_{i=k-1}^{r-1}\op{\phi_i}{\phi_i}_M. \label{pf:optimized-7}
\end{align}
It follows from Eqs.~\eqref{pf:optimized-2}, \eqref{pf:optimized-3}, and \eqref{pf:optimized-6} that $\ket{\phi_{k-1}}$ is the eigenvector of $\Pi_M\rho_M^{-\frac{1}{2}}\sigma_M\rho_M^{-\frac{1}{2}}\Pi_M$ corresponding to its largest eigenvalue.  This implies that
\begin{align}
	\left\lVert\Pi_M\rho_M^{-\frac{1}{2}}\sigma_M\rho_M^{-\frac{1}{2}}\Pi_M\right\rVert_\infty&=\tr\fleft[\op{\phi_{k-1}}{\phi_{k-1}}_M\Pi_M\rho_M^{-\frac{1}{2}}\sigma_M\rho_M^{-\frac{1}{2}}\Pi_M\fright] \\
	&=\tr\fleft[\op{\phi_{k-1}}{\phi_{k-1}}_M\Xi_M\rho_M^{-\frac{1}{2}}\sigma_M\rho_M^{-\frac{1}{2}}\Xi_M\fright] \label{pf:optimized-19}\\
	&\leq\frac{\tr\fleft[\left(\Xi_M\rho_M\Xi_M\right)\left(\Xi_M\rho_M^{-\frac{1}{2}}\sigma_M\rho_M^{-\frac{1}{2}}\Xi_M\right)\fright]}{\tr\fleft[\Xi_M\rho_M\Xi_M\fright]} \label{pf:optimized-20}\\
	&=\frac{\tr\fleft[\Xi_M\left(\rho_M^{-\frac{1}{2}}\sigma_M\rho_M^{-\frac{1}{2}}\right)^\frac{1}{2}\rho_M\left(\rho_M^{-\frac{1}{2}}\sigma_M\rho_M^{-\frac{1}{2}}\right)^\frac{1}{2}\fright]}{\tr\fleft[\Xi_M\rho_M\fright]} \label{pf:optimized-21}\\
	&\leq\frac{\tr\fleft[\left(\rho_M^{-\frac{1}{2}}\sigma_M\rho_M^{-\frac{1}{2}}\right)^\frac{1}{2}\rho_M\left(\rho_M^{-\frac{1}{2}}\sigma_M\rho_M^{-\frac{1}{2}}\right)^\frac{1}{2}\fright]}{\tr\fleft[\Xi_M\rho_M\fright]} \\
	&=\frac{\tr\fleft[\sigma_M\fright]}{\tr\fleft[\Xi_M\rho_M\fright]} \\
	&=\frac{1}{\sum_{i=k-1}^{r-1}\tr\fleft[\op{\phi_i}{\phi_i}_M\rho_M\fright]} \label{pf:optimized-22}\\
	&\leq\frac{1}{1-\left(1-\varepsilon^2\right)^\frac{1}{2}} \label{pf:optimized-23}.
\end{align}
Here Eq.~\eqref{pf:optimized-19} follows from Eqs.~\eqref{pf:optimized-2}, \eqref{pf:optimized-6}, and \eqref{pf:optimized-7}; Eq.~\eqref{pf:optimized-20} follows from the fact that $\ket{\phi_{k-1}}$ is the eigenvector of $\Xi_M\rho_M^{-\frac{1}{2}}\sigma_M\rho_M^{-\frac{1}{2}}\Xi_M$ corresponding to its smallest eigenvalue, which itself is implied by Eqs.~\eqref{pf:optimized-2}, \eqref{pf:optimized-3}, and \eqref{pf:optimized-7}; Eq.~\eqref{pf:optimized-21} follows from the fact that $\Xi_M$ and $\rho_M^{-\frac{1}{2}}\sigma_M\rho_M^{-\frac{1}{2}}$ commute, which itself is implied by Eqs.~\eqref{pf:optimized-2} and \eqref{pf:optimized-7}; Eq.~\eqref{pf:optimized-22} follows from Eq.~\eqref{pf:optimized-7}; Eq.~\eqref{pf:optimized-23} follows from Eq.~\eqref{pf:optimized-5}.  Consider the following Schmidt decomposition of a purification $\rho_{RSM}$ of $\rho_{SM}$:
\begin{align}
	\rho_{RSM}&=\sum_{i,j=0}^{r-1}\left(q_iq_j\right)^\frac{1}{2}\op{\psi_i}{\psi_j}_{RS}\otimes\op{\chi_i}{\chi_j}_M. \label{pf:optimized-8}
\end{align}
Define the following transposition map:
\begin{align}
	\left(\cdot\right)_{RS}^\top&\coloneq\sum_{i,j=0}^{r-1}\tr_M\fleft[\op{\chi_i}{\chi_j}_M\left(\cdot\right)_M\fright]\op{\psi_i}{\psi_j}_{RS}. \label{pf:optimized-9}
\end{align}
Define the following subnormalized state:
\begin{align}
	\tau_{RSM}&\coloneq\rho_M^\frac{1}{2}\Pi_M\rho_M^{-\frac{1}{2}}\rho_{RSM}\rho_M^{-\frac{1}{2}}\Pi_M\rho_M^\frac{1}{2} \label{pf:optimized-10}\\
	&=\rho_M^\frac{1}{2}\left(\Pi_{RS}^\top\otimes\rho_M^{-\frac{1}{2}}\right)\rho_{RSM}\left(\Pi_{RS}^\top\otimes\rho_M^{-\frac{1}{2}}\right)\rho_M^\frac{1}{2} \label{pf:optimized-11}\\
	&=\Pi_{RS}^\top\rho_{RSM}\Pi_{RS}^\top. \label{pf:optimized-12}
\end{align}
Here Eq.~\eqref{pf:optimized-11} follows from Eqs.~\eqref{pf:optimized-8} and \eqref{pf:optimized-9}, the transpose trick~\cite[Eq.~(2.2.40)]{khatri2024PrinciplesQuantumCommunication}, and the fact that $\supp(\Pi_M)\subseteq\supp(\rho_M)$, which itself is implied by Eqs.~\eqref{pf:optimized-4} and \eqref{pf:optimized-6}.  This implies that
\begin{align}
	\tau_M&=\tr_{RS}\fleft[\Pi_{RS}^\top\rho_{RSM}\Pi_{RS}^\top\fright] \\
	&\leq\rho_M. \label{pf:optimized-13}
\end{align}
It follows from the data-processing inequality for the generalized trace distance~\cite[Proposition~3.1]{tomamichel2016QuantumInformationProcessing} under partial trace on $R$ that
\begin{align}
	\delta\fleft(\tau_{SM},\rho_{SM}\fright)&\leq\delta\fleft(\tau_{RSM},\rho_{RSM}\fright) \\
	&\leq\left(1-\left(1-\tr\fleft[\rho_{RSM}\fright]+\tr\fleft[\Pi_{RS}^\top\rho_{RSM}\Pi_{RS}^\top\fright]\right)^2\right)^\frac{1}{2} \label{pf:optimized-14}\\
	&=\left(1-\left(1-\tr\fleft[\rho_{RSM}\fright]+\tr\fleft[\tau_{RSM}\fright]\right)^2\right)^\frac{1}{2} \label{pf:optimized-15}\\
	&=\left(1-\left(1-\tr\fleft[\rho_M\fright]+\tr\fleft[\rho_M^\frac{1}{2}\Pi_M\rho_M^{-\frac{1}{2}}\rho_{RSM}\rho_M^{-\frac{1}{2}}\Pi_M\rho_M^\frac{1}{2}\fright]\right)^2\right)^\frac{1}{2} \label{pf:optimized-16}\\
	&=\left(1-\left(1-\tr\fleft[\rho_M\fright]+\tr\fleft[\Pi_M\rho_M\fright]\right)^2\right)^\frac{1}{2} \\
	&=\left(1-\left(1-\sum_{i=k}^{r-1}\tr\fleft[\op{\phi_i}{\phi_i}_M\rho_M\fright]\right)^2\right)^\frac{1}{2} \label{pf:optimized-17}\\
	&\leq\varepsilon. \label{pf:optimized-18}
\end{align}
Here Eq.~\eqref{pf:optimized-14} follows from follows from Ref.~\cite[Lemma~17]{tomamichel2011LeftoverHashingQuantum}; Eq.~\eqref{pf:optimized-15} follows from Eq.~\eqref{pf:optimized-12}; Eq.~\eqref{pf:optimized-16} follows from Eq.~\eqref{pf:optimized-10};  Eq.~\eqref{pf:optimized-17} follows from Eqs.~\eqref{pf:optimized-4} and \eqref{pf:optimized-6}; Eq.~\eqref{pf:optimized-18} follows from Eq.~\eqref{pf:optimized-5}.  It follows from Eq.~\eqref{pf:optimized-10} that
\begin{align}
	\ln\left\lVert\left(\gamma_S^{-\frac{1}{2}}\otimes\rho_M^{-\frac{1}{2}}\right)\tau_{SM}\left(\gamma_S^{-\frac{1}{2}}\otimes\rho_M^{-\frac{1}{2}}\right)\right\rVert_\infty&=\ln\left\lVert\left(\gamma_S^{-\frac{1}{2}}\otimes\rho_M^{-\frac{1}{2}}\right)\rho_M^\frac{1}{2}\Pi_M\rho_M^{-\frac{1}{2}}\rho_{SM}\rho_M^{-\frac{1}{2}}\Pi_M\rho_M^\frac{1}{2}\left(\gamma_S^{-\frac{1}{2}}\otimes\rho_M^{-\frac{1}{2}}\right)\right\rVert_\infty \\
	&=\ln\left\lVert\left(\gamma_S^{-\frac{1}{2}}\otimes\Pi_M\rho_M^{-\frac{1}{2}}\right)\rho_{SM}\left(\gamma_S^{-\frac{1}{2}}\otimes\rho_M^{-\frac{1}{2}}\Pi_M\right)\right\rVert_\infty \\
	&\leq\ln\left\lVert\left(\gamma_S^{-\frac{1}{2}}\otimes\Pi_M\rho_M^{-\frac{1}{2}}\right)\left(t\gamma_S\otimes\sigma_M\right)\left(\gamma_S^{-\frac{1}{2}}\otimes\rho_M^{-\frac{1}{2}}\Pi_M\right)\right\rVert_\infty \label{pf:optimized-24}\\
	&=\ln t+\ln\left\lVert\Pi_M\rho_M^{-\frac{1}{2}}\sigma_M\rho_M^{-\frac{1}{2}}\Pi_M\right\rVert_\infty \\
	&\leq\ln t+\ln\frac{1}{1-\left(1-\varepsilon^2\right)^\frac{1}{2}}. \label{pf:optimized-25}
\end{align}
Here Eq.~\eqref{pf:optimized-24} follows from Eq.~\eqref{pf:optimized-1}; Eq.~\eqref{pf:optimized-25} follows from Eq.~\eqref{pf:optimized-23}.  Furthermore, it follows from Eqs.~\eqref{pf:optimized-13} and \eqref{pf:optimized-18} that
\begin{align}
	&\ln\left\lVert\left(\gamma_S^{-\frac{1}{2}}\otimes\rho_M^{-\frac{1}{2}}\right)\tau_{SM}\left(\gamma_S^{-\frac{1}{2}}\otimes\rho_M^{-\frac{1}{2}}\right)\right\rVert_\infty \notag\\
	&\geq\inf_{\tau_{SM}}\left\{\ln\left\lVert\left(\gamma_S^{-\frac{1}{2}}\otimes\rho_M^{-\frac{1}{2}}\right)\tau_{SM}\left(\gamma_S^{-\frac{1}{2}}\otimes\rho_M^{-\frac{1}{2}}\right)\right\rVert_\infty\colon\tau_M\leq\rho_M,\;\delta\fleft(\tau_{SM},\rho_{SM}\fright)\leq\varepsilon\right\} \\
	&=\inf_{t,\tau_{SM}}\left\{\ln t\colon\tau_{SM}\leq t\gamma_S\otimes\rho_M,\;\tau_M\leq\rho_M,\;\delta\fleft(\tau_{SM},\rho_{SM}\fright)\leq\varepsilon\right\} \\
	&\geq\inf_{t,\tau_{SM}}\left\{\ln t\colon\tau_{SM}\leq t\gamma_S\otimes\frac{\rho_M}{\tr\fleft[\rho_M\fright]},\;\tau_M\leq\frac{\rho_M}{\tr\fleft[\rho_M\fright]},\;\delta\fleft(\tau_{SM},\rho_{SM}\fright)\leq\varepsilon\right\} \\
	&=\inf_{t,\tau_{SM},\sigma_M}\left\{\ln t\colon\tau_{SM}\leq t\gamma_S\otimes\sigma_M,\;\tau_M\leq\sigma_M,\;\delta\fleft(\tau_{SM},\rho_{SM}\fright)\leq\varepsilon\right\} \\
	&=I_{\max}^{\uparrow,\varepsilon}\fleft(\rho_{SM}\middle\|\gamma_S\fright), \label{pf:optimized-26}
\end{align}
where the infimizations are over every real number $t$, subnormalized state $\tau_{SM}$, and state $\sigma_M$.  Here Eq.~\eqref{pf:optimized-26} follows from the definition of the smoothed max-GMI [Eq.~\eqref{eq:smoothed-max-information}].  It follows from Eqs.~\eqref{pf:optimized-25} and \eqref{pf:optimized-26} that
\begin{align}
	I_{\max}^{\uparrow,\varepsilon}\fleft(\rho_{SM}\middle\|\gamma_S\fright)&\leq\ln t+\ln\frac{1}{1-\left(1-\varepsilon^2\right)^\frac{1}{2}}. \label{pf:optimized-27}
\end{align}
Taking the infimium on the right-hand side of Eq.~\eqref{pf:optimized-27} over every real number $t$ and state $\sigma_M$ satisfying Eq.~\eqref{pf:optimized-1}, the desired statement follows from the definition of the optimized max-GMI [Eq.~\eqref{eq:optimized-max-information}].
\end{proof}

\begin{lemma}[Relating the smoothed unoptimized and optimized max-GMIs]
\label{lem:relate}
For a subnormalized state $\rho_{SM}$, a positive semidefinite operator $\gamma_S$, and two smoothing parameters $\varepsilon\in[0,1]$ and $\eta\in[0,\varepsilon]$,
\begin{align}
	I_{\max}^{\downarrow,\varepsilon}\fleft(\rho_{SM}\middle\|\gamma_S\fright)&\leq I_{\max}^{\uparrow,\varepsilon}\fleft(\rho_{SM}\middle\|\gamma_S\fright)\leq I_{\max}^{\downarrow,\varepsilon-\eta}\fleft(\rho_{SM}\middle\|\gamma_S\fright)+\ln\frac{1}{1-\left(1-\eta^2\right)^\frac{1}{2}}.
\end{align}
\end{lemma}

\begin{proof}
It follows from the definitions of the smoothed unoptimized max-GMI [Eq.~\eqref{eq:smoothed-max-information}] and the smoothed optimized max-GMI [Eq.~\eqref{eq:smoothed-optimized-max-information}] that
\begin{align}
	I_{\max}^{\downarrow,\varepsilon}\fleft(\rho_{SM}\middle\|\gamma_S\fright)&=\inf_{\tau_{SM}}\left\{I_{\max}^\downarrow\fleft(\tau_{SM}\middle\|\gamma_S\fright)\colon\delta\fleft(\tau_{SM},\rho_{SM}\fright)\leq\varepsilon\right\} \\
	&=\inf_{t,\tau_{SM},\sigma_M}\left\{\ln t\colon\tau_{SM}\leq t\gamma_S\otimes\sigma_M,\;\delta\fleft(\tau_{SM},\rho_{SM}\fright)\leq\varepsilon\right\} \label{pf:relate-1}\\
	&\leq\inf_{t,\tau_{SM},\sigma_M}\left\{\ln t\colon\tau_{SM}\leq t\gamma_S\otimes\sigma_M,\;\tau_M\leq\sigma_M,\;\delta\fleft(\tau_{SM},\rho_{SM}\fright)\leq\varepsilon\right\} \\
	&=I_{\max}^{\uparrow,\varepsilon}\fleft(\rho_{SM}\middle\|\gamma_S\fright), \label{pf:relate-2}
\end{align}
where the infimizations are over every subnormalized state $\tau_{SM}$, real number $t$, and state $\sigma_M$.  Here Eq.~\eqref{pf:relate-1} follows from the definition of the optimized max-GMI [Eq.~\eqref{eq:optimized-max-information}]; Eq.~\eqref{pf:relate-2} follows from the definition of the smoothed unoptimized max-GMI [Eq.~\eqref{eq:smoothed-max-information}].  Furthermore, it follows from the definition of the smoothed unoptimized max-GMI [Eq.~\eqref{eq:smoothed-max-information}] that
\begin{align}
	I_{\max}^{\uparrow,\varepsilon}\fleft(\rho_{SM}\middle\|\gamma_S\fright)&=\inf_{t,\tau_{SM},\sigma_M}\left\{\ln t\colon\tau_{SM}\leq t\gamma_S\otimes\sigma_M,\;\tau_M\leq\sigma_M,\;\delta\fleft(\tau_{SM},\rho_{SM}\fright)\leq\varepsilon\right\} \\
	&\leq\inf_{t,\tau_{SM},\sigma_M,\theta_{SM}}\left\{\ln t\colon\tau_{SM}\leq t\gamma_S\otimes\sigma_M,\;\tau_M\leq\sigma_M,\;\delta\fleft(\tau_{SM},\theta_{SM}\fright)\leq\eta,\;\delta\fleft(\theta_{SM},\rho_{SM}\fright)\leq\varepsilon-\eta\right\} \label{pf:relate-3}\\
	&=\inf_{\theta_{SM}}\left\{I_{\max}^{\uparrow,\eta}\fleft(\theta_{SM}\middle\|\gamma_S\fright)\colon\delta\fleft(\theta_{SM},\rho_{SM}\fright)\leq\varepsilon-\eta\right\} \label{pf:relate-4}\\
	&\leq\inf_{\theta_{SM}}\left\{I_{\max}^\downarrow\fleft(\theta_{SM}\middle\|\gamma_S\fright)\colon\delta\fleft(\theta_{SM},\rho_{SM}\fright)\leq\varepsilon-\eta\right\}+\ln\frac{1}{1-\left(1-\varepsilon^2\right)^\frac{1}{2}} \label{pf:relate-5}\\
	&=I_{\max}^{\downarrow,\varepsilon-\eta}\fleft(\rho_{SM}\middle\|\gamma_S\fright)+\ln\frac{1}{1-\left(1-\varepsilon^2\right)^\frac{1}{2}}, \label{pf:relate-6}
\end{align}
where the infimizations are over every real number $t$, subnormalized state $\tau_{SM}$, state $\sigma_M$, and subnormalized state $\theta_{SM}$.  Here Eq.~\eqref{pf:relate-3} follows from the triangle inequality for the generalized trace distance~\cite{tomamichel2010DualitySmoothMin}; Eq.~\eqref{pf:relate-4} follows from the definition of the smoothed unoptimized max-GMI [Eq.~\eqref{eq:smoothed-max-information}]; Eq.~\eqref{pf:relate-3} follows from Lemma~\ref{lem:optimized}; Eq.~\eqref{pf:relate-3} follows from the definition of the smoothed optimized max-GMI [Eq.~\eqref{eq:smoothed-optimized-max-information}].  Combining Eqs.~\eqref{pf:recovery-2} and \eqref{pf:relate-6} leads to the desired statement.
\end{proof}

\begin{proposition}[Asymptotic equipartition property for the smoothed max-GMI]
\label{prop:equipartition}
For two normalized states $\rho_{SM}$ and $\gamma_S$ and a smoothing parameter $\varepsilon\in(0,1)$,
\begin{align}
	\lim_{n\to\infty}\frac{1}{n}I_{\max}^{\uparrow,\varepsilon}\fleft(\rho_{SM}^{\otimes n}\middle\|\gamma_S^{\otimes n}\fright)&=I\fleft(\rho_{SM}\middle\|\gamma_S\fright).
\end{align}
\end{proposition}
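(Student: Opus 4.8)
The plan is to reduce the statement to an asymptotic equipartition property for the smoothed \emph{alternative} generalized max-mutual information $I_{\max}^{\downarrow,\delta}$ and then squeeze. First I would invoke Proposition~\ref{prop:equipartition-oneshot} applied to the $n$-fold tensor powers $\rho_{AB}^{\otimes n}$ and $\gamma_A^{\otimes n}$. Fixing the target error $\varepsilon\in(0,1)$ from the statement and choosing any $a\in(0,\varepsilon)$, the proposition with its two smoothing parameters taken to be $a$ and $\varepsilon-a$ (so that they sum to $\varepsilon$) gives
\[
	I_{\max}^{\downarrow,\varepsilon}\fleft(\rho_{AB}^{\otimes n}\middle\|\gamma_A^{\otimes n}\fright)\leq I_{\max}^{\uparrow,\varepsilon}\fleft(\rho_{AB}^{\otimes n}\middle\|\gamma_A^{\otimes n}\fright)\leq I_{\max}^{\downarrow,\varepsilon-a}\fleft(\rho_{AB}^{\otimes n}\middle\|\gamma_A^{\otimes n}\fright)+\log_2\frac{2}{a^2}.
\]
Dividing by $n$ and sending $n\to+\infty$, the additive term $\tfrac1n\log_2(2/a^2)$ vanishes, so it suffices to prove that $\tfrac1n I_{\max}^{\downarrow,\delta}(\rho_{AB}^{\otimes n}\|\gamma_A^{\otimes n})\to I(\rho_{AB}\|\gamma_A)$ for every fixed $\delta\in(0,1)$; the two outer sides of the squeeze then both converge to $I(\rho_{AB}\|\gamma_A)$, forcing the middle term to do so as well.

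For the upper bound of this reduced AEP, I would use the defining expression $I_{\max}^{\downarrow,\delta}(\rho_{AB}\|\gamma_A)=\inf_{\sigma_B}D_{\max}^\delta(\rho_{AB}\|\gamma_A\otimes\sigma_B)$ and simply take the feasible choice $\sigma_{B^n}=\rho_B^{\otimes n}$, which yields $I_{\max}^{\downarrow,\delta}(\rho_{AB}^{\otimes n}\|\gamma_A^{\otimes n})\leq D_{\max}^\delta(\rho_{AB}^{\otimes n}\|(\gamma_A\otimes\rho_B)^{\otimes n})$. The traditional asymptotic equipartition property for the smoothed max-relative entropy~\cite{tomamichel2009FullyQuantumAsymptotic} then gives $\limsup_n \tfrac1n I_{\max}^{\downarrow,\delta}(\rho_{AB}^{\otimes n}\|\gamma_A^{\otimes n})\leq D(\rho_{AB}\|\gamma_A\otimes\rho_B)=I(\rho_{AB}\|\gamma_A)$, the last equality being Eq.~\eqref{eq:umegaki-information-1}.

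The lower bound is the heart of the matter, and I would handle it in exact parallel to the min-side argument of Theorem~\ref{thm:asymptotic-athermality} (Eqs.~\eqref{pf:asymptotic-4}--\eqref{pf:asymptotic-10}), but using the sandwiched Rényi relative entropy $\sw{D}_\alpha$ [see Eq.~\eqref{eq:sandwiched}] with order $\alpha\in(1/2,1)$ in place of the Petz--Rényi one. The key one-shot input is a lower bound of the form $D_{\max}^\delta(\rho\|\sigma)\geq\sw{D}_\alpha(\rho\|\sigma)-c(\alpha,\delta)$ with $c$ independent of the state dimensions --- the max-relative-entropy counterpart of Eq.~\eqref{pf:asymptotic-4}; here the order must satisfy $\alpha<1$, since an $\alpha>1$ bound is destroyed by the smoothing (one checks on simple commuting examples that smoothing can collapse $D_{\max}^\delta$ far below $\sw{D}_\alpha$ for $\alpha>1$, whereas the $\alpha<1$ divergences are robust). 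Applying this inequality inside the infimum over $\sigma_{B^n}$ and invoking the additivity of the generalized sandwiched Rényi mutual information $\sw{I}_\alpha^\downarrow$ on tensor-product states (Proposition~\ref{prop:information}), I obtain $\liminf_n\tfrac1n I_{\max}^{\downarrow,\delta}(\rho_{AB}^{\otimes n}\|\gamma_A^{\otimes n})\geq \sw{I}_\alpha^\downarrow(\rho_{AB}\|\gamma_A)$ for every such $\alpha$. Taking the supremum over $\alpha\in(1/2,1)$ and exchanging it with the infimum over $\sigma_B$ via the Mosonyi--Hiai minimax theorem~\cite{mosonyi2011QuantumRenyiRelative} --- valid since $\sw{D}_\alpha$ is monotonically nondecreasing in $\alpha$ and lower semicontinuous in its second argument, with $\lim_{\alpha\to1^-}\sw{D}_\alpha=D$ --- converts the bound into $\inf_{\sigma_B}D(\rho_{AB}\|\gamma_A\otimes\sigma_B)=I(\rho_{AB}\|\gamma_A)$ by Eq.~\eqref{eq:umegaki-information-2}. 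Combined with the upper bound, this establishes the reduced AEP and hence the proposition.

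The main obstacle I anticipate is establishing (or locating in the literature) the sandwiched-Rényi lower bound on the smoothed max-relative entropy for $\alpha\in(1/2,1)$ with a dimension-independent correction term; everything downstream --- the additivity step and the minimax exchange --- is routine and already exemplified in the paper's min-side proof. A secondary point that must be checked is that the correction constant $c(\alpha,\delta)$ does not grow with $n$ when the one-shot bound is applied to $\rho_{AB}^{\otimes n}$ against $\gamma_A^{\otimes n}\otimes\sigma_{B^n}$, which is precisely what makes the $\tfrac1n$-scaled error term vanish in the limit.
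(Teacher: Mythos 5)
Your proposal is correct and follows essentially the same route as the paper's proof: the squeeze via Proposition~\ref{prop:equipartition-oneshot}, the product-state ansatz $\sigma_{B^n}=\rho_B^{\otimes n}$ combined with the asymptotic equipartition property for the smoothed max-relative entropy~\cite{tomamichel2009FullyQuantumAsymptotic} for the upper bound, and an order-$\alpha<1$ R\'enyi lower bound on $D_{\max}^{\delta}$ followed by additivity of the corresponding generalized mutual information and the Mosonyi--Hiai minimax exchange for the lower bound. The one place you deviate---using the sandwiched R\'enyi divergence $\sw{D}_\alpha$ rather than the Petz--R\'enyi divergence $\pz{D}_\alpha$---is precisely where you flag your main open obstacle, but it is not a genuine one: the paper invokes $D_{\max}^{\delta}\fleft(\rho\middle\|\sigma\fright)\geq\pz{D}_\alpha\fleft(\rho\middle\|\sigma\fright)-\frac{1}{1-\alpha}\log_2\frac{1}{1-\delta}$ for $\alpha\in[0,1)$ from Ref.~\cite[Corollary~12]{regula2025TightRelationsEquivalences}, whose correction term is independent of dimension and of $n$, and since $\sw{D}_\alpha\leq\pz{D}_\alpha$ your desired sandwiched bound follows immediately from it. At that point, however, you may as well run the lower-bound argument with $\pz{D}_\alpha$ directly, in exact parallel with Eqs.~\eqref{pf:asymptotic-4}--\eqref{pf:asymptotic-10}, which is what the paper does; your secondary concern about the correction constant not growing with $n$ is likewise settled by the explicit form of that bound.
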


\begin{proof}
It follows from Lemma~\ref{lem:relate} that
\begin{align}
	\limsup_{n\to\infty}\frac{1}{n}I_{\max}^{\uparrow,\varepsilon}\fleft(\rho_{SM}^{\otimes n}\middle\|\gamma_S^{\otimes n}\fright)&\leq\limsup_{n\to\infty}\frac{1}{n}\left(I_{\max}^{\downarrow,\frac{\varepsilon}{2}}\fleft(\rho_{SM}^{\otimes n}\middle\|\gamma_S^{\otimes n}\fright)+\ln\frac{1}{1-\left(1-\left(\frac{\varepsilon}{2}\right)^2\right)^\frac{1}{2}}\right) \\
	&=\limsup_{n\to\infty}\inf_{\sigma_{M^n}}\frac{1}{n}D_{\max}^\frac{\varepsilon}{2}\fleft(\rho_{SM}^{\otimes n}\middle\|\gamma_S^{\otimes n}\otimes\sigma_{M^n}\fright) \label{pf:equipartition-10}\\
	&\leq\limsup_{n\to\infty}\frac{1}{n}D_{\max}^\frac{\varepsilon}{2}\fleft(\rho_{SM}^{\otimes n}\middle\|\gamma_S^{\otimes n}\otimes\rho_M^{\otimes n}\fright) \\
	&\leq D\fleft(\rho_{SM}\middle\|\gamma_S\otimes\rho_M\fright) \label{pf:equipartition-11}\\
	&=I\fleft(\rho_{SM}\middle\|\gamma_S\fright), \label{pf:equipartition-12}
\end{align}
where the infimization is over every state $\sigma_{M^n}$.  Here Eq.~\eqref{pf:equipartition-10} follows from the definition of the smoothed optimized max-GMI [Eq.~\eqref{eq:smoothed-optimized-max-information}]; Eq.~\eqref{pf:equipartition-11} follows from the asymptotic equipartition property for the smoothed max-relative entropy~\cite{tomamichel2009FullyQuantumAsymptotic}; Eq.~\eqref{pf:equipartition-12} follows from the definition of the Umegaki GMI [Eq.~\eqref{eq:unoptimized-umegaki-information}].  Conversely, it follows from Lemma~\ref{lem:relate} that
\begin{align}
	\liminf_{n\to\infty}\frac{1}{n}I_{\max}^{\uparrow,\varepsilon}\fleft(\rho_{SM}^{\otimes n}\middle\|\gamma_S^{\otimes n}\fright)&\geq\liminf_{n\to\infty}\frac{1}{n}I_{\max}^{\downarrow,\varepsilon}\fleft(\rho_{SM}^{\otimes n}\middle\|\gamma_S^{\otimes n}\fright) \\
	&=\liminf_{n\to\infty}\inf_{\sigma_{M^n}}\frac{1}{n}D_{\max}^\varepsilon\fleft(\rho_{SM}^{\otimes n}\middle\|\gamma_S^{\otimes n}\otimes\sigma_{M^n}\fright) \label{pf:equipartition-13}\\
	&\geq\liminf_{n\to\infty}\inf_{\sigma_{M^n}}\sup_{\alpha\in[0,1)}\frac{1}{n}\left(\pz{D}_\alpha\fleft(\rho_{SM}^{\otimes n}\middle\|\gamma_S^{\otimes n}\otimes\sigma_{M^n}\fright)-\frac{1}{1-\alpha}\ln\frac{1}{1-\varepsilon}\right) \label{pf:equipartition-14}\\
	&\geq\sup_{\alpha\in[0,1)}\liminf_{n\to\infty}\inf_{\sigma_{M^n}}\frac{1}{n}\left(\pz{D}_\alpha\fleft(\rho_{SM}^{\otimes n}\middle\|\gamma_S^{\otimes n}\otimes\sigma_{M^n}\fright)-\frac{1}{1-\alpha}\ln\frac{1}{1-\varepsilon}\right) \\
	&=\sup_{\alpha\in[0,1)}\liminf_{n\to\infty}\frac{1}{n}\pz{I}_\alpha^\downarrow\fleft(\rho_{SM}^{\otimes n}\middle\|\gamma_S^{\otimes n}\fright) \label{pf:equipartition-15}\\
	&=I\fleft(\rho_{SM}\middle\|\gamma_S\fright), \label{pf:equipartition-16}
\end{align}
where the infimizations are over every state $\sigma_{M^n}$.  Here Eq.~\eqref{pf:equipartition-13} follows from the definition of the smoothed optimized max-GMI [Eq.~\eqref{eq:smoothed-optimized-max-information}]; Eq.~\eqref{pf:equipartition-14} follows from Ref.~\cite[Corollary~12]{regula2026TightRelationsEquivalences}; Eq.~\eqref{pf:equipartition-15} follows from the definition of the optimized Petz--Rényi GMI [see Eqs.~\eqref{eq:optimized-information} and \eqref{eq:petz-relative}]; Eq.~\eqref{pf:equipartition-16} follows from the equality between Eqs.~\eqref{pf:equipartition-4} and \eqref{pf:equipartition-9}.  Combining Eqs.~\eqref{pf:equipartition-12} and \eqref{pf:equipartition-16} leads to the desired statement.
\end{proof}

\section{Work costs with quantum feedback}
\label{sec:work}

\begin{definition}[Single-shot work costs under QFGOs]
\label{def:work}
For two states $(\rho_{SM},\gamma_S)$ and $(\rho'_{S'M'},\gamma'_{S'})$ and an error parameter $\varepsilon\in[0,1]$, the \emph{single-shot work cost under QFGOs} is defined as
\begin{align}
	&W^\varepsilon\fleft(\fleft(\rho_{SM},\gamma_S\vphantom{\rho'_{S'M'},\gamma'_{S'}}\fright)\to\fleft(\rho'_{S'M'},\gamma'_{S'}\fright)\fright) \notag\\
	&\coloneq\beta^{-1}\inf_{\substack{S_0,S_1, \\ \ch{N}_{S_0SM\to S_1S'M'}}}\left\{\ln\left\lvert S_0\right\rvert-\ln\left\lvert S_1\right\rvert\colon\delta\fleft(\ch{N}_{S_0SM\to S_1S'M'}\fleft[\op{0}{0}_{S_0}\otimes\rho_{SM}\fright],\op{0}{0}_{S_1}\otimes\rho'_{S'M'}\fright)\leq\varepsilon\right\}, \label{eq:work}
\end{align}
where the infimization is over every initial battery $S_0$, final battery $S_1$, and QFGO $\ch{N}_{S_0SM\to S_1S'M'}$, with the Gibbs states of $S_0$ and $S_1$ being $\1_{S_0}/\lvert S_0\rvert$ and $\1_{S_1}/\lvert S_1\rvert$, respectively.  The \emph{single-shot work of formation under QFGOs} is defined as
\begin{align}
	W_\abb{form}^\varepsilon\fleft(\rho_{SM},\gamma_S\fright)&\coloneq W^\varepsilon\fleft(\fleft(\gamma_S,\gamma_S\fright)\to\fleft(\rho_{SM},\gamma_S\fright)\fright).
\end{align}
The \emph{single-shot extractable work under QFGOs} is defined as
\begin{align}
	W_\abb{extr}^\varepsilon\fleft(\rho_{SM},\gamma_S\fright)&\coloneq-W^\varepsilon\fleft(\fleft(\rho_{SM},\gamma_S\fright)\to\fleft(\gamma_S,\gamma_S\fright)\fright).
\end{align}
\end{definition}

\subsection{Work of formation}
\label{sec:formation}

\begin{theorem}[Work of formation under QFGOs; restatement of Theorem~\ref{thm:formation}]
\label{thm:formation-supp}
For a target state $(\rho_{SM},\gamma_S)$ and an error parameter $\varepsilon\in[0,1]$, the single-shot work of formation under QFGOs and its asymptotic limit are given by
\begin{align}
	W_\abb{form}^\varepsilon\fleft(\rho_{SM},\gamma_S\fright)&=\beta^{-1}I_{\max}^{\uparrow,\varepsilon}\fleft(\rho_{SM}\middle\|\gamma_S\fright), \\
	\lim_{n\to\infty}\frac{1}{n}W_\abb{form}^\varepsilon\fleft(\rho_{SM}^{\otimes n},\gamma_S^{\otimes n}\fright)&=\beta^{-1}I\fleft(\rho_{SM}\middle\|\gamma_S\fright)\;\;\forall\varepsilon\in(0,1).
\end{align}
\end{theorem}

\begin{proof}[Proof of achievability]
The proof follows a similar idea to that of Ref.~\cite[Theorem~1]{gour2024InevitabilityKnowingLess}.  Let $S_1$ be a battery and $(\omega_{S_1SM},\1_{S_1}/\lvert S_1\rvert\otimes\gamma_S)$ a state such that
\begin{align}
	\delta\fleft(\omega_{S_1SM},\op{0}{0}_{S_1}\otimes\rho_{SM}\fright)&\leq\varepsilon. \label{pf:formation-1}
\end{align}
Define the following channel:
\begin{align}
	&\ch{N}_{S_0S\to S_1SM}\fleft[\cdot\fright] \notag\\
	&\coloneq\tr_{S_0S}\fleft[\op{0}{0}_{S_0}\left(\cdot\right)_{S_0S}\fright]\otimes\omega_{S_1SM}+\tr_{S_0S}\fleft[\left(\1_{S_0}-\op{0}{0}_{S_0}\right)\left(\cdot\right)_{S_0S}\fright]\otimes\frac{\left\lvert S_0\right\rvert\1_{S_1}\otimes\gamma_S\otimes\omega_M-\left\lvert S_1\right\rvert\omega_{S_1SM}}{\left(\left\lvert S_0\right\rvert-1\right)\left\lvert S_1\right\rvert}, \label{pf:formation-2}
\end{align}
where $S_0$ is a battery such that
\begin{align}
	\left\lvert S_0\right\rvert&=\left\lceil\left\lvert S_1\right\rvert e^{I_{\max}^\uparrow\fleft(\omega_{S_1SM}\middle\|\1_{S_1}\otimes\gamma_S\fright)}\right\rceil \label{pf:formation-3}\\
	&\geq\left\lvert S_1\right\rvert\left\lVert\left(\gamma_S^{-\frac{1}{2}}\otimes\omega_M^{-\frac{1}{2}}\right)\omega_{S_1SM}\left(\gamma_S^{-\frac{1}{2}}\otimes\omega_M^{-\frac{1}{2}}\right)\right\rVert_\infty, \label{pf:formation-4}
\end{align}
with $\lceil\cdot\rceil$ the ceiling function.  Here Eq.~\eqref{pf:formation-4} follows from the definition of the max-GMI [Eq.~\eqref{eq:max-information}].  This implies that $\lvert S_0\rvert\1_{S_1}\otimes\gamma_S\otimes\omega_M-\lvert S_1\rvert\omega_{S_1SM}\geq0$, confirming that $\ch{N}_{S_0S\to S_1SM}$ is a channel.  It follows from the definition of thermalization [Eq.~\eqref{eq:thermalization}] that
\begin{align}
	\ch{N}_{S_0S\to S_1SM}\circ\ch{R}_{S_0S\to S_0S}^{\frac{\1}{\left\lvert S_0\right\rvert}\otimes\gamma}&=\ch{N}_{S_0S\to S_1SM}\fleft[\frac{\1_{S_0}}{\left\lvert S_0\right\rvert}\otimes\gamma_S\fright]\otimes\tr_{S_0S} \\
	&=\left(\frac{\omega_{S_1SM}}{\left\lvert S_0\right\rvert}+\left(1-\frac{1}{\left\lvert S_0\right\rvert}\right)\frac{\left\lvert S_0\right\rvert\1_{S_1}\otimes\gamma_S\otimes\omega_M-\left\lvert S_1\right\rvert\omega_{S_1SM}}{\left(\left\lvert S_0\right\rvert-1\right)\left\lvert S_1\right\rvert}\right)\otimes\tr_{S_0S} \label{pf:formation-5}\\
	&=\frac{\1_{S_1}}{\left\lvert S_1\right\rvert}\otimes\gamma_S\otimes\omega_M\otimes\tr_{S_0S} \\
	&=\frac{\1_{S_1}}{\left\lvert S_1\right\rvert}\otimes\gamma_S\otimes\left(\tr_{S_1S}\circ\ch{N}_{S_0S\to S_1SM}\right) \label{pf:formation-6}\\
	&=\ch{R}_{S_1S\to S_1S}^{\frac{\1}{\left\lvert S_1\right\rvert}\otimes\gamma}\circ\ch{N}_{S_0S\to S_1SM}. \label{pf:formation-7}
\end{align}
Here Eqs.~\eqref{pf:formation-5} and \eqref{pf:formation-6} follow from Eq.~\eqref{pf:formation-2}; Eq.~\eqref{pf:formation-7} follows from the definition of thermalization [Eq.~\eqref{eq:thermalization}].  It follows from Lemma~\ref{lem:operation-supp} and Eq.~\eqref{pf:formation-7} that $\ch{N}_{S_0S\to S_1SM}$ is a QFGO.  It follows from Eq.~\eqref{pf:formation-2} that
\begin{align}
	\delta\fleft(\ch{N}_{S_0S\to S_1SM}\fleft[\op{0}{0}_{S_0}\otimes\gamma_S\fright],\op{0}{0}_{S_1}\otimes\rho_{SM}\fright)&=\delta\fleft(\omega_{S_1SM},\op{0}{0}_{S_1}\otimes\rho_{SM}\fright) \\
	&\leq\varepsilon. \label{pf:formation-8}
\end{align}
Here Eq.~\eqref{pf:formation-8} follows from Eq.~\eqref{pf:formation-1}.  It follows from the definition of the single-shot work of formation under QFGOs (Definition~\ref{def:work}), the fact that $\ch{N}_{S_0S\to S_1SM}$ is a QFGO, and Eq.~\eqref{pf:formation-8} that
\begin{align}
	W_\abb{form}^\varepsilon\fleft(\rho_{SM},\gamma_S\fright)&\leq\beta^{-1}\left(\ln\left\lvert S_0\right\rvert-\ln\left\lvert S_1\right\rvert\right) \\
	&=\beta^{-1}\ln\frac{\left\lceil\left\lvert S_1\right\rvert e^{I_{\max}^\uparrow\fleft(\omega_{S_1SM}\middle\|\1_{S_1}\otimes\gamma_S\fright)}\right\rceil}{\left\lvert S_1\right\rvert}. \label{pf:formation-9}
\end{align}
Here Eq.~\eqref{pf:formation-9} follows from Eq.~\eqref{pf:formation-3}.  Taking the infimum on the right-hand side of Eq.~\eqref{pf:formation-9} over every battery $S_1$ and state $\omega_{S_1SM}$ satisfying Eq.~\eqref{pf:formation-1}, we have that
\begin{align}
	W_\abb{form}^\varepsilon\fleft(\rho_{SM},\gamma_S\fright)&\leq\beta^{-1}\inf_{S_1,\omega_{S_1SM}}\left\{\ln\frac{\left\lceil\left\lvert S_1\right\rvert e^{I_{\max}^\uparrow\fleft(\omega_{S_1SM}\middle\|\1_{S_1}\otimes\gamma_S\fright)}\right\rceil}{\left\lvert S_1\right\rvert}\colon\delta\fleft(\omega_{S_1SM},\op{0}{0}_{S_1}\otimes\rho_{SM}\fright)\leq\varepsilon\right\} \\
	&\leq\beta^{-1}\inf_{S_2,S_3,\omega_{S_3SM}}\left\{\ln\frac{\left\lceil\left\lvert S_2\right\rvert\left\lvert S_3\right\rvert e^{I_{\max}^\uparrow\fleft(\op{0}{0}_{S_2}\otimes\omega_{S_3SM}\middle\|\1_{S_2}\otimes\1_{S_3}\otimes\gamma_S\fright)}\right\rceil}{\left\lvert S_2\right\rvert\left\lvert S_3\right\rvert}\colon\right. \notag\\
	&\qquad\left.\vphantom{\ln\frac{\left\lceil\left\lvert S_2\right\rvert\left\lvert S_3\right\rvert e^{I_{\max}^\uparrow\fleft(\op{0}{0}_{S_2}\otimes\omega_{S_3SM}\middle\|\1_{S_2}\otimes\1_{S_3}\otimes\gamma_S\fright)}\right\rceil}{\left\lvert S_2\right\rvert\left\lvert S_3\right\rvert}\colon}\delta\fleft(\op{0}{0}_{S_2}\otimes\omega_{S_3SM},\op{0}{0}_{S_2}\otimes\op{0}{0}_{S_3}\otimes\rho_{SM}\fright)\leq\varepsilon\right\} \\
	&=\beta^{-1}\inf_{S_2,S_3,\omega_{S_3SM}}\left\{\ln\frac{\left\lceil\left\lvert S_2\right\rvert\left\lvert S_3\right\rvert e^{I_{\max}^\uparrow\fleft(\omega_{S_3SM}\middle\|\1_{S_3}\otimes\gamma_S\fright)}\right\rceil}{\left\lvert S_2\right\rvert\left\lvert S_3\right\rvert}\colon\delta\fleft(\omega_{S_3SM},\op{0}{0}_{S_3}\otimes\rho_{SM}\fright)\leq\varepsilon\right\} \label{pf:formation-10}\\
	&=\beta^{-1}\inf_{S_3,\omega_{S_3SM}}\left\{I_{\max}^\uparrow\fleft(\omega_{S_3SM}\middle\|\1_{S_3}\otimes\gamma_S\fright)\colon\delta\fleft(\omega_{S_3SM},\op{0}{0}_{S_3}\otimes\rho_{SM}\fright)\leq\varepsilon\right\} \\
	&=\beta^{-1}I_{\max}^{\uparrow,\varepsilon}\fleft(\rho_{SM}\middle\|\gamma_S\fright), \label{pf:formation-11}
\end{align}
with the infimizations over every battery $S_1$, battery $S_2$, battery $S_3$, state $\omega_{S_1SM}$, and state $\omega_{S_3SM}$.  Here Eq.~\eqref{pf:formation-10} follows from the definitions of the max-GMI [Eq.~\eqref{eq:max-information}] and the generalized trace distance [Eq.~\eqref{eq:distance}]; Eq.~\eqref{pf:formation-11} follows from the alternative expression for the smoothed max-GMI (Lemma~\ref{lem:alternative}).  This shows the single-shot achievability.  The asymptotic achievability follows from this the asymptotic equipartition property for the smoothed max-GMI [Proposition~\ref{prop:properties-supp}(iii)].
\end{proof}

\begin{proof}[Proof of optimality]
Let $\ch{N}_{S_0S\to S_1SM}$ be a QFGO such that
\begin{align}
	\delta\fleft(\ch{N}_{S_0S\to S_1SM}\fleft[\op{0}{0}_{S_0}\otimes\gamma_S\fright],\op{0}{0}_{S_1}\otimes\rho_{SM}\fright)&\leq\varepsilon. \label{pf:formation-12}
\end{align}
It follows from the definition of the max-GMI [Eq.~\eqref{eq:max-information}] that 
\begin{align}
	\ln\left\lvert S_0\right\rvert-\ln\left\lvert S_1\right\rvert&=I_{\max}^\uparrow\fleft(\op{0}{0}_{S_0}\otimes\gamma_S\middle\|\tfrac{\1_{S_0}}{\left\lvert S_0\right\rvert}\otimes\gamma_S\fright)-\ln\left\lvert S_1\right\rvert \label{pf:formation-13}\\
	&\geq I_{\max}^\uparrow\fleft(\ch{N}_{S_0S\to S_1SM}\fleft[\op{0}{0}_{S_0}\otimes\gamma_S\fright]\middle\|\tfrac{\1_{S_1}}{\left\lvert S_1\right\rvert}\otimes\gamma_S\fright)-\ln\left\lvert S_1\right\rvert \label{pf:formation-14}\\
	&=I_{\max}^\uparrow\fleft(\ch{N}_{S_0S\to S_1SM}\fleft[\op{0}{0}_{S_0}\otimes\gamma_S\fright]\middle\|\1_{S_1}\otimes\gamma_S\fright) \label{pf:formation-15}\\
	&\geq I_{\max}^{\uparrow,\varepsilon}\fleft(\rho_{SM}\middle\|\gamma_S\fright). \label{pf:formation-16}
\end{align}
Here Eq.~\eqref{pf:formation-14} follows from the monotonicity of the max-GMI under QFGOs [Proposition~\ref{prop:properties-supp}(i)]; Eq.~\eqref{pf:formation-15} follows from the definition of the max-GMI [Eq.~\eqref{eq:max-information}]; Eq.~\eqref{pf:formation-16} follows from Eq.~\eqref{pf:formation-12} and the alternative expression for the smoothed max-GMI (Lemma~\ref{lem:alternative}).  Taking the infimum on the left-hand side of Eq.~\eqref{pf:formation-13} over every battery $S_0$, battery $S_1$, and every QFGO $\ch{N}_{S_0S\to S_1SM}$ satisfying Eq.~\eqref{pf:formation-12}, the single-shot optimality follows from the definition of the single-shot work of formation under QFGOs (Definition~\ref{def:work}).  The asymptotic optimality follows from this the asymptotic equipartition property for the smoothed max-GMI [Proposition~\ref{prop:properties-supp}(iii)].
\end{proof}

\subsection{Extractable work}
\label{sec:extraction}

\begin{theorem}[Extractable work under QFGOs; restatement of Theorem~\ref{thm:extraction}]
\label{thm:extraction-supp}
For an initial state $(\rho_{SM},\gamma_S)$ and an error parameter $\varepsilon\in[0,1]$, the single-shot extractable work under QFGOs and its asymptotic limit are given by
\begin{align}
	W_\abb{extr}^\varepsilon\fleft(\rho_{SM},\gamma_S\fright)&=\beta^{-1}I_{\min}^{\downarrow,\varepsilon}\fleft(\rho_{SM}\middle\|\gamma_S\fright), \\
	\lim_{n\to\infty}\frac{1}{n}W_\abb{extr}^\varepsilon\fleft(\rho_{SM}^{\otimes n},\gamma_S^{\otimes n}\fright)&=\beta^{-1}I\fleft(\rho_{SM}\middle\|\gamma_S\fright)\;\;\forall\varepsilon\in(0,1).
\end{align}
\end{theorem}

\begin{proof}[Proof of achievability]
Let $S_0$ be a battery and $\Lambda_{SM}$ an operator such that
\begin{align}
	\tr\fleft[\Lambda_{SM}\rho_{SM}\fright]&\geq1-\varepsilon, \label{pf:extraction-1}\\
	0\leq\Lambda_{SM}&\leq\1_{SM}. \label{pf:extraction-2}
\end{align}
Define the following channel:
\begin{align}
	&\ch{N}_{S_0SM\to S_1S}\fleft[\cdot\fright] \notag\\
	&\coloneq\tr_{S_0SM}\fleft[\Omega_{S_0SM}\left(\cdot\right)_{S_0SM}\fright]\otimes\op{0}{0}_{S_1}\otimes\gamma_S+\tr_{S_0SM}\fleft[\left(\1_{S_0SM}-\Omega_{S_0SM}\right)\left(\cdot\right)_{S_0SM}\fright]\otimes\frac{\1_{S_1}-\op{0}{0}_{S_1}}{\left\lvert S_1\right\rvert-1}\otimes\gamma_S, \label{pf:extraction-3}
\end{align}
where $\Omega_{S_0SM}$ is an operator and $S_1$ a battery such that
\begin{align}
	\Omega_{S_0SM}&\coloneq\op{0}{0}_{S_0}\otimes\Lambda_{SM}+\op{1}{1}_{S_0}\otimes\1_S\otimes\frac{\left\lvert S_0\right\rvert\1_M-\left\lvert S_1\right\rvert\tr_S\fleft[\Lambda_{SM}\gamma_S\fright]}{\left\lvert S_1\right\rvert}, \label{pf:extraction-4}\\
	\left\lvert S_1\right\rvert&=\left\lfloor\frac{\left\lvert S_0\right\rvert}{\left\lVert\tr_S\fleft[\Lambda_{SM}\gamma_S\fright]\right\rVert_\infty}\right\rfloor, \label{pf:extraction-5}
\end{align}
with $\lfloor\cdot\rfloor$ the floor function.  
It follows from Eqs.~\eqref{pf:extraction-2}, \eqref{pf:extraction-4}, and \eqref{pf:extraction-5} that
\begin{align}
	0&\leq\left\lvert S_0\right\rvert\1_M-\left\lvert S_1\right\rvert\tr_S\fleft[\Lambda_{SM}\gamma_S\fright] \\
	&\leq\left\lvert S_1\right\rvert\1_M.
\end{align}
This implies that $0\leq\Omega_{S_0SM}\leq\1_{S_0SM}$, confirming that $\ch{N}_{S_0SM\to S_1S}$ is a channel.  It follows from the definition of thermalization [Eq.~\eqref{eq:thermalization}] that
\begin{align}
	&\left(\ch{N}_{S_0SM\to S_1S}\circ\ch{R}_{S_0S\to S_0S}^{\frac{\1}{\left\lvert S_0\right\rvert}\otimes\gamma}\right)\fleft[\cdot\fright] \notag\\
	&=\tr_{S_0SM}\fleft[\Omega_{S_0SM}\left(\frac{\1_{S_0}}{\left\lvert S_0\right\rvert}\otimes\gamma_S\otimes\tr_{S_0S}\fleft[\cdot\fright]\right)\fright]\otimes\op{0}{0}_{S_1}\otimes\gamma_S \notag\\
	&\qquad+\left(\tr_{S_0SM}\fleft[\cdot\fright]-\tr_{S_0SM}\fleft[\Omega_{S_0SM}\left(\frac{\1_{S_0}}{\left\lvert S_0\right\rvert}\otimes\gamma_S\otimes\tr_{S_0S}\fleft[\cdot\fright]\right)\fright]\right)\otimes\frac{\1_{S_1}-\op{0}{0}_{S_1}}{\left\lvert S_1\right\rvert-1}\otimes\gamma_S \label{pf:extraction-6}\\
	&=\frac{1}{\left\lvert S_0\right\rvert}\tr_{S_0SM}\fleft[\tr_S\fleft[\Omega_{SM}\gamma_S\fright]\left(\cdot\right)_{S_0SM}\fright]\otimes\op{0}{0}_{S_1}\otimes\gamma_S \notag\\
	&\qquad+\left(\tr_{S_0SM}\fleft[\cdot\fright]-\frac{1}{\left\lvert S_0\right\rvert}\tr_{S_0SM}\fleft[\tr_S\fleft[\Omega_{SM}\gamma_S\fright]\left(\cdot\right)_{S_0SM}\fright]\right)\otimes\frac{\1_{S_1}-\op{0}{0}_{S_1}}{\left\lvert S_1\right\rvert-1}\otimes\gamma_S \\
	&=\frac{1}{\left\lvert S_0\right\rvert}\tr_{S_0SM}\fleft[\left(\tr_S\fleft[\Lambda_{SM}\gamma_S\fright]+\frac{\left\lvert S_0\right\rvert\1_M-\left\lvert S_1\right\rvert\tr_S\fleft[\Lambda_{SM}\gamma_S\fright]}{\left\lvert S_1\right\rvert}\right)\left(\cdot\right)_{S_0SM}\fright]\otimes\op{0}{0}_{S_1}\otimes\gamma_S \notag\\
	&\qquad+\left(\tr_{S_0SM}\fleft[\cdot\fright]-\frac{1}{\left\lvert S_0\right\rvert}\tr_{S_0SM}\fleft[\left(\tr_S\fleft[\Lambda_{SM}\gamma_S\fright]+\frac{\left\lvert S_0\right\rvert\1_M-\left\lvert S_1\right\rvert\tr_S\fleft[\Lambda_{SM}\gamma_S\fright]}{\left\lvert S_1\right\rvert}\right)\left(\cdot\right)_{S_0SM}\fright]\right)\otimes\frac{\1_{S_1}-\op{0}{0}_{S_1}}{\left\lvert S_1\right\rvert-1} \notag\\
	&\qquad\otimes\gamma_S \label{pf:extraction-7}\\
	&=\tr_{S_0SM}\fleft[\cdot\fright]\otimes\frac{\1_{S_1}}{\left\lvert S_1\right\rvert}\otimes\gamma_S \\
	&=\left(\ch{R}_{S_1S\to S_1S}^{\frac{\1}{\left\lvert S_1\right\rvert}\otimes\gamma}\circ\ch{N}_{S_0SM\to S_1S}\right)\fleft[\cdot\fright]. \label{pf:extraction-8}
\end{align}
Here Eq.~\eqref{pf:extraction-6} follows from Eq.~\eqref{pf:extraction-3}; Eq.~\eqref{pf:extraction-7} follows from Eq.~\eqref{pf:extraction-4}; Eq.~\eqref{pf:extraction-8} follows from the definition of thermalization [Eq.~\eqref{eq:thermalization}].  It follows from Lemma~\ref{lem:operation-supp} and Eq.~\eqref{pf:extraction-8} that $\ch{N}_{S_0SM\to S_1S}$ is a QFGO.  It follows from Eq.~\eqref{pf:extraction-3} that
\begin{align}
	&\delta\fleft(\ch{N}_{S_0SM\to S_1}\fleft[\op{0}{0}_{S_0}\otimes\rho_{SM}\fright],\op{0}{0}_{S_1}\otimes\gamma_S\fright) \notag\\
	&=\delta\fleft(\tr\fleft[\Omega_{S_0SM}\left(\op{0}{0}_{S_0}\otimes\rho_{SM}\right)\fright]\op{0}{0}_{S_1}\otimes\gamma_S+\left(1-\tr\fleft[\Omega_{S_0SM}\left(\op{0}{0}_{S_0}\otimes\rho_{SM}\right)\fright]\right)\frac{\1_{S_1}-\op{0}{0}_{S_1}}{\left\lvert S_1\right\rvert-1}\otimes\gamma_S,\right. \notag\\
	&\qquad\left.\vphantom{\tr\fleft[\Omega_{S_0SM}\left(\op{0}{0}_{S_0}\otimes\rho_{SM}\right)\fright]\op{0}{0}_{S_1}\otimes\gamma_S+\left(1-\tr\fleft[\Omega_{S_0SM}\left(\op{0}{0}_{S_0}\otimes\rho_{SM}\right)\fright]\right)\frac{\1_{S_1}-\op{0}{0}_{S_1}}{\left\lvert S_1\right\rvert-1}\otimes\gamma_S,}\op{0}{0}_{S_1}\otimes\gamma_S\fright) \\
	&=1-\tr\fleft[\Omega_{S_0SM}\left(\op{0}{0}_{S_0}\otimes\rho_{SM}\right)\fright] \label{pf:extraction-9}\\
	&=1-\tr\fleft[\Lambda_{SM}\rho_{SM}\fright] \label{pf:extraction-10}\\
	&\leq\varepsilon. \label{pf:extraction-11}
\end{align}
Here Eq.~\eqref{pf:extraction-9} follows from the definition of the trace distance [Eq.~\eqref{eq:distance}]; Eq.~\eqref{pf:extraction-10} follows from Eq.~\eqref{pf:extraction-4}; Eq.~\eqref{pf:extraction-11} follows from Eq.~\eqref{pf:extraction-1}.  It follows from the definition of the single-shot extractable work under QFGOs (Definition~\ref{def:work}), the fact that $\ch{N}_{S_0SM\to S_1S}$ is a QFGO, and Eq.~\eqref{pf:extraction-11} that
\begin{align}
	W_\abb{extr}^\varepsilon\fleft(\rho_{SM},\gamma_S\fright)&\geq\beta^{-1}\left(\ln\left\lvert S_1\right\rvert-\ln\left\lvert S_0\right\rvert\right) \\
	&=\beta^{-1}\ln\left(\frac{1}{\left\lvert S_0\right\rvert}\left\lfloor\frac{\left\lvert S_0\right\rvert}{\left\lVert\tr_S\fleft[\Lambda_{SM}\gamma_S\fright]\right\rVert_\infty}\right\rfloor\right). \label{pf:extraction-12}
\end{align}
Here Eq.~\eqref{pf:extraction-12} follows from Eq.~\eqref{pf:extraction-5}.  Taking the supremum on the right-hand side of Eq.~\eqref{pf:extraction-12} over every battery $S_0$ and operator $\Lambda_{SM}$ satisfying Eqs.~\eqref{pf:extraction-1} and \eqref{pf:extraction-2}, we have that
\begin{align}
	W_\abb{extr}^\varepsilon\fleft(\rho_{SM},\gamma_S\fright)&\geq\beta^{-1}\sup_{S_0,\Lambda_{SM}}\left\{\ln\left(\frac{1}{\left\lvert S_0\right\rvert}\left\lfloor\frac{\left\lvert S_0\right\rvert}{\left\lVert\tr_S\fleft[\Lambda_{SM}\gamma_S\fright]\right\rVert_\infty}\right\rfloor\right)\colon\tr\fleft[\Lambda_{SM}\rho_{SM}\fright]\geq1-\varepsilon,\;0\leq\Lambda_{SM}\leq\1_{SM}\right\} \\
	&=\beta^{-1}\sup_{\Lambda_{SM}}\left\{-\ln\left\lVert\tr_S\fleft[\Lambda_{SM}\gamma_S\fright]\right\rVert_\infty\colon\tr\fleft[\Lambda_{SM}\rho_{SM}\fright]\geq1-\varepsilon,\;0\leq\Lambda_{SM}\leq\1_{SM}\right\} \\
	&=\beta^{-1}I_{\min}^{\downarrow,\varepsilon}\fleft(\rho_{SM}\middle\|\gamma_S\fright). \label{pf:extraction-13}
\end{align}
Here Eq.~\eqref{pf:extraction-13} follows from the definition of the smoothed min-GMI [Eq.~\eqref{eq:smoothed-min-information}].  This shows the single-shot achievability.  The asymptotic achievability follows from this the asymptotic equipartition property for the smoothed min-GMI [Proposition~\ref{prop:properties-supp}(iii)].
\end{proof}

\begin{proof}[Proof of optimality]
Let $\ch{N}_{S_0SM\to S_1S}$ be a QFGO such that
\begin{align}
	\delta\fleft(\ch{N}_{S_0SM\to S_1S}\fleft[\op{0}{0}_{S_0}\otimes\rho_{SM}\fright],\op{0}{0}_{S_1}\otimes\gamma_S\fright)&\leq\varepsilon. \label{pf:extraction-14}
\end{align}
It follows from the definition of the trace distance [Eq.~\eqref{eq:distance}] that
\begin{align}
	&\tr\fleft[\op{0}{0}_{S_1}\ch{N}_{S_0SM\to S_1S}\fleft[\op{0}{0}_{S_0}\otimes\rho_{SM}\fright]\fright] \notag\\
	&=1-\tr\fleft[\left(\1_{S_1}-\op{0}{0}_{S_1}\right)\left(\ch{N}_{S_0SM\to S_1S}\fleft[\op{0}{0}_{S_0}\otimes\rho_{SM}\fright]-\op{0}{0}_{S_1}\otimes\gamma_S\right)\fright] \\
	&\geq1-\sup_{\Lambda_{S_1S}}\left\{\tr\fleft[\Lambda_{S_1S}\left(\ch{N}_{S_0SM\to S_1S}\fleft[\op{0}{0}_{S_0}\otimes\rho_{SM}\fright]-\op{0}{0}_{S_1}\otimes\gamma_S\right)\fright]\colon0\leq\Lambda_{S_1S}\leq\1_{S_1S}\right\} \\
	&=1-\delta\fleft(\ch{N}_{S_0SM\to S_1S}\fleft[\op{0}{0}_{S_0}\otimes\rho_{SM}\fright],\op{0}{0}_{S_1}\otimes\gamma_S\fright) \label{pf:extraction-15}\\
	&\geq1-\varepsilon. \label{pf:extraction-16}
\end{align}
Here Eq.~\eqref{pf:extraction-15} follows from the variational expression of the trace distance [Eq.~\eqref{eq:dual}]; Eq.~\eqref{pf:extraction-16} follows from Eq.~\eqref{pf:extraction-14}.  It follows from the definition of the smoothed min-GMI [Eq.~\eqref{eq:smoothed-min-information}] that
\begin{align}
	\ln\left\lvert S_1\right\rvert-\ln\left\lvert S_0\right\rvert&=-\ln\tr\fleft[\op{0}{0}_{S_1}\left(\frac{\1_{S_1}}{\left\lvert S_1\right\rvert}\otimes\gamma_S\right)\fright]-\ln\left\lvert S_0\right\rvert \label{pf:extraction-17}\\
	&\leq I_{\min}^{\downarrow,\varepsilon}\fleft(\ch{N}_{S_0SM\to S_1S}\fleft[\op{0}{0}_{S_0}\otimes\rho_{SM}\fright]\middle\|\tfrac{\1_{S_1}}{\left\lvert S_1\right\rvert}\otimes\gamma_S\fright)-\ln\left\lvert S_0\right\rvert \\
	&\leq I_{\min}^{\downarrow,\varepsilon}\fleft(\op{0}{0}_{S_0}\otimes\rho_{SM}\middle\|\tfrac{\1_{S_0}}{\left\lvert S_0\right\rvert}\otimes\gamma_S\fright)-\ln\left\lvert S_0\right\rvert \label{pf:extraction-18}\\
	&=I_{\min}^{\downarrow,\varepsilon}\fleft(\rho_{SM}\middle\|\gamma_S\fright). \label{pf:extraction-19}
\end{align}
Here Eq.~\eqref{pf:extraction-18} follows from the monotonicity of the smoothed min-GMI under QFGOs [Proposition~\ref{prop:properties-supp}(i)]; Eq.~\eqref{pf:extraction-19} follows from the robustness of the smoothed min-GMI against battery [Proposition~\ref{prop:properties-supp}(ii)].  Taking the supremum on the left-hand side of Eq.~\eqref{pf:extraction-17} over every battery $S_0$, battery $S_1$, and every QFGO $\ch{N}_{S_0SM\to S_1S}$ satisfying Eq.~\eqref{pf:extraction-14}, the single-shot optimality follows from the definition of the single-shot extractable work under QFGOs (Definition~\ref{def:work}).  The asymptotic optimality follows from this the asymptotic equipartition property for the smoothed min-GMI [Proposition~\ref{prop:properties-supp}(iii)].
\end{proof}

\subsection{Single-shot bounds for a general conversion}
\label{sec:bounds}

\begin{lemma}[Triangle inequality]
\label{lem:triangle}
For three states $(\rho_{SM},\gamma_S)$, $(\rho'_{S'M'},\gamma'_{S'})$, and $(\rho''_{S''M''},\gamma''_{S''})$ and two error parameters $\varepsilon\in[0,1]$ and $\eta\in[0,\varepsilon]$,
\begin{align}
	W^\varepsilon\fleft(\fleft(\rho_{SM},\gamma_S\vphantom{\rho'_{S'M'},\gamma'_{S'}}\fright)\to\fleft(\rho'_{S'M'},\gamma'_{S'}\fright)\fright)&\leq W^\eta\fleft(\fleft(\rho_{SM},\gamma_S\vphantom{\rho''_{S''M''},\gamma''_{S''}}\fright)\to\fleft(\rho''_{S''M''},\gamma''_{S''}\fright)\fright)+W^{\varepsilon-\eta}\fleft(\fleft(\rho''_{S''M''},\gamma''_{S''}\fright)\to\fleft(\rho'_{S'M'},\gamma'_{S'}\fright)\fright).
\end{align}
\end{lemma}

\begin{proof}
Let $S_0$, $S_1$, $S'_0$, and $S'_1$ be batteries and $\ch{N}_{S_0SM\to S_1S''M''}$ and $\ch{N}'_{S'_0S''M''\to S'_1S'M'}$ QFGOs such that
\begin{align}
	\delta\fleft(\ch{N}_{S_0SM\to S_1S''M''}\fleft[\op{0}{0}_{S_0}\otimes\rho_{SM}\fright],\op{0}{0}_{S_1}\otimes\rho''_{S''M''}\fright)&\leq\eta, \label{pf:triangle-1}\\
	\delta\fleft(\ch{N}'_{S'_0S''M''\to S'_1S'M'}\fleft[\op{0}{0}_{S'_0}\otimes\rho''_{S''M''}\fright],\op{0}{0}_{S'_1}\otimes\rho'_{S'M'}\fright)&\leq\varepsilon-\eta. \label{pf:triangle-2}
\end{align}
Define the following QFGO:
\begin{align}
	\ch{N}''_{S_0S'_0SM\to S_1S'_1S'M'}&\coloneq\ch{N}'_{S'_0S''M''\to S'_1S'M'}\circ\ch{N}_{S_0SM\to S_1S''M''}. \label{pf:triangle-3}
\end{align}
It follows from the triangle inequality for the trace distance that
\begin{align}
	&\delta\fleft(\ch{N}''_{S_0S'_0SM\to S_1S'_1S'M'}\fleft[\op{0}{0}_{S_0}\otimes\op{0}{0}_{S'_0}\otimes\rho_{SM}\fright],\op{0}{0}_{S_1}\otimes\op{0}{0}_{S'_1}\otimes\rho'_{S'M'}\fright) \notag\\
	&=\delta\fleft(\ch{N}''_{S_0S'_0SM\to S_1S'_1S'M'}\fleft[\op{0}{0}_{S_0}\otimes\op{0}{0}_{S'_0}\otimes\rho_{SM}\fright],\op{0}{0}_{S_1}\otimes\ch{N}'_{S'_0S''M''\to S'_1S'M'}\fleft[\op{0}{0}_{S'_0}\otimes\rho''_{S''M''}\fright]\fright) \notag\\
	&\qquad+\delta\fleft(\op{0}{0}_{S_1}\otimes\ch{N}'_{S'_0S''M''\to S'_1S'M'}\fleft[\op{0}{0}_{S'_0}\otimes\rho''_{S''M''}\fright],\op{0}{0}_{S_1}\otimes\op{0}{0}_{S'_1}\otimes\rho'_{S'M'}\fright) \\
	&=\delta\fleft(\left(\ch{N}'_{S'_0S''M''\to S'_1S'M'}\circ\ch{N}_{S_0SM\to S_1S''M''}\right)\fleft[\op{0}{0}_{S_0}\otimes\op{0}{0}_{S'_0}\otimes\rho_{SM}\fright],\right. \notag\\
	&\qquad\left.\op{0}{0}_{S_1}\otimes\ch{N}'_{S'_0S''M''\to S'_1S'M'}\fleft[\op{0}{0}_{S'_0}\otimes\rho''_{S''M''}\fright]\fright) \notag\\
	&\qquad+\delta\fleft(\op{0}{0}_{S_1}\otimes\ch{N}'_{S'_0S''M''\to S'_1S'M'}\fleft[\op{0}{0}_{S'_0}\otimes\rho''_{S''M''}\fright],\op{0}{0}_{S_1}\otimes\op{0}{0}_{S'_1}\otimes\rho'_{S'M'}\fright) \label{pf:triangle-4}\\
	&\leq\delta\fleft(\ch{N}_{S_0SM\to S_1S''M''}\fleft[\op{0}{0}_{S_0}\otimes\op{0}{0}_{S'_0}\otimes\rho_{SM}\fright],\op{0}{0}_{S_1}\otimes\op{0}{0}_{S'_0}\otimes\rho''_{S''M''}\fright) \notag\\
	&\qquad+\delta\fleft(\op{0}{0}_{S_1}\otimes\ch{N}'_{S'_0S''M''\to S'_1S'M'}\fleft[\op{0}{0}_{S'_0}\otimes\rho''_{S''M''}\fright],\op{0}{0}_{S_1}\otimes\op{0}{0}_{S'_1}\otimes\rho'_{S'M'}\fright) \label{pf:triangle-5}\\
	&=\delta\fleft(\ch{N}_{S_0SM\to S_1S''M''}\fleft[\op{0}{0}_{S_0}\otimes\rho_{SM}\fright],\op{0}{0}_{S_1}\otimes\rho''_{S''M''}\fright) \notag\\
	&\qquad+\delta\fleft(\ch{N}'_{S'_0S''M''\to S'_1S'M'}\fleft[\op{0}{0}_{S'_0}\otimes\rho''_{S''M''}\fright],\op{0}{0}_{S'_1}\otimes\rho'_{S'M'}\fright) \label{pf:triangle-6}\\
	&\leq\eta+\varepsilon-\eta \label{pf:triangle-7}\\
	&=\varepsilon.
\end{align}
Here Eq.~\eqref{pf:triangle-4} follows from Eq.~\eqref{pf:triangle-3}; Eq.~\eqref{pf:triangle-5} follows from the data-processing inequality for the trace distance; Eq.~\eqref{pf:triangle-6} follows from the definition of the trace distance [Eq.~\eqref{eq:distance}]; Eq.~\eqref{pf:triangle-7} follows from Eqs.~\eqref{pf:triangle-1} and \eqref{pf:triangle-2}.  It follows from the definition of the single-shot work cost under QFGOs (Definition~\ref{def:work}) that
\begin{align}
	W^\varepsilon\fleft(\fleft(\rho_{SM},\gamma_S\vphantom{\rho'_{S'M'},\gamma'_{S'}}\fright)\to\fleft(\rho'_{S'M'},\gamma'_{S'}\fright)\fright)&\leq\ln\left\lvert S_0\right\rvert+\ln\left\lvert S'_0\right\rvert-\ln\left\lvert S'_1\right\rvert-\ln\left\lvert S_0\right\rvert. \label{pf:triangle-8}
\end{align}
Taking an infimum on the right-hand side of Eq.~\eqref{pf:triangle-8} over every battery $S_0$, battery $S_1$, battery $S'_0$, battery $S'_1$, QFGO $\ch{N}_{S_0SM\to S_1S''M''}$, and QFGO $\ch{N}'_{S'_0S''M''\to S'_1S'M'}$ satisfying Eqs.~\eqref{pf:triangle-1} and \eqref{pf:triangle-2}, the desired statement follows from the definition of the single-shot work cost under QFGOs (Definition~\ref{def:work}).
\end{proof}

\begin{proposition}[Bounds on the single-shot work cost under QFGOs; restatement of Proposition~\ref{prop:bounds}]
\label{prop:bounds-supp}
For an initial state $(\rho_{SM},\gamma_S)$, a target state $(\rho'_{S'M'},\gamma'_{S'})$, and an error parameter $\varepsilon\in[0,1)$, the single-shot work cost under QFGOs is bounded as
\begin{align}
	\beta^{-1}\max\left\{A,B\right\}&\leq W^\varepsilon\fleft(\fleft(\rho_{SM},\gamma_S\vphantom{\rho'_{S'M'},\gamma'_{S'}}\fright)\to\fleft(\rho'_{S'M'},\gamma'_{S'}\fright)\fright)\leq\beta^{-1}\inf_{\eta\in[0,\varepsilon]}\left(I_{\max}^{\uparrow,\varepsilon-\eta}\fleft(\rho'_{S'M'}\middle\|\gamma'_{S'}\fright)-I_{\min}^{\downarrow,\eta}\fleft(\rho_{SM}\middle\|\gamma_S\fright)\right),
\end{align}
where
\begin{align}
	A&\equiv\sup_{\alpha\in(\frac{1}{2},1)}\left(\sw{I}_\alpha^\downarrow\fleft(\rho'_{S'M'}\middle\|\gamma'_{S'}\fright)-\sw{I}_\frac{\alpha}{2\alpha-1}^\downarrow\fleft(\rho_{SM}\middle\|\gamma_S\fright)-\frac{2\alpha}{1-\alpha}\ln\frac{1}{1-\varepsilon}\right), \\
	B&\equiv\sup_{\alpha\in(0,1)}\left(\pz{I}_\alpha^\downarrow\fleft(\rho'_{S'M'}\middle\|\gamma'_{S'}\fright)-\pz{I}_{2-\alpha}^\downarrow\fleft(\rho_{SM}\middle\|\gamma_S\fright)-\frac{2}{1-\alpha}\ln\frac{1}{1-\varepsilon}\right).
\end{align}
\end{proposition}

\begin{proof}[Proof of the upper bound]
It follows from the triangle inequality for single-shot work costs under QFGOs (Lemma~\ref{lem:triangle}) that
\begin{align}
	&W^\varepsilon\fleft(\fleft(\rho_{SM},\gamma_S\vphantom{\rho'_{S'M'},\gamma'_{S'}}\fright)\to\fleft(\rho'_{S'M'},\gamma'_{S'}\fright)\fright) \notag\\
	&\leq\inf_{\eta\in[0,\varepsilon]}\left(W^\eta\fleft(\fleft(\rho_{SM},\gamma_S\fright)\to\fleft(\gamma_S,\gamma_S\fright)\fright)+W^0\fleft(\fleft(\gamma_S,\gamma_S\vphantom{\gamma'_{S'},\gamma'_{S'}}\fright)\to\fleft(\gamma'_{S'},\gamma'_{S'}\fright)\fright)+W^{\varepsilon-\eta}\fleft(\fleft(\gamma'_{S'},\gamma'_{S'}\fright)\to\fleft(\rho'_{S'M'},\gamma'_{S'}\fright)\fright)\right) \\
	&=\inf_{\eta\in[0,\varepsilon]}\left(W_\abb{form}^{\varepsilon-\eta}\fleft(\rho'_{S'M'},\gamma'_{S'}\fright)-W_\abb{extr}^\eta\fleft(\rho_{SM},\gamma_S\fright)\right) \label{pf:upper-1}\\
	&=\beta^{-1}\inf_{\eta\in[0,\varepsilon]}\left(I_{\max}^{\uparrow,\varepsilon-\eta}\fleft(\rho'_{S'M'}\middle\|\gamma'_{S'}\fright)-I_{\min}^{\downarrow,\eta}\fleft(\rho_{SM}\middle\|\gamma_S\fright)\right). \label{pf:upper-2}
\end{align}
Here Eq.~\eqref{pf:upper-1} follows from the definitions of the single-shot work of formation and extractable work under QFGOs (Definition~\ref{def:work}); Eq.~\eqref{pf:upper-2} follows from Theorems~\ref{thm:formation} and \ref{thm:extraction}.
\end{proof}

\begin{proof}[Proof of the lower bound]
Let $\ch{N}_{S_0SM\to S_1S'M'}$ be a QFGO such that
\begin{align}
	\delta\fleft(\ch{N}_{S_0SM\to S_1S'M'}\fleft[\op{0}{0}_{S_0}\otimes\rho_{SM}\fright],\op{0}{0}_{S_1}\otimes\rho'_{S'M'}\fright)&\leq\varepsilon. \label{pf:lower-1}
\end{align}
This implies that~\cite[Theorem~6.14]{khatri2024PrinciplesQuantumCommunication}
\begin{align}
	F\fleft(\ch{N}_{S_0SM\to S_1S'M'}\fleft[\op{0}{0}_{S_0}\otimes\rho_{SM}\fright],\op{0}{0}_{S_1}\otimes\rho'_{S'M'}\fright)&\geq\left(1-\varepsilon\right)^2, \label{pf:lower-2}
\end{align}
where $F(\rho,\sigma)\coloneq\lVert\rho^\frac{1}{2}\sigma^\frac{1}{2}\rVert_1^2$ is the fidelity between $\rho$ and $\sigma$.  For $\alpha\in(1/2,1)$, it follows from the additivity of optimized sandwiched Rényi GMI~\cite[Lemma~7]{hayashi2016CorrelationDetectionOperational} that
\begin{align}
	\sw{I}_\frac{\alpha}{2\alpha-1}^\downarrow\fleft(\rho_{SM}\middle\|\gamma_S\fright)&=\sw{I}_\frac{\alpha}{2\alpha-1}^\downarrow\fleft(\op{0}{0}_{S_0}\otimes\rho_{SM}\middle\|\tfrac{\1_{S_0}}{\left\lvert S_0\right\rvert}\otimes\gamma_S\fright)-\ln\left\lvert S_0\right\rvert \\
	&\geq\sw{I}_\frac{\alpha}{2\alpha-1}^\downarrow\fleft(\ch{N}_{S_0SM\to S_1S'M'}\fleft[\op{0}{0}_{S_0}\otimes\rho_{SM}\fright]\middle\|\tfrac{\1_{S_1}}{\left\lvert S_1\right\rvert}\otimes\gamma'_{S'}\fright)-\ln\left\lvert S_0\right\rvert \label{pf:lower-3}\\
	&=\inf_{\sigma'_{M'}}\sw{D}_\frac{\alpha}{2\alpha-1}\fleft(\ch{N}_{S_0SM\to S_1S'M'}\fleft[\op{0}{0}_{S_0}\otimes\rho_{SM}\fright]\middle\|\tfrac{\1_{S_1}}{\left\lvert S_1\right\rvert}\otimes\gamma'_{S'}\otimes\sigma'_{M'}\fright)-\ln\left\lvert S_0\right\rvert \label{pf:lower-4}\\
	&\geq\inf_{\sigma'_{M'}}\sw{D}_\alpha\fleft(\op{0}{0}_{S_1}\otimes\rho'_{S'M'}\middle\|\tfrac{\1_{S_1}}{\left\lvert S_1\right\rvert}\otimes\gamma'_{S'}\otimes\sigma'_{M'}\fright)-\frac{2\alpha}{1-\alpha}\ln\frac{1}{1-\varepsilon}-\ln\left\lvert S_0\right\rvert \label{pf:lower-5}\\
	&=\sw{I}_\alpha^\downarrow\fleft(\op{0}{0}_{S_1}\otimes\rho'_{S'M'}\middle\|\tfrac{\1_{S_1}}{\left\lvert S_1\right\rvert}\otimes\gamma'_{S'}\fright)-\frac{2\alpha}{1-\alpha}\ln\frac{1}{1-\varepsilon}-\ln\left\lvert S_0\right\rvert \label{pf:lower-6}\\
	&=\ln\left\lvert S_1\right\rvert+\sw{I}_\alpha^\downarrow\fleft(\rho'_{S'M'}\middle\|\gamma'_{S'}\fright)-\frac{2\alpha}{1-\alpha}\ln\frac{1}{1-\varepsilon}-\ln\left\lvert S_0\right\rvert, \label{pf:lower-7}
\end{align}
with the infimization over every state $\sigma'_{M'}$.  Here Eq.~\eqref{pf:lower-3} follows from the monotonicity of the optimized sandwiched Rényi GMI under QFGOs (see Lemma~\ref{lem:monotone}); Eqs.~\eqref{pf:lower-4} and \eqref{pf:lower-6} follow from the definition of the optimized sandwiched Rényi GMI [see Eqs.~\eqref{eq:optimized-information} and \eqref{eq:sandwiched-relative}]; Eq.~\eqref{pf:lower-5} follows from Ref.~\cite[Lemma~1]{wang2019ResourceTheoryAsymmetric} and Eq.~\eqref{pf:lower-2}; Eq.~\eqref{pf:lower-7} follows from the additivity of optimized sandwiched Rényi GMI~\cite[Lemma~7]{hayashi2016CorrelationDetectionOperational}.  Rearranging the terms of Eq.~\eqref{pf:lower-7}, we have that
\begin{align}
	\ln\left\lvert S_0\right\rvert-\ln\left\lvert S_1\right\rvert&\geq\sw{I}_\alpha^\downarrow\fleft(\rho'_{S'M'}\middle\|\gamma'_{S'}\fright)-\sw{I}_\frac{\alpha}{2\alpha-1}^\downarrow\fleft(\rho_{SM}\middle\|\gamma_S\fright)-\frac{2\alpha}{1-\alpha}\ln\frac{1}{1-\varepsilon}. \label{pf:lower-8}
\end{align}
Taking the infimum on the left-hand side of Eq.~\eqref{pf:lower-8} over every QFGO $\ch{N}_{S_0SM\to S_1S'M'}$ satisfying Eq.~\eqref{pf:lower-1} and taking the supremum on the right-hand side of Eq.~\eqref{pf:lower-8} over every $\alpha\in(1/2,1)$, it follows from the definition of the single-shot work cost under QFGOs (Definition~\ref{def:work}) that
\begin{align}
	W^\varepsilon\fleft(\fleft(\rho_{SM},\gamma_S\vphantom{\rho'_{S'M'},\gamma'_{S'}}\fright)\to\fleft(\rho'_{S'M'},\gamma'_{S'}\fright)\fright)&\geq\beta^{-1}\sup_{\alpha\in(\frac{1}{2},1)}\left(\sw{I}_\alpha^\downarrow\fleft(\rho'_{S'M'}\middle\|\gamma'_{S'}\fright)-\sw{I}_\frac{\alpha}{2\alpha-1}^\downarrow\fleft(\rho_{SM}\middle\|\gamma_S\fright)-\frac{2\alpha}{1-\alpha}\ln\frac{1}{1-\varepsilon}\right). \label{pf:lower-9}
\end{align}
Applying the same reasoning to the optimized Petz--Rényi GMI [see Eqs.~\eqref{eq:optimized-information} and \eqref{eq:petz-relative}], for $\alpha\in(0,1)$, we have that
\begin{align}
	\pz{I}_{2-\alpha}^\downarrow\fleft(\rho_{SM}\middle\|\gamma_S\fright)&=\pz{I}_{2-\alpha}^\downarrow\fleft(\op{0}{0}_{S_0}\otimes\rho_{SM}\middle\|\tfrac{\1_{S_0}}{\left\lvert S_0\right\rvert}\otimes\gamma_S\fright)-\ln\left\lvert S_0\right\rvert \\
	&\geq\pz{I}_{2-\alpha}^\downarrow\fleft(\ch{N}_{S_0SM\to S_1S'M'}\fleft[\op{0}{0}_{S_0}\otimes\rho_{SM}\fright]\middle\|\tfrac{\1_{S_1}}{\left\lvert S_1\right\rvert}\otimes\gamma'_{S'}\fright)-\ln\left\lvert S_0\right\rvert \\
	&=\inf_{\sigma'_{M'}}\pz{D}_{2-\alpha}\fleft(\ch{N}_{S_0SM\to S_1S'M'}\fleft[\op{0}{0}_{S_0}\otimes\rho_{SM}\fright]\middle\|\tfrac{\1_{S_1}}{\left\lvert S_1\right\rvert}\otimes\gamma'_{S'}\otimes\sigma'_{M'}\fright)-\ln\left\lvert S_0\right\rvert \\
	&\geq\inf_{\sigma'_{M'}}\pz{D}_\alpha\fleft(\op{0}{0}_{S_1}\otimes\rho'_{S'M'}\middle\|\tfrac{\1_{S_1}}{\left\lvert S_1\right\rvert}\otimes\gamma'_{S'}\otimes\sigma'_{M'}\fright)-\frac{2}{\alpha-1}\ln\frac{1}{1-\varepsilon}-\ln\left\lvert S_0\right\rvert \label{pf:lower-10}\\
	&=\pz{I}_\alpha^\downarrow\fleft(\op{0}{0}_{S_1}\otimes\rho'_{S'M'}\middle\|\tfrac{\1_{S_1}}{\left\lvert S_1\right\rvert}\otimes\gamma'_{S'}\fright)-\frac{2}{\alpha-1}\ln\frac{1}{1-\varepsilon}-\ln\left\lvert S_0\right\rvert \\
	&=\pz{I}_\alpha^\downarrow\fleft(\rho'_{S'M'}\middle\|\gamma'_{S'}\fright)+\ln\left\lvert S_1\right\rvert-\frac{2}{\alpha-1}\ln\frac{1}{1-\varepsilon}-\ln\left\lvert S_0\right\rvert, \label{pf:lower-11}
\end{align}
with the infimization over every state $\sigma'_{M'}$.  Here Eq.~\eqref{pf:lower-10} follows from Ref.~\cite[Lemma~3]{wang2019ResourceTheoryAsymmetric} and Eq.~\eqref{pf:lower-1}.  Rearranging the terms of Eq.~\eqref{pf:lower-11}, we have that
\begin{align}
	\ln\left\lvert S_0\right\rvert-\ln\left\lvert S_1\right\rvert&\geq\pz{I}_\alpha^\downarrow\fleft(\rho'_{S'M'}\middle\|\gamma'_{S'}\fright)-\pz{I}_{2-\alpha}^\downarrow\fleft(\rho_{SM}\middle\|\gamma_S\fright)-\frac{2}{1-\alpha}\ln\frac{1}{1-\varepsilon}. \label{pf:lower-12}
\end{align}
Taking the infimum on the left-hand side of Eq.~\eqref{pf:lower-12} over every QFGO $\ch{N}_{S_0SM\to S_1S'M'}$ satisfying Eq.~\eqref{pf:lower-1} and taking the supremum on the right-hand side of Eq.~\eqref{pf:lower-12} over every $\alpha\in(0,1)$, it follows from the definition of the single-shot work cost under QFGOs (Definition~\ref{def:work}) that
\begin{align}
	W^\varepsilon\fleft(\fleft(\rho_{SM},\gamma_S\vphantom{\rho'_{S'M'},\gamma'_{S'}}\fright)\to\fleft(\rho'_{S'M'},\gamma'_{S'}\fright)\fright)&\geq\beta^{-1}\sup_{\alpha\in(0,1)}\left(\pz{I}_\alpha^\downarrow\fleft(\rho'_{S'M'}\middle\|\gamma'_{S'}\fright)-\pz{I}_{2-\alpha}^\downarrow\fleft(\rho_{SM}\middle\|\gamma_S\fright)-\frac{2}{1-\alpha}\ln\frac{1}{1-\varepsilon}\right). \label{pf:lower-13}
\end{align}
Combining Eqs.~\eqref{pf:lower-9} and \eqref{pf:lower-13} leads to the desired lower bound.
\end{proof}

\subsection{Generalized second law with quantum feedback}
\label{sec:asymptotic}

\begin{theorem}[Asymptotic work cost under QFGOs; restatement of Theorem~\ref{thm:law}]
\label{thm:asymptotic}
For an initial state $(\rho_{SM},\gamma_S)$, a target state $(\rho'_{S'M'},\gamma'_{S'})$, and an error parameter $\varepsilon\in(0,1)$, the asymptotic work cost under QFGOs is given by
\begin{align}
	\lim_{n\to\infty}\frac{1}{n}W^\varepsilon\fleft(\fleft(\rho_{SM}^{\otimes n},\gamma_S^{\otimes n}\fright)\to\fleft(\rho_{S'M'}^{\prime\,\otimes n},\gamma_{S'}^{\prime\,\otimes n}\fright)\fright)&=\beta^{-1}\left(I\fleft(\rho'_{S'M'}\middle\|\gamma'_{S'}\fright)-I\fleft(\rho_{SM}\middle\|\gamma_S\fright)\right).
\end{align}
\end{theorem}

\begin{proof}
It follows from the upper bound in Proposition~\ref{prop:bounds} that
\begin{align}
	&\limsup_{n\to\infty}\frac{1}{n}W^\varepsilon\fleft(\fleft(\rho_{SM}^{\otimes n},\gamma_S^{\otimes n}\fright)\to\fleft(\rho_{S'M'}^{\prime\,\otimes n},\gamma_{S'}^{\prime\,\otimes n}\fright)\fright) \notag\\
	&\leq\beta^{-1}\limsup_{n\to\infty}\inf_{\eta\in[0,\varepsilon]}\frac{1}{n}\left(I_{\max}^{\uparrow,\varepsilon-\eta}\fleft(\rho_{S'M'}^{\prime\,\otimes n}\middle\|\gamma_{S'}^{\prime\,\otimes n}\fright)-I_{\min}^{\downarrow,\eta}\fleft(\rho_{SM}^{\otimes n}\middle\|\gamma_S^{\otimes n}\fright)\right) \\
	&\leq\beta^{-1}\inf_{\eta\in(0,\varepsilon)}\limsup_{n\to\infty}\frac{1}{n}\left(I_{\max}^{\uparrow,\varepsilon-\eta}\fleft(\rho_{S'M'}^{\prime\,\otimes n}\middle\|\gamma_{S'}^{\prime\,\otimes n}\fright)-I_{\min}^{\downarrow,\eta}\fleft(\rho_{SM}^{\otimes n}\middle\|\gamma_S^{\otimes n}\fright)\right) \\
	&=\beta^{-1}\left(I\fleft(\rho'_{S'M'}\middle\|\gamma'_{S'}\fright)-I\fleft(\rho_{SM}\middle\|\gamma_S\fright)\right). \label{pf:asymptotic-1}
\end{align}
Here Eq.~\eqref{pf:asymptotic-1} follows from the asymptotic equipartition properties of the smoothed max- and min-GMIs [Proposition~\ref{prop:properties-supp}(iii)].  Conversely, it follows from the lower bound in Proposition~\ref{prop:bounds} that
\begin{align}
	&\liminf_{n\to\infty}\frac{1}{n}W^\varepsilon\fleft(\fleft(\rho_{SM}^{\otimes n},\gamma_S^{\otimes n}\fright)\to\fleft(\rho_{S'M'}^{\prime\,\otimes n},\gamma_{S'}^{\prime\,\otimes n}\fright)\fright) \notag\\
	&\geq\beta^{-1}\liminf_{n\to\infty}\sup_{\alpha\in(\frac{1}{2},1)}\frac{1}{n}\left(\sw{I}_\alpha^\downarrow\fleft(\rho_{S'M'}^{\prime\,\otimes n}\middle\|\gamma_{S'}^{\prime\,\otimes n}\fright)-\sw{I}_\frac{\alpha}{2\alpha-1}^\downarrow\fleft(\rho_{SM}^{\otimes n}\middle\|\gamma_S^{\otimes n}\fright)-\frac{2\alpha}{1-\alpha}\ln\frac{1}{1-\varepsilon}\right) \\
	&\geq\beta^{-1}\sup_{\alpha\in(\frac{1}{2},1)}\liminf_{n\to\infty}\frac{1}{n}\left(\sw{I}_\alpha^\downarrow\fleft(\rho_{S'M'}^{\prime\,\otimes n}\middle\|\gamma_{S'}^{\prime\,\otimes n}\fright)-\sw{I}_\frac{\alpha}{2\alpha-1}^\downarrow\fleft(\rho_{SM}^{\otimes n}\middle\|\gamma_S^{\otimes n}\fright)-\frac{2\alpha}{1-\alpha}\ln\frac{1}{1-\varepsilon}\right) \\
	&=\beta^{-1}\sup_{\alpha\in(\frac{1}{2},1)}\left(\sw{I}_\alpha^\downarrow\fleft(\rho'_{S'M'}\middle\|\gamma'_{S'}\fright)-\sw{I}_\frac{\alpha}{2\alpha-1}^\downarrow\fleft(\rho_{SM}\middle\|\gamma_S\fright)\right) \label{pf:asymptotic-2}\\
	&=\beta^{-1}\sup_{\alpha\in(\frac{1}{2},1)}\left(\inf_{\sigma'_{M'}}\sw{D}_\alpha\fleft(\rho'_{S'M'}\middle\|\gamma'_{S'}\otimes\sigma'_{M'}\fright)-\inf_{\sigma_M}\sw{D}_\frac{\alpha}{2\alpha-1}\fleft(\rho_{SM}\middle\|\gamma_S\otimes\sigma_M\fright)\right) \label{pf:asymptotic-3}\\
	&\geq\beta^{-1}\sup_{\alpha\in(\frac{1}{2},1)}\inf_{\sigma'_{M'}}\left(\sw{D}_\alpha\fleft(\rho'_{S'M'}\middle\|\gamma'_{S'}\otimes\sigma'_{M'}\fright)-\sw{D}_\frac{\alpha}{2\alpha-1}\fleft(\rho_{SM}\middle\|\gamma_S\otimes\rho_M\fright)\right) \\
	&=\beta^{-1}\inf_{\sigma'_{M'}}\sup_{\alpha\in(\frac{1}{2},1)}\left(\sw{D}_\alpha\fleft(\rho'_{S'M'}\middle\|\gamma'_{S'}\otimes\sigma'_{M'}\fright)-\sw{D}_\frac{\alpha}{2\alpha-1}\fleft(\rho_{SM}\middle\|\gamma_S\otimes\rho_M\fright)\right) \label{pf:asymptotic-4}\\
	&=\beta^{-1}\inf_{\sigma'_{M'}}\left(D\fleft(\rho'_{S'M'}\middle\|\gamma'_{S'}\otimes\sigma'_{M'}\fright)-D\fleft(\rho_{SM}\middle\|\gamma_S\otimes\rho_M\fright)\right) \label{pf:asymptotic-5}\\
	&=\beta^{-1}\left(I\fleft(\rho'_{S'M'}\middle\|\gamma'_{S'}\fright)-I\fleft(\rho_{SM}\middle\|\gamma_S\fright)\right), \label{pf:asymptotic-6}
\end{align}
with the infimizations over every state $\sigma'_{M'}$ and state $\sigma_M$.  Here Eq.~\eqref{pf:asymptotic-2} follows from the additivity of the optimized sandwiched Rényi GMI~\cite[Lemma~7]{hayashi2016CorrelationDetectionOperational}; Eq.~\eqref{pf:asymptotic-3} follows from the definition of the optimized sandwiched Rényi GMI [see Eq.~\eqref{eq:optimized-information} and \eqref{eq:sandwiched-relative}]; Eq.~\eqref{pf:asymptotic-4} follows from the Mosonyi--Hiai minimax theorem~\cite[Corollary~A.2]{mosonyi2011QuantumRenyiRelative}, which applies because the sandwiched Rényi relative entropy is monotonically nondecreasing in $\alpha$~\cite{muller-lennert2013QuantumRenyiEntropies} and lower semicontinuous in its second argument; Eq.~\eqref{pf:asymptotic-5} follows from the nondecreasing monotonicity of the sandwiched Rényi relative entropy in $\alpha$ and its $\alpha\nearrow1$ limit given by the Umegaki relative entropy~\cite{muller-lennert2013QuantumRenyiEntropies}; Eq.~\eqref{pf:asymptotic-6} follows from the definition of the Umegaki GMI [Eqs.~\eqref{eq:unoptimized-umegaki-information} and \eqref{eq:optimized-umegaki-information}].  Combining Eqs.~\eqref{pf:asymptotic-1} and \eqref{pf:asymptotic-6} leads to the desired statement.
\end{proof}

\begin{remark}[Generalized second law with quantum feedback]
\label{rem:law}
Since every physically admissible feedback-control scheme without external work supply is a QFGO, Theorem~\ref{thm:law} implies a fundamental lower limit on the average work cost $\langle W\rangle$ of a conversion from $(\rho_{SM},\gamma_S)$ to $(\rho'_{S'M'},\gamma'_{S'})$ with quantum feedback:
\begin{align}
	\left\langle W\right\rangle&\geq\beta^{-1}\left(I\fleft(\rho'_{S'M'}\middle\|\gamma'_{S'}\fright)-I\fleft(\rho_{SM}\middle\|\gamma_S\fright)\right) \label{eq:law-1}\\
	&=\left(F\fleft(S'\middle|M'\fright)_{\rho'}-F\fleft(S'\fright)_{\gamma'}\right)-\left(F\fleft(S\middle|M\fright)_\rho-F\fleft(S\fright)_\gamma\right), \label{eq:law-2}
\end{align}
where the conditional Helmholtz free energy of a state $(\rho_{SM},\gamma_S)$ is defined as~\cite{bera2017GeneralizedLawsThermodynamics}
\begin{align}
	F\fleft(S\middle|M\fright)_\rho&\coloneq\tr\fleft[\hat{H}_S\rho_S\fright]-\beta^{-1}H\fleft(S\middle|M\fright)_\rho \\
	&=F\fleft(S\fright)_\rho+\beta^{-1}I\fleft(S;M\fright)_\rho,
\end{align}
with $H(S|M)_\rho\coloneq\tr[\rho_M\ln\rho_M]-\tr[\rho_{SM}\ln\rho_{SM}]$ the conditional von Neumann entropy and $I(S;M)_\rho\coloneq\tr[\rho_{SM}\ln\rho_{SM}]-\tr[\rho_S\ln\rho_S]-\tr[\rho_M\ln\rho_M]$ the quantum mutual information.  We refer to Eq.~\eqref{eq:law-2} as the \emph{generalized second law of thermodynamics with quantum feedback}.
\end{remark}

\subsection{Consistency with the traditional second law}
\label{sec:consistency}

We show that the generalized second law of thermodynamics with quantum feedback (see Remark~\ref{rem:law}) is fully consistent with the traditional second law of thermodynamics in a complete cycle.

To see this, we first need to recognize that our operational framework does not associate any resource cost with the transformation of the memory, and thus the average work cost $\langle W\rangle$ of a conversion from $(\rho_{SM},\gamma_S)$ to $(\rho'_{S'M'},\gamma'_{S'})$ with quantum feedback only accounts for the work consumed by the system and does not include that consumed by the memory.  To quantify the latter, the energetics of the memory become relevant, and we denote the Gibbs states of $M$ and $M'$ by $\zeta_M$ and $\zeta'_{M'}$, respectively.

We now restrict our consideration to the \emph{local} transformation of the memory during feedback control, which corresponds to a conversion from $(\rho_M,\zeta_M)$ to $(\rho_{M'},\zeta'_{M'})$.  Since feedback control only involves communication from the memory to the system and not the other way around [see Eq.~\eqref{eq:operation-1}], the processing of the memory does not depend on any external feedback towards the memory and is thus subject to the traditional second law~\cite{esposito2011SecondLawLandauer} applied to the memory locally.  Consequently, the average work consumed by the memory during feedback control is bounded as
\begin{align}
	\left\langle W_\abb{mem}\right\rangle&\geq\beta^{-1}\left(D\fleft(\rho_{M'}\middle\|\zeta'_{M'}\fright)-D\fleft(\rho_M\middle\|\zeta_M\fright)\right). \label{eq:consistency-1}
\end{align}
Adding Eqs.~\eqref{eq:law-1} and \eqref{eq:consistency-1}, the average total amount of work consumed by the system and the memory during feedback control is bounded as
\begin{align}
	\left\langle W\right\rangle+\left\langle W_\abb{mem}\right\rangle&\geq\beta^{-1}\left(I\fleft(\rho'_{S'M'}\middle\|\gamma'_{S'}\fright)-I\fleft(\rho_{SM}\middle\|\gamma_S\fright)+D\fleft(\rho_{M'}\middle\|\zeta'_{M'}\fright)-D\fleft(\rho_M\middle\|\zeta_M\fright)\right) \\
	&=\beta^{-1}\left(D\fleft(\rho'_{S'M'}\middle\|\gamma'_{S'}\otimes\zeta'_{M'}\fright)-D\fleft(\rho_{SM}\middle\|\gamma_S\otimes\zeta_M\fright)\right). \label{eq:consistency-2}
\end{align}
Here Eq.~\eqref{eq:consistency-2} follows from the definitions of the Umegaki GMI [Eq.~\eqref{eq:umegaki-information}] and the Umegaki relative entropy [Eq.~\eqref{eq:umegaki-relative}].

In a complete cycle, the initial state $(\rho_{SM},\gamma_S\otimes\zeta_M)$ must be restored after the system and the memory are converted to the state $(\rho_{S'M'},\gamma'_{S'}\otimes\zeta'_{M'})$ via feedback control.  Applying the traditional second law~\cite{esposito2011SecondLawLandauer} to both the system and the memory, the average amount of work $\langle W_\abb{rest}\rangle$ required for the restoration is bounded as
\begin{align}
	\left\langle W_\abb{rest}\right\rangle&\geq\beta^{-1}\left(D\fleft(\rho_{SM}\middle\|\gamma_S\otimes\zeta_M\fright)-D\fleft(\rho'_{S'M'}\middle\|\gamma'_{S'}\otimes\zeta'_{M'}\fright)\right). \label{eq:consistency-3}
\end{align}
As the right-hand sides of Eqs.~\eqref{eq:consistency-2} and \eqref{eq:consistency-3} precisely cancel each other, we conclude that the total average amount of work consumed in a complete cycle is nonnegative:
\begin{align}
	\left\langle W\right\rangle+\left\langle W_\abb{mem}\right\rangle+\left\langle W_\abb{rest}\right\rangle&\geq0,
\end{align}
confirming the consistency between our generalized second law with quantum feedback and the traditional second law.

\section{Special case: Trivial Hamiltonians}
\label{sec:special}

An important special case is when the Hamiltonians of the systems are trivial.  For such a system $S$, we have $\h{H}_S=0$ and thus $\gamma_S=\1_S/\lvert S\rvert$.  In this case, our resource theory of conditional athermality reduces to that of \emph{conditional nonuniformity}, where the free operations are QFGOs with uniform Gibbs states.  Specializations of all definitions and results in previous sections are straightforward, and we lay them out selectively as follows.

Quantum conditional majorization is a preorder among joint states of an arbitrary system and an arbitrary memory, and it captures convertibility under QFGOs after the relevant states are embedded into a common, larger system with a trivial Hamiltonian.

\begin{definition}[Quantum conditional majorization~{\cite[Definition~3]{gour2024InevitabilityKnowingLess}}]
\label{def:majorization}
A state $\rho_{SM}$ is said to \emph{conditionally majorize} another state $\rho'_{S'M'}$, denoted by $\rho_{SM}\succcurlyeq_{S,S'}\rho_{S'M'}$, whenever there exists a system $\hat{S}$ with a trivial Hamiltonian, two isometric channels $\ch{V}_{S\to\hat{S}}$ and $\ch{V}'_{S'\to\hat{S}}$, and a QFGO $\ch{N}_{\hat{S}M\to\hat{S}M'}$ such that
\begin{align}
	\left(\ch{N}_{\hat{S}M\to\hat{S}M'}\circ\ch{V}_{S\to\hat{S}}\right)\fleft[\rho_{SM}\fright]&=\ch{V}'_{S'\to\hat{S}}\fleft[\rho'_{S'M'}\fright].
\end{align}
\end{definition}

For a real-valued function $\g{D}\colon(\rho,\sigma)\mapsto\g{D}(\rho\|\sigma)$ satisfying the data-processing inequality and a state $\rho_{SM}$, the \emph{unoptimized} and the \emph{optimized conditional entropies} based on $\g{D}$ are defined, respectively, as~\cite{gour2024InevitabilityKnowingLess}
\begin{align}
	\g{H}^\downarrow\fleft(S\middle|M\fright)_\rho&\coloneq\log\left\lvert S\right\rvert-\g{I}^\uparrow\fleft(\rho_{SM}\middle\|\tfrac{\1_S}{\left\lvert S\right\rvert}\fright), \label{eq:unoptimized-conditional}\\
	\g{H}^\uparrow\fleft(S\middle|M\fright)_\rho&\coloneq\log\left\lvert S\right\rvert-\g{I}^\downarrow\fleft(\rho_{SM}\middle\|\tfrac{\1_S}{\left\lvert S\right\rvert}\fright). \label{eq:optimized-conditional}
\end{align}
It was shown in Ref.~\cite{gour2024InevitabilityKnowingLess} that both the unoptimized conditional entropy $\g{H}^\downarrow$ and the optimized conditional entropy $\g{H}^\uparrow$ based on $\g{D}$ satisfy antimonotonicity under quantum conditional majorization, as is also implied by Lemma~\ref{lem:monotone}.

For a subnormalized state $\rho_{SM}$, the \emph{(unoptimized) conditional min-entropy} is defined as~\cite{renner2005SecurityQuantumKey}
\begin{align}
	H_{\min}^\downarrow\fleft(S\middle|M\fright)_\rho&\coloneq-I_{\max}^\uparrow\fleft(\rho_{SM}\middle\|\1_S\fright) \\
	&=-\ln\left\lVert\rho_M^{-\frac{1}{2}}\rho_{SM}\rho_M^{-\frac{1}{2}}\right\rVert_\infty. \label{eq:conditional-min}
\end{align}
The \emph{(optimized) (Petz-type) conditional max-entropy} is defined as~\cite{renner2005SecurityQuantumKey}
\begin{align}
	H_{\max}^\uparrow\fleft(S\middle|M\fright)_\rho&\coloneq-I_{\min}^\downarrow\fleft(\rho_{SM}\middle\|\1_S\fright) \\
	&=\ln\left\lVert\tr_S\fleft[\rho_{SM}^0\fright]\right\rVert_\infty. \label{eq:conditional-max}
\end{align}

For a smoothing parameter $\varepsilon\in[0,1]$, we define the \emph{smoothed (unoptimized) conditional min-entropy} as
\begin{align}
	H_{\min}^{\downarrow,\varepsilon}\fleft(S\middle|M\fright)_\rho&\coloneq-I_{\max}^{\uparrow,\varepsilon}\fleft(\rho_{SM}\middle\|\1_S\fright) \\
	&=\sup_{t,\tau_{SM},\sigma_M}\left\{-\ln t\colon\tau_{SM}\leq t\1_S\otimes\sigma_M,\;\tau_M\leq\sigma_M,\;\delta\fleft(\tau_{SM},\rho_{SM}\fright)\leq\varepsilon\right\}, \label{eq:smoothed-conditional-min}
\end{align}
where the supremization is over every real number $t$, subnormalized state $\tau_{SM}$, and state $\sigma_M$.  We define the \emph{(operator-) smoothed (optimized) (Petz-type) conditional max-entropy} as
\begin{align}
	H_{\max}^{\uparrow,\varepsilon}\fleft(S\middle|M\fright)_\rho&\coloneq-I_{\min}^{\downarrow,\varepsilon}\fleft(\rho_{SM}\middle\|\1_S\fright) \\
	&=\inf_{\Lambda_{SM}}\left\{\ln\left\lVert\tr_S\fleft[\Lambda_{SM}\fright]\right\rVert_\infty\colon\tr\fleft[\Lambda_{SM}\rho_{SM}\fright]\geq1-\varepsilon,\;0\leq\Lambda_{SM}\leq\1_{SM}\right\}. \label{eq:smoothed-conditional-max}
\end{align}
Note that the smoothed conditional min- and max-entropies defined above differ from quantities under similar names in the literature~\cite{renner2005SecurityQuantumKey,konig2009OperationalMeaningMin,tomamichel2009FullyQuantumAsymptotic,tomamichel2010DualitySmoothMin,tomamichel2011LeftoverHashingQuantum}, as the smoothing techniques used here are different.  As implied by Lemma~\ref{lem:alternative}, the smoothed conditional min-entropy has an alternative expression.

\begin{corollary}[Alternative expression for the smoothed conditional min-entropy]
For a state $\rho_{SM}$ and a smoothing parameter $\varepsilon\in[0,1]$,
\begin{align}
	H_{\min}^{\downarrow,\varepsilon}\fleft(S\middle|M\fright)_\rho&=\sup_{\hat{S},\omega_{\hat{S}M}}\left\{H_{\min}^\downarrow\fleft(\hat{S}\middle|M\fright)_\omega\colon\delta\fleft(\omega_{\hat{S}M},\ch{V}_{S\to\hat{S}}\fleft[\rho_{SM}\fright]\fright)\leq\varepsilon\right\},
\end{align}
where the supremization is over every system $\hat{S}$ with $\lvert\hat{S}\rvert\geq\lvert S\rvert$ and (normalized) state $\omega_{\hat{S}M}$, and $\ch{V}_{S\to\hat{S}}$ is an arbitrary isometric channel.
\end{corollary}

The following properties of the smoothed conditional min- and max-entropies are implied by Proposition~\ref{prop:properties-supp}.

\begin{corollary}[Properties of the smoothed conditional min- and max-entropies]
Let $\hat{H}^\varepsilon\in\{H_{\min}^{\downarrow,\varepsilon},H_{\max}^{\uparrow,\varepsilon}\}$ be the smoothed conditional min- or max-entropy.  For a state $\rho_{SM}$ and a smoothing parameter $\varepsilon\in[0,1]$, it satisfies the following properties.
\begin{itemize}
	\item[(i)] Antimonotonicity under quantum conditional majorization: for every state $\rho'_{S'M'}$ such that $\rho_{SM}\succcurlyeq_{S,S'}\rho'_{S'M'}$,
	\begin{align}
		\hat{H}^\varepsilon\fleft(S\middle|M\fright)_\rho&\leq\hat{H}^\varepsilon\fleft(S'|M'\fright)_{\rho'}.
	\end{align}
	\item[(ii)] Invariance under isometric channels: for every isometric channel $\ch{V}_{S\to\hat{S}}$,
	\begin{align}
		\hat{H}^\varepsilon\fleft(\hat{S}\middle|M\fright)_{\ch{V}\fleft[\rho\fright]}&=\hat{H}^\varepsilon\fleft(S|M\fright)_\rho.
	\end{align}
	\item[(iii)] Asymptotic equipartition property:
	\begin{align}
		\lim_{n\to\infty}\tfrac{1}{n}\hat{H}^\varepsilon\fleft(S^n\middle|M^n\fright)_{\rho^{\otimes n}}&=H\fleft(S\middle|M\fright)_\rho\;\;\forall\varepsilon\in(0,1).
	\end{align}
\end{itemize}
\end{corollary}

Following Theorems~\ref{thm:formation-supp} and \ref{thm:extraction-supp}, the smoothed conditional min- and max-entropies acquire thermodynamic meanings, respectively, in the tasks of writing and erasing data to or from a system with quantum feedback, with the system initialized in or reset to a blank pure state.

\begin{corollary}[Work costs of data writing and erasure under QFGOs]
\label{cor:work}
For a state $\rho_{SM}$ and an error parameter $\varepsilon\in[0,1]$, the single-shot work cost of data writing under QFGOs and its asymptotic limit are given by
\begin{align}
	W^\varepsilon\fleft(\fleft(\op{0}{0}_S,\tfrac{\1_S}{\left\lvert S\right\rvert}\fright)\to\fleft(\rho_{SM},\tfrac{\1_S}{\left\lvert S\right\rvert}\fright)\fright)&=-\beta^{-1}H_{\min}^{\downarrow,\varepsilon}\fleft(S\middle|M\fright)_\rho, \label{eq:writing}\\
	\lim_{n\to\infty}\frac{1}{n}W^\varepsilon\fleft(\fleft(\op{0}{0}_{S^n},\tfrac{\1_{S^n}}{\left\lvert S\right\rvert^n}\fright)\to\fleft(\rho_{SM}^{\otimes n},\tfrac{\1_{S^n}}{\left\lvert S\right\rvert^n}\fright)\fright)&=-\beta^{-1}H\fleft(S\middle|M\fright)_\rho.
\end{align}
The single-shot work cost of data erasure under QFGOs and its asymptotic limit are given by
\begin{align}
	W^\varepsilon\fleft(\fleft(\rho_{SM},\tfrac{\1_S}{\left\lvert S\right\rvert}\fright)\to\fleft(\op{0}{0}_S,\tfrac{\1_S}{\left\lvert S\right\rvert}\fright)\fright)&=\beta^{-1}H_{\max}^{\uparrow,\varepsilon}\fleft(S\middle|M\fright)_\rho, \label{eq:erasure}\\
	\lim_{n\to\infty}\frac{1}{n}W^\varepsilon\fleft(\fleft(\rho_{SM}^{\otimes n},\tfrac{\1_{S^n}}{\left\lvert S\right\rvert^n}\fright)\to\fleft(\op{0}{0}_{S^n},\tfrac{\1_{S^n}}{\left\lvert S\right\rvert^n}\fright)\fright)&=\beta^{-1}H\fleft(S\middle|M\fright)_\rho.
\end{align}
\end{corollary}

Corollary~\ref{cor:work} endows the smoothed conditional min- and max-entropies with precise operational interpretations in microscopic thermodynamics.  In particular, Eq.~\eqref{eq:erasure} can be viewed as a generalization of Landauer's principle~\cite{landauer1961IrreversibilityHeatGeneration,esposito2011SecondLawLandauer,reeb2014ImprovedLandauerPrinciple} to the single-shot regime with quantum feedback.  The fact that the smoothed conditional max-entropy and the conditional von Neumann entropy can be negative for certain entangled states implies that data erasure can be accompanied by cooling instead of heating if the controller possesses quantum side information and feed it back coherently.  A similar phenomenon has been discovered in Ref.~\cite{delrio2011ThermodynamicMeaningNegative} under different operational assumptions.

An interesting technical insight resulting from Corollary~\ref{cor:work} is the following entropic inequality, which may be of independent interest.

\begin{corollary}[Entropic inequality]
For a state $\rho_{SM}$ and two smoothing parameters $\varepsilon\in[0,1]$ and $\eta\in[0,\varepsilon]$,
\begin{align}
	H_{\min}^{\downarrow,\eta}\fleft(S\middle|M\fright)_\rho-H_{\max}^{\uparrow,\varepsilon-\eta}\fleft(S\middle|M\fright)_\rho&\leq\ln\frac{1}{1-\varepsilon}.
\end{align}
\end{corollary}

\begin{proof}
It follows from Corollary~\ref{cor:work} that
\begin{align}
	H_{\min}^{\downarrow,\eta}\fleft(S\middle|M\fright)_\rho-H_{\max}^{\uparrow,\varepsilon-\eta}\fleft(S\middle|M\fright)_\rho&=-\beta\left(W^\eta\fleft(\fleft(\op{0}{0}_S,\tfrac{\1_S}{\left\lvert S\right\rvert}\fright)\to\fleft(\rho_{SM},\tfrac{\1_S}{\left\lvert S\right\rvert}\fright)\fright)+W^{\varepsilon-\eta}\fleft(\fleft(\rho_{SM},\tfrac{\1_S}{\left\lvert S\right\rvert}\fright)\to\fleft(\op{0}{0}_S,\tfrac{\1_S}{\left\lvert S\right\rvert}\fright)\fright)\right) \\
	&\leq-\beta W^\varepsilon\fleft(\fleft(\op{0}{0}_S,\tfrac{\1_S}{\left\lvert S\right\rvert}\fright)\to\fleft(\op{0}{0}_S,\tfrac{\1_S}{\left\lvert S\right\rvert}\fright)\fright) \label{pf:inequality-1}\\
	&=\ln\frac{1}{1-\varepsilon}. \label{pf:inequality-2}
\end{align}
Here Eq.~\eqref{pf:inequality-1} follows from the triangle inequality for single-shot work costs under QFGOs (Lemma~\ref{lem:triangle}); Eq.~\eqref{pf:inequality-2} follows from Corollary~\ref{cor:work} and the definition of the smoothed conditional max-entropy [Eq.~\eqref{eq:smoothed-conditional-max}].
\end{proof}

\section{Axiomatization of entropic measures}
\label{sec:axiomatization}

\subsection{Conditional entropies from axioms}
\label{sec:conditional}

Apart from studying specific entropic measures based on their analytic formulas, as we did in Secs.~\ref{sec:information} and \ref{sec:special}, one could also study notions of entropic measures in an abstract manner solely based on properties that are deemed essential to such measures.  The latter approach, known as the \emph{axiomatic} approach, has had a long tradition in both information theory~\cite{birkhoff1946TresObservacionesSobre,shannon1948MathematicalTheoryCommunication,blackwell1953EquivalentComparisonsExperiments,renyi1961MeasuresEntropyInformation,petz1992CharacterizationRelativeEntropy,csiszar2008AxiomaticCharacterizationsInformation} and physics~\cite{lieb1998GuideEntropySecond,lieb1999PhysicsMathematicsSecond,lieb2013EntropyConceptNonequilibrium,lieb2014EntropyMetersEntropy}, and it has been fruitful in the recent study of entropies~\cite{weilenmann2016AxiomaticRelationThermodynamic,wilming2017AxiomaticCharacterizationQuantum,gour2021EntropyRelativeEntropy,weilenmann2018SmoothEntropyAxiomatic}, relative entropies~\cite{gour2021EntropyRelativeEntropy,gour2020OptimalExtensionsResource,weilenmann2016AxiomaticRelationThermodynamic}, and conditional entropies~\cite{gour2018ConditionalUncertaintyPrinciple,gour2024InevitabilityKnowingLess}.

\begin{definition}[Relative entropy~\cite{gour2021EntropyRelativeEntropy,gour2020OptimalExtensionsResource}]
\label{def:relative}
A real-valued function $(\rho,\sigma)\mapsto\g{D}(\rho\|\sigma)$ is called a \emph{relative entropy} whenever it satisfies the following properties.
\begin{itemize}
	\item[(R1)] Data-processing inequality: for every state $\rho$, state $\sigma$, and channel $\ch{E}$,
	\begin{align}
		\g{D}\fleft(\rho\middle\|\sigma\fright)&\geq\g{D}\fleft(\ch{E}\fleft[\rho\fright]\middle\|\ch{E}\fleft[\sigma\fright]\fright).
	\end{align}
	\item[(R2)] Additivity for product states: for every state $\rho$, state $\sigma$, state $\rho'$, and state $\sigma'$,
	\begin{align}
		\g{D}\fleft(\rho\otimes\rho'\middle\|\sigma\otimes\sigma'\fright)&=\g{D}\fleft(\rho\middle\|\sigma\fright)+\g{D}\fleft(\rho'\middle\|\sigma'\fright).
	\end{align}
	\item[(R3)] Normalization: if $\lvert S\rvert=2$, then
	\begin{align}
		\g{D}\fleft(\op{0}{0}_S\middle\|\tfrac{\1_S}{2}\fright)&=\ln2.
	\end{align}
\end{itemize}
\end{definition}

The axiomatic study of conditional entropies in the quantum domain was initiated in Ref.~\cite{gour2024InevitabilityKnowingLess}, where a minimal set of defining properties were proposed.

\begin{definition}[Conditional entropy~{\cite[Definition~4]{gour2024InevitabilityKnowingLess}}]
\label{def:conditional}
A real-valued function $\rho_{SM}\mapsto\g{H}(S|M)_\rho$ is called a \emph{conditional entropy} whenever it satisfies the following properties.
\begin{itemize}
	\item[(C1)] Antimonotonicity under quantum conditional majorization: for every state $\rho_{SM}$ and state $\rho'_{S'M'}$ such that $\rho_{SM}\succcurlyeq_{S,S'}\rho'_{S'M'}$,
	\begin{align}
		\g{H}\fleft(S\middle|M\fright)_\rho&\leq\g{H}\fleft(S'\middle|M'\fright)_{\rho'}.
	\end{align}
	\item[(C2)] Additivity for product states: for every state $\rho_{SM}$ and state $\rho'_{S'M'}$,
	\begin{align}
		\g{H}\fleft(SS'\middle|MM'\fright)_{\rho\otimes\rho'}&=\g{H}\fleft(S\middle|M\fright)_\rho+\g{H}\fleft(S'\middle|M'\fright)_{\rho'}.
	\end{align}
	\item[(C3)] Normalization: if $\lvert S\rvert=2$, then
	\begin{align}
		\g{H}\fleft(S\fright)_{\frac{\1}{2}}&=\ln2.
	\end{align}
\end{itemize}
\end{definition}

It was shown in Ref.~\cite[Theorem~2]{gour2024InevitabilityKnowingLess} that the unoptimized conditional entropy $\g{H}^\downarrow$ [Eq.~\eqref{eq:unoptimized-conditional}] based on $\g{D}$ is a conditional entropy in the sense of Definition~\ref{def:conditional} if $\g{D}$ is a relative entropy in the sense of Definition~\ref{def:relative}.  It was also known that the optimized conditional entropies $\sw{H}_\alpha^\uparrow$ and $\pz{H}_\alpha^\uparrow$ [see Eqs.~\eqref{eq:optimized-conditional}, \eqref{eq:sandwiched-relative}, and \eqref{eq:petz-relative}] based on the sandwiched Rényi~\cite{beigi2013SandwichedRenyiDivergence} and the Petz--Rényi~\cite[Lemma~3]{sharma2013FundamentalBoundReliability} relative entropies are conditional entropies.

\begin{definition}[Proper conditional-athermality monotone]
\label{def:proper}
A conditional-athermality monotone $(\rho_{SM},\gamma_S)\mapsto\g{I}(\rho_{SM}\|\gamma_S)$ (Definition~\ref{def:monotone}) is said to be \emph{proper} whenever it satisfies the following properties in addition to (I1) monotonicity under QFGOs.
\begin{itemize}
	\item[(I2)] Additivity for product states: for every state $(\rho_{SM},\gamma_S)$ and state $(\rho'_{S'M'},\gamma'_{S'})$,
	\begin{align}
		\g{I}\fleft(\rho_{SM}\otimes\rho'_{S'M'}\middle\|\gamma_S\otimes\gamma'_{S'}\fright)&=\g{I}\fleft(\rho_{SM}\middle\|\gamma_S\fright)+\g{I}\fleft(\rho'_{S'M'}\middle\|\gamma'_{S'}\fright).
	\end{align}
	\item[(I3)] Normalization: if $\lvert S\rvert=2$, then
	\begin{align}
		\g{I}\fleft(\op{0}{0}_S\middle\|\tfrac{\1_S}{2}\fright)&=\ln2.
	\end{align}
\end{itemize}
\end{definition}

Definition~\ref{def:proper} relates to the unoptimized and the optimized GMIs [Eqs.~\eqref{eq:unoptimized-information} and \eqref{eq:optimized-information}] in the same way as how Definition~\ref{def:conditional} relates to the unoptimized and the optimized conditional entropies [Eqs.~\eqref{eq:unoptimized-conditional} and \eqref{eq:optimized-conditional}].

Clearly, the definition of proper conditional-athermality monotones (Definition~\ref{def:proper}) reduces to that of relative entropies (Definition~\ref{def:relative}) when the memory is trivial.  We now show that it reduces to the definition of conditional ``negentropies'' when the system has a trivial Hamiltonian.

\begin{lemma}[Conditional negentropies as proper conditional-athermality monotones; restatement of Lemma~\ref{lem:negentropy}]
\label{lem:negentropy-supp}
The following statements are equivalent.
\begin{itemize}
	\item[(i)] The function $\rho_{SM}\mapsto\g{H}(S|M)_\rho$ is a conditional entropy.
	\item[(ii)] The function $\g{J}\colon\rho_{SM}\mapsto\ln\lvert S\rvert-\g{H}(S|M)_\rho$ is a proper conditional-athermality monotone with trivial Hamiltonians.
\end{itemize}
\end{lemma}

\begin{proof}[Proof of (i) $\Rightarrow$ (ii)]
Let $\rho_{SM}\mapsto\g{H}(S|M)_\rho$ be a conditional entropy, $\rho_{SM}$ a state, and $\ch{N}_{SM\to S'M'}$ a QFGO with uniform Gibbs states.  Define the following channel:
\begin{align}
	\ch{M}_{S'SM\to SS'M'}&\coloneq\ch{R}_{S'\to S}^\frac{\1}{\left\lvert S\right\rvert}\otimes\ch{N}_{SM\to S'M'}, \label{pf:negentropy-1}
\end{align}
where $\ch{R}_{S'\to S}^{\1/\lvert S\rvert}[\cdot]=\1_S/\lvert S\rvert\otimes\tr_{S'}[\cdot]$.  This implies that
\begin{align}
	\ch{M}_{S'SM\to SS'M'}\circ\ch{R}_{S'S\to SS'}^\frac{\1}{\left\lvert S\right\rvert\left\lvert S'\right\rvert}&=\ch{R}_{S'S\to SS'}^\frac{\1}{\left\lvert S\right\rvert\left\lvert S'\right\rvert}\circ\ch{M}_{S'SM\to SS'M'}. \label{pf:negentropy-2}
\end{align}
It follows from Lemma~\ref{lem:operation-supp} and Eq.~\eqref{pf:negentropy-2} that $\ch{M}_{S'SM\to SS'M'}$ is a QFGO.  It follows from the definition of quantum conditional majorization (Definition~\ref{def:majorization}) that
\begin{align}
	\frac{\1_{S'}}{\left\lvert S'\right\rvert}\otimes\rho_{SM}&\succcurlyeq_{S'S,SS'}\ch{M}_{S'SM\to SS'M'}\fleft[\frac{\1_{S'}}{\left\lvert S'\right\rvert}\otimes\rho_{SM}\fright] \\
	&=\frac{\1_S}{\left\lvert S\right\rvert}\otimes\ch{N}_{SM\to S'M'}\fleft[\rho_{SM}\fright]. \label{pf:negentropy-3}
\end{align}
Here Eq.~\eqref{pf:negentropy-3} follows from Eq.~\eqref{pf:negentropy-1}.  It follows from the definition of the function $\g{J}$ that
\begin{align}
	\g{J}\fleft(\rho_{SM}\fright)&=\ln\left\lvert S\right\rvert-\g{H}\fleft(S\middle|M\fright)_\rho \\
	&=\ln\left\lvert S\right\rvert-\g{H}\fleft(S'S\middle|M\fright)_{\frac{\1}{\left\lvert S'\right\rvert}\otimes\rho}+\g{H}\fleft(S'\fright)_\frac{\1}{\left\lvert S'\right\rvert} \label{pf:negentropy-4}\\
	&\geq\ln\left\lvert S\right\rvert-\g{H}\fleft(SS'\middle|M'\fright)_{\frac{\1}{\left\lvert S\right\rvert}\otimes\ch{N}\fleft[\rho\fright]}+\g{H}\fleft(S'\fright)_\frac{\1}{\left\lvert S'\right\rvert} \label{pf:negentropy-5}\\
	&=\ln\left\lvert S\right\rvert-\g{H}\fleft(S\fright)_\frac{\1}{\left\lvert S\right\rvert}-\g{H}\fleft(S'\middle|M'\fright)_{\ch{N}\fleft[\rho\fright]}+\g{H}\fleft(S'\fright)_\frac{\1}{\left\lvert S'\right\rvert} \label{pf:negentropy-6}\\
	&=\ln\left\lvert S'\right\rvert-\g{H}\fleft(S'\middle|M'\fright)_{\ch{N}\fleft[\rho\fright]} \label{pf:negentropy-7}\\
	&=\g{J}\fleft(\ch{N}_{SM\to S'M'}\fleft[\rho_{SM}\fright]\fright). \label{pf:negentropy-8}
\end{align}
Here Eqs.~\eqref{pf:negentropy-4} and \eqref{pf:negentropy-6} follow from the additivity of conditional entropies [Definition~\ref{def:conditional}(C2)]; Eq.~\eqref{pf:negentropy-5} follows from Eq.~\eqref{pf:negentropy-3} and the antimonotonicity of conditional entropies under quantum conditional majorization [Definition~\ref{def:conditional}(C1)]; Eq.~\eqref{pf:negentropy-7} follows from the fact that $\g{H}(S)_{\1/\lvert S\rvert}=\ln\lvert S\rvert$ for every system $S$, which itself is implied by the normalization of conditional entropies [Definition~\ref{def:conditional}(C3)] due to Ref.~\cite[Lemma~3]{gour2021EntropyRelativeEntropy}; Eq.~\eqref{pf:negentropy-8} follows from the definition of $\g{J}$.  This shows that $\g{J}$ is a conditional-athermality monotone (Definition~\ref{def:monotone}) with trivial Hamiltonians.  Furthermore, it follows from the additivity of conditional entropies [Definition~\ref{def:conditional}(C2)] that $\g{J}$ is additive for product states.  Furthermore, it follows from the antimonotonicity of conditional entropies under quantum conditional majorization [Definition~\ref{def:conditional}(C1)] that $\g{H}(S)_{\op{0}{0}}=0$~\cite[Lemma~3]{gour2021EntropyRelativeEntropy}.  This implies that $\g{J}$ is normalized such that $\g{J}(\op{0}{0}_S)=\ln\lvert S\rvert$.  The desired statement follows from Definition~\ref{def:proper}.
\end{proof}

\begin{proof}[Proof of (ii) $\Rightarrow$ (i)]
Let $\g{J}\colon\rho_{SM}\mapsto\ln\lvert S\rvert-\g{H}(S|M)_\rho$ be a proper conditional-athermality monotone with trivial Hamiltonians, $\rho_{SM}$ a state, and $\rho'_{S'M'}$ a state such that $\rho_{SM}\succcurlyeq_{S,S'}\rho'_{S'M'}$.  It follows from the definition of quantum conditional majorization (Definition~\ref{def:majorization}) that there exists a system $\hat{S}$ with a trivial Hamiltonian, two isometric channels $\ch{V}_{S\to\hat{S}}$ and $\ch{V}'_{S'\to\hat{S}}$, and a QFGO $\ch{N}_{\hat{S}M\to\hat{S}M'}$ such that
\begin{align}
	\left(\ch{N}_{\hat{S}M\to\hat{S}M'}\circ\ch{V}_{S\to\hat{S}}\right)\fleft[\rho_{SM}\fright]&=\ch{V}'_{S'\to\hat{S}}\fleft[\rho'_{S'M'}\fright]. \label{pf:negentropy-9}
\end{align}
It follows from basic linear algebra that there exists a unitary channel $\ch{U}_{\hat{S}S\to S\hat{S}}$ such that
\begin{align}
	\op{0}{0}_S\otimes\ch{V}_{S\to\hat{S}}\fleft[\cdot\fright]&=\ch{U}_{\hat{S}S\to S\hat{S}}\fleft[\op{0}{0}_{\hat{S}}\otimes\left(\cdot\right)_S\fright]. \label{pf:negentropy-10}
\end{align}
It follows from the definition of $\g{J}$ that
\begin{align}
	\g{H}\fleft(\hat{S}\middle|M\fright)_{\ch{V}\fleft[\rho\fright]}&=\ln\left\lvert\hat{S}\right\rvert-\g{J}\fleft(\ch{V}_{S\to\hat{S}}\fleft[\rho_{SM}\fright]\fright) \\
	&=\ln\left\lvert\hat{S}\right\rvert-\g{J}\fleft(\op{0}{0}_S\otimes\ch{V}_{S\to\hat{S}}\fleft[\rho_{SM}\fright]\fright)+\g{J}\fleft(\op{0}{0}_S\fright) \label{pf:negentropy-11}\\
	&=\ln\left\lvert\hat{S}\right\rvert-\g{J}\fleft(\ch{U}_{\hat{S}S\to S\hat{S}}\fleft[\op{0}{0}_{\hat{S}}\otimes\rho_{SM}\fright]\fright)+\g{J}\fleft(\op{0}{0}_S\fright) \label{pf:negentropy-12}\\
	&=\ln\left\lvert\hat{S}\right\rvert-\g{J}\fleft(\op{0}{0}_{\hat{S}}\otimes\rho_{SM}\fright)+\g{J}\fleft(\op{0}{0}_S\fright) \label{pf:negentropy-13}\\
	&=\ln\left\lvert\hat{S}\right\rvert-\g{J}\fleft(\op{0}{0}_{\hat{S}}\fright)-\g{J}\fleft(\rho_{SM}\fright)+\g{J}\fleft(\op{0}{0}_S\fright) \label{pf:negentropy-14}\\
	&=\ln\left\lvert S\right\rvert-\g{J}\fleft(\rho_{SM}\fright) \label{pf:negentropy-15}\\
	&=\g{H}\fleft(S\middle|M\fright)_\rho. \label{pf:negentropy-16}
\end{align}
Here Eqs.~\eqref{pf:negentropy-11} and \eqref{pf:negentropy-14} follow from the additivity of proper conditional-athermality monotones [Definition~\ref{def:proper}(I2)]; Eq.~\eqref{pf:negentropy-12} follows from Eq.~\eqref{pf:negentropy-10}; Eq.~\eqref{pf:negentropy-13} follows from the monotonicity of proper conditional-athermality monotones under QFGOs [Definition~\ref{def:proper}(I1)] and the fact that both $\ch{U}_{\hat{S}S\to S\hat{S}}$ and $\ch{U}_{S\hat{S}\to\hat{S}S}^\dagger$ are QFGOs with uniform Gibbs states; Eq.~\eqref{pf:negentropy-15} follows from the fact that $\g{J}(\op{0}{0}_S)=\ln\lvert S\rvert$ for every system $S$, which itself is implied by a similar argument to that in Ref.~\cite[Eq.~(9) and the following line]{gour2021EntropyRelativeEntropy}; Eq.~\eqref{pf:negentropy-16} follows from the definition of $\g{J}$.  It follows from Eq.~\eqref{pf:negentropy-16} that
\begin{align}
	\g{H}\fleft(S\middle|M\fright)_\rho&=\g{H}\fleft(\hat{S}\middle|M\fright)_{\ch{V}\fleft[\rho\fright]} \\
	&=\log\left\lvert\hat{S}\right\rvert-\g{J}\fleft(\ch{V}_{S\to\hat{S}}\fleft[\rho_{SM}\fright]\fright) \label{pf:negentropy-17}\\
	&\geq\log\left\lvert\hat{S}\right\rvert-\g{J}\fleft(\left(\ch{N}_{\hat{S}M\to\hat{S}M'}\circ\ch{V}_{S\to\hat{S}}\right)\fleft[\rho_{SM}\fright]\fright) \label{pf:negentropy-18}\\
	&=\log\left\lvert\hat{S}\right\rvert-\g{J}\fleft(\ch{V}'_{S'\to\hat{S}}\fleft[\rho'_{S'M'}\fright]\fright) \label{pf:negentropy-19}\\
	&=\g{H}\fleft(\hat{S}\middle|M'\fright)_{\ch{V}'\fleft[\rho'\fright]} \label{pf:negentropy-20}\\
	&=\g{H}\fleft(S'\middle|M'\fright)_{\rho'}. \label{pf:negentropy-21}
\end{align}
Here Eqs.~\eqref{pf:negentropy-17} and \eqref{pf:negentropy-20} follow from the definition of $\g{J}$; Eq.~\eqref{pf:negentropy-18} follows from the monotonicity of proper conditional-athermality monotones under QFGOs [Definition~\ref{def:proper}(I1)]; Eq.~\eqref{pf:negentropy-19} follows from Eq.~\eqref{pf:negentropy-9}; Eq.~\eqref{pf:negentropy-21} follows from an analogy to Eq.~\eqref{pf:negentropy-16}.  This shows that the function $\rho_{SM}\mapsto\g{H}(S|M)_\rho$ satisfies antimonotonicity under quantum conditional majorization (Definition~\ref{def:majorization}).  Furthermore, it follows from the additivity of proper conditional-athermality monotones [Definition~\ref{def:proper}(I2)] that the function $\rho_{SM}\mapsto\g{H}(S|M)_\rho$ is additive for product states.  Furthermore, it follows from the additivity of proper conditional-athermality monotones [Definition~\ref{def:proper}(I2)] that
\begin{align}
	\g{J}\fleft(\tfrac{\1_S}{\left\lvert S\right\rvert}\fright)&=\g{J}\fleft(\tfrac{\1_{S^2}}{\left\lvert S\right\rvert^2}\fright)-\g{J}\fleft(\tfrac{\1_S}{\left\lvert S\right\rvert}\fright) \\
	&=0 \label{pf:negentropy-22}
\end{align}
for every system $S$.  Here Eq.~\eqref{pf:negentropy-22} follows from the monotonicity of proper conditional-athermality monotones under QFGOs [Definition~\ref{def:proper}(I1)] and the fact that $\1_{S^2}/\lvert S\rvert^2$ and $\1_S/\lvert S\rvert$ are interconvertible under QFGOs with uniform Gibbs states.  It follows from the definition of $\g{J}$ and Eq.~\eqref{pf:negentropy-22} that $\rho_{SM}\mapsto\g{H}(S|M)_\rho$ is normalized such that $\g{H}(S)_{\1/\lvert S\rvert}=\ln\lvert S\rvert$ for every system $S$.  The desired statement follows from Definition~\ref{def:conditional}.
\end{proof}

\subsection{Minimal and maximal conditional entropies}
\label{sec:reconstruction}

\begin{proposition}[Maximal and minimal proper conditional-athermality monotones; restatement of Proposition~\ref{prop:axiomatization}]
\label{prop:axiomatization-supp}
Every proper conditional-athermality monotone $(\rho_{SM},\gamma_S)\mapsto\g{I}(\rho_{SM}\|\gamma_S)$ is bounded as
\begin{align}
	I_{\min}^\downarrow\fleft(\rho_{SM}\middle\|\gamma_S\fright)&\leq\g{I}\fleft(\rho_{SM}\middle\|\gamma_S\fright)\leq I_{\max}^\uparrow\fleft(\rho_{SM}\middle\|\gamma_S\fright)\;\;\forall\rho_{SM},\gamma_S.
\end{align}
\end{proposition}

\begin{proof}[Proof of the upper bound]
Let $(\rho_{SM},\gamma_S)$ be a state and $\ch{N}_{S_0\to S_1SM}$ a QFGO such that
\begin{align}
	\ch{N}_{S_0\to S_1SM}\fleft[\op{0}{0}_{S_0}\fright]&=\op{0}{0}_{S_1}\otimes\rho_{SM}. \label{pf:axiomatization-1}
\end{align}
It follows from the additivity of proper conditional-athermality monotones [Definition~\ref{def:proper}(I2)] that
\begin{align}
	\g{I}\fleft(\rho_{SM}\middle\|\gamma_S\fright)&=\g{I}\fleft(\op{0}{0}_{S_1}\otimes\rho_{SM}\middle\|\tfrac{\1_{S_1}}{\left\lvert S_1\right\rvert}\otimes\gamma_S\fright)-\g{I}\fleft(\op{0}{0}_{S_1}\middle\|\tfrac{\1_{S_1}}{\left\lvert S_1\right\rvert}\fright) \\
	&=\g{I}\fleft(\ch{N}_{S_0\to S_1SM}\fleft[\op{0}{0}_{S_0}\fright]\middle\|\tfrac{\1_{S_1}}{\left\lvert S_1\right\rvert}\otimes\gamma_S\fright)-\g{I}\fleft(\op{0}{0}_{S_1}\middle\|\tfrac{\1_{S_1}}{\left\lvert S_1\right\rvert}\fright) \label{pf:axiomatization-2}\\
	&\leq \g{I}\fleft(\op{0}{0}_{S_0}\middle\|\tfrac{\1_{S_0}}{\left\lvert S_0\right\rvert}\fright)-\g{I}\fleft(\op{0}{0}_{S_1}\middle\|\tfrac{\1_{S_1}}{\left\lvert S_1\right\rvert}\fright) \label{pf:axiomatization-3}\\
	&=\ln\left\lvert S_0\right\rvert-\ln\left\lvert S_1\right\rvert. \label{pf:axiomatization-4}
\end{align}
Here Eq.~\eqref{pf:axiomatization-2} follows from Eq.~\eqref{pf:axiomatization-1}; Eq.~\eqref{pf:axiomatization-3} follows from the monotonicity of proper conditional-athermality monotones under QFGOs [Definition~\ref{def:proper}(I1)]; Eq.~\eqref{pf:axiomatization-4} follows from the fact that $\g{I}(\op{0}{0}_S\|\1_S/\lvert S\vert)=\ln\lvert S\rvert$ for every system $S$, which itself is implied by a similar argument to that in Ref.~\cite[Eq.~(9) and the following line]{gour2021EntropyRelativeEntropy}.  Taking an infimum on the right-hand side of Eq.~\eqref{pf:axiomatization-4} over every battery $S_0$, battery $S_1$, and QFGO $\ch{N}_{S_0\to S_1SM}$ satisfying Eq.~\eqref{pf:axiomatization-1}, it follows from the definition of the single-shot work of formation under QFGOs (Definition~\ref{def:work}) that
\begin{align}
	\g{I}\fleft(\rho_{SM}\middle\|\gamma_S\fright)&\leq\beta W_\abb{form}^0\fleft(\rho_{SM},\gamma_S\fright) \\
	&=I_{\max}^\uparrow\fleft(\rho_{SM}\middle\|\gamma_S\fright). \label{pf:axiomatization-5}
\end{align}
Here Eq.~\eqref{pf:axiomatization-5} follows from Theorem~\ref{thm:formation-supp}.  This recovers the desired upper bound.
\end{proof}

\begin{proof}[Proof of the lower bound]
Let $(\rho_{SM},\gamma_S)$ be a state and $\ch{N}_{S_0SM\to S_1}$ a QFGO such that
\begin{align}
	\ch{N}_{S_0\to S_1SM}\fleft[\op{0}{0}_{S_0}\otimes\rho_{SM}\fright]&=\op{0}{0}_{S_1}. \label{pf:axiomatization-6}
\end{align}
It follows from the additivity of proper conditional-athermality monotones [Definition~\ref{def:proper}(I2)] that
\begin{align}
	\g{I}\fleft(\rho_{SM}\middle\|\gamma_S\fright)&=\g{I}\fleft(\op{0}{0}_{S_0}\otimes\rho_{SM}\middle\|\tfrac{\1_{S_0}}{\left\lvert S_0\right\rvert}\otimes\gamma_S\fright)-\g{I}\fleft(\op{0}{0}_{S_0}\middle\|\tfrac{\1_{S_0}}{\left\lvert S_0\right\rvert}\fright) \\
	&\geq\g{I}\fleft(\ch{N}_{S_0SM\to S_1}\fleft[\op{0}{0}_{S_0}\otimes\rho_{SM}\fright]\middle\|\tfrac{\1_{S_1}}{\left\lvert S_1\right\rvert}\otimes\gamma_S\fright)-\g{I}\fleft(\op{0}{0}_{S_0}\middle\|\tfrac{\1_{S_0}}{\left\lvert S_0\right\rvert}\fright) \label{pf:axiomatization-7}\\
	&=\g{I}\fleft(\op{0}{0}_{S_1}\middle\|\tfrac{\1_{S_1}}{\left\lvert S_1\right\rvert}\fright)-\g{I}\fleft(\op{0}{0}_{S_0}\middle\|\tfrac{\1_{S_0}}{\left\lvert S_0\right\rvert}\fright) \label{pf:axiomatization-8}\\
	&=\ln\left\lvert S_1\right\rvert-\ln\left\lvert S_0\right\rvert. \label{pf:axiomatization-9}
\end{align}
Here Eq.~\eqref{pf:axiomatization-7} follows from the monotonicity of proper conditional-athermality monotones under QFGOs [Definition~\ref{def:proper}(I1)]; Eq.~\eqref{pf:axiomatization-8} follows from Eq.~\eqref{pf:axiomatization-6}; Eq.~\eqref{pf:axiomatization-9} follows from the fact that $\g{I}(\op{0}{0}_S\|\1_S/\lvert S\vert)=\ln\lvert S\rvert$ for every system $S$.  Taking an supremum on the right-hand side of Eq.~\eqref{pf:axiomatization-9} over every battery $S_0$, battery $S_1$, and QFGO $\ch{N}_{S_0SM\to S_1}$ satisfying Eq.~\eqref{pf:axiomatization-6}, it follows from the definition of the single-shot extractable work under QFGOs (Definition~\ref{def:work}) that
\begin{align}
	\g{I}\fleft(\rho_{SM}\middle\|\gamma_S\fright)&\geq\beta W_\abb{extr}^0\fleft(\rho_{SM},\gamma_S\fright) \\
	&=I_{\min}^\downarrow\fleft(\rho_{SM}\middle\|\gamma_S\fright). \label{pf:axiomatization-10}
\end{align}
Here Eq.~\eqref{pf:axiomatization-10} follows from Theorem~\ref{thm:extraction-supp}.  This recovers the desired lower bound.
\end{proof}

Proposition~\ref{prop:axiomatization-supp} enables the axiomatic reconstruction of the the max- and min-GMI, as the maximal and the minimal real-valued functions satisfying properties (I1)--(I3) in Definition~\ref{def:proper}.  Combining Lemma~\ref{lem:negentropy-supp} and Proposition~\ref{prop:axiomatization-supp} immediately enables the reconstruction the conditional min- and max-entropies as the minimal and the maximal real-valued functions satisfying properties (C1)--(C3) in Definition~\ref{def:conditional}.

\begin{theorem}[Minimal and maximal conditional entropies; restatement of Theorem~\ref{thm:axiomatization}]
\label{thm:axiomatization-supp}
Every conditional entropy $\rho_{SM}\mapsto\g{H}\fleft(S\middle|M\fright)_\rho$ is bounded as
\begin{align}
	H_{\min}^\downarrow(S|M)_\rho&\leq\g{H}(S|M)_\rho\leq H_{\max}^\uparrow(S|M)_\rho\;\;\forall\rho_{SM}.
\end{align}
\end{theorem}

\begin{proof}
Let $\rho_{SM}$ be a state.  It follows from Lemma~\ref{lem:negentropy-supp} that the function $\g{J}\colon\rho_{SM}\mapsto\ln\lvert S\rvert-\g{H}(S|M)_\rho$ is a proper conditional athermality monotone with trivial Hamiltonians.  It follows from Proposition~\ref{prop:axiomatization-supp} that
\begin{align}
	I_{\min}^\downarrow\fleft(\rho_{SM}\middle\|\tfrac{\1_S}{\left\lvert S\right\rvert}\fright)&\leq\g{J}\fleft(\rho_{SM}\fright)\leq I_{\max}^\uparrow\fleft(\rho_{SM}\middle\|\tfrac{\1_S}{\left\lvert S\right\rvert}\fright). \label{pf:axiomatization-11}
\end{align}
It follows from the definitions of the conditional min- and max-entropies [Eqs.~\eqref{eq:conditional-min} and \eqref{eq:conditional-max}] and the definitions of the max- and min-GMIs [Eqs.~\eqref{eq:max-information} and \eqref{eq:min-information}] that
\begin{align}
	I_{\max}^\uparrow\fleft(\rho_{SM}\middle\|\tfrac{\1_S}{\left\lvert S\right\rvert}\fright)&=\ln\left\lvert S\right\rvert-H_{\min}^\downarrow(S|M)_\rho, \\
	I_{\min}^\downarrow\fleft(\rho_{SM}\middle\|\tfrac{\1_S}{\left\lvert S\right\rvert}\fright)&=\ln\left\lvert S\right\rvert-H_{\max}^\uparrow(S|M)_\rho. \label{pf:axiomatization-12}
\end{align}
The desired statement follows from Eqs.~\eqref{pf:axiomatization-11}--\eqref{pf:axiomatization-12} and the definition of $\g{J}$.
\end{proof}

The lower bound in Theorem~\ref{thm:axiomatization-supp} recovers Ref.~\cite[Theorem~1]{gour2024InevitabilityKnowingLess}, and the upper bound resolves an open problem raised in Ref.~\cite[Sec.~6]{gour2024InevitabilityKnowingLess}.  Both Proposition~\ref{prop:axiomatization-supp} and Theorem~\ref{thm:axiomatization-supp} concern axiomatization of entropic measures involving quantum side information, and thus they also speak to the open problems raised in the end of Refs.~\cite{weilenmann2016AxiomaticRelationThermodynamic,weilenmann2018SmoothEntropyAxiomatic}.

\begin{remark}[Inevitability of negative conditional entropy for each state]
\label{rem:inevitability}
It was shown in Ref.~\cite[Theorem~1]{gour2024InevitabilityKnowingLess} that negative conditional entropy is ``inevitable'' in quantum theory in the sense that every real-valued function satisfying properties (C1)--(C3) must be negative for certain states, such as maximally entangled states.  Theorem~\ref{thm:axiomatization-supp} refines this insight and implies a necessary and a sufficient condition for the inevitability of negative conditional entropy for \emph{each} state: for a given state $\rho_{SM}$, every conditional entropy is inevitably negative if and only if $\tr_S[\rho_{SM}^0]<\1_M$.
\end{remark}

\begin{remark}[Open problems on duality relations]
The conditional min- and max-entropies satisfy a duality relation~\cite{berta2008SingleshotQuantumState,tomamichel2014RelatingDifferentQuantum} in the sense that, for every pure state $\psi_{SMN}$,
\begin{align}
	H_{\min}^\downarrow\fleft(S\middle|M\fright)_\psi+H_{\max}^\uparrow\fleft(S\middle|N\fright)_\psi&=0. \label{eq:duality}
\end{align}
Following Corollary~\ref{cor:work}, this immediately implies that
\begin{align}
	W^0\fleft(\fleft(\op{0}{0}_S,\tfrac{\1_S}{\left\lvert S\right\rvert}\fright)\to\fleft(\tr_{N}\fleft[\psi_{SMN}\fright],\tfrac{\1_S}{\left\lvert S\right\rvert}\fright)\fright)&=W^0\fleft(\fleft(\tr_{M}\fleft[\psi_{SMN}\fright],\tfrac{\1_S}{\left\lvert S\right\rvert}\fright)\to\fleft(\op{0}{0}_S,\tfrac{\1_S}{\left\lvert S\right\rvert}\fright)\fright). \label{eq:duality-work}
\end{align}
That is, the single-shot zero-error work cost of data writing with quantum feedback from $M$ is precisely equal to the single-shot zero-error work cost of data writing with quantum feedback from $N$.  However, it remains unclear to us whether there is a purely operational derivation of Eq.~\eqref{eq:duality-work}, without invoking Corollary~\ref{cor:work} and Eq.~\eqref{eq:duality}.  On the separate note, we observe that the conditional min- and max-entropies are precisely those identified as the minimal and the maximal conditional entropies in Theorem~\ref{thm:axiomatization-supp}.  This naturally begs the question of whether the following dual function,
\begin{align}
	\rho_{SM}&\mapsto\g{H}^*\fleft(S\middle|M\fright)_\rho\coloneq-\g{H}\fleft(S\middle|N\fright)_\psi
\end{align}
of every conditional entropy $\rho_{SN}\mapsto\g{H}(S|N)_\rho$ is guaranteed to be a conditional entropy as well, where $\psi_{SMN}$ is a pure state such that $\tr_N[\psi_{SMN}]=\rho_{SM}$.  We leave this as an open problem for future study.
\end{remark}

\end{document}